\documentclass[11pt]{article}
\textwidth=15cm
\setlength{\textheight}{21.5cm}
\setlength{\oddsidemargin}{.5cm}
\setlength{\evensidemargin}{-.5cm}
\setlength{\topmargin}{-.5cm}
\setlength{\abovedisplayskip}{3mm}
\setlength{\belowdisplayskip}{3mm}
\setlength{\abovedisplayshortskip}{3mm}
\setlength{\belowdisplayshortskip}{3mm}
\raggedbottom
\parskip=1.5mm
\pagestyle{myheadings} \markboth{}{Estimation of the linear  fractional  stable motion}

\usepackage{amsmath}
\usepackage{amsfonts}
\usepackage{subcaption}
\usepackage{graphicx}
\usepackage{latexsym}
\usepackage{url}
\usepackage{color} 
\usepackage{dsfont}
\usepackage{amssymb}
\usepackage{amsthm}
\usepackage{comment}
\usepackage{enumerate}



\excludecomment{en-text}
\includecomment{jp-text}
\excludecomment{comment}

\newcommand{\R}{\mathbb{R}}

\newcommand{\E}{\mathbb{E}}
\newcommand{\PP}{\mathbb{P}}
\newcommand{\N}{\mathbb{N}}
\newcommand{\Z}{\mathbb{Z}}



\newcommand{\toop}{\stackrel{\PP}{\longrightarrow}}

\newcommand{\schw}{\stackrel{d}{\longrightarrow}}
\newcommand{\eqschw}{\stackrel{d}{=}}

\newcommand{\toas}{\stackrel{\mbox{\tiny a.s.}}{\longrightarrow}}

\newcommand{\bee}{\begin{equation}}
\newcommand{\eee}{\end{equation}}
\newcommand{\bea}{\begin{eqnarray}}
\newcommand{\eea}{\end{eqnarray}}
\newcommand{\bean}{\begin{eqnarray*}}
\newcommand{\eean}{\end{eqnarray*}}

\renewcommand{\theequation}{\arabic{section}.\arabic{equation}}

\newtheorem{prop}{Proposition}[section]

\newtheorem{lem}[prop]{Lemma}

\newtheorem{theo}[prop]{Theorem}
\newtheorem{rem}[prop]{Remark}


\begin{document}

\title{Estimation of the linear  fractional  stable motion}
\author{Stepan Mazur \thanks{School of Business, \"Orebro University,  Fakultetsgatan 1, 70281  \"Orebro, Sweden, Email: stepan.mazur@oru.se.} \and 
Dmitry Otryakhin \thanks{Department of Mathematics, Aarhus University,  Ny Munkegade 118, 8000 Aarhus,
Denmark, Email: d.otryakhin@math.au.dk.} \and 
Mark Podolskij  \thanks{Department of Mathematics, Aarhus University,  Ny Munkegade 118, 8000 Aarhus,
Denmark, Email: mpodolskij@math.au.dk.} 
}

\maketitle

\begin{abstract}
In this paper we investigate the parametric inference for the linear fractional stable motion in high and low frequency setting. The symmetric linear fractional stable motion is a three-parameter family, which  
constitutes a natural non-Gaussian analogue of the scaled fractional Brownian motion. It is fully characterised by the scaling parameter $\sigma>0$, the self-similarity parameter $H \in (0,1)$ and the stability index $\alpha \in (0,2)$ of the driving stable motion. The parametric estimation of the model
is inspired by the limit theory for stationary increments L\'evy moving average processes that has been recently studied in \cite{BLP}. More specifically, we combine (negative) power variation statistics and empirical characteristic functions to obtain consistent estimates of $(\sigma, \alpha, H)$. We present the law of large numbers and some fully feasible weak limit theorems.

\ \

{\it Keywords}: \
fractional processes, limit theorems, parametric estimation, stable motion.
\bigskip

{\it AMS 2010 subject classifications.}  ~62F12, ~62E20, ~62M09, ~60F05, ~60F18, ~60G22

\end{abstract}

\section{Introduction} \label{sec1}
\setcounter{equation}{0}
\renewcommand{\theequation}{\thesection.\arabic{equation}}

Since the pioneering work by Mandelbrot and van Ness \cite{MV} fractional Brownian motion (fBm) became one of the most prominent Gaussian processes in the probabilistic and statistical literature. 
As a building block in stochastic models it found various applications in natural and social sciences 
such as physics, biology or economics. Mathematically speaking, the scaled fBm is fully characterised by its scaling parameter $\sigma>0$ and Hurst parameter $H \in (0,1)$. More specifically, the scaled 
fBm $Z_t = \sigma B_t^H$ is a zero mean Gaussian process with covariance kernel determined by 
\begin{align*}
\E\left[B_t^H B_s^H \right] = \frac 12 \left( t^{2H} + s^{2H} -|t-s|^{2H}  \right), \qquad t,s \geq 0.
\end{align*}  
We recall that the (scaled) fBm with Hurst parameter $H\in (0,1)$ is the unique Gaussian process with stationary increments and self-similarity index $H$, i.e. it holds that $(a^H Z_t) _{t \geq 0} 
= ( Z_{at}) _{t \geq 0}$ in distribution for any $a>0$. Over the last forty years there has been a lot of progress in limit theorems and statistical inference for fBm's. The estimation of the Hurst parameter $H$
and/or the scaling parameter $\sigma$ has been investigated in numerous papers both in low and high frequency framework. We refer to \cite{D} for efficient estimation of the Hurst parameter $H$ in the 
low frequency setting and to \cite{BF,CI,IL} for the estimation of $(\sigma, H)$ in the high frequency setting, among many others. In the low frequency framework the spectral density methods are usually applied and the optimal convergence rate for the estimation of $(\sigma, H)$ is known to be $\sqrt{n}$. 
In the high frequency setting the estimation of the pair $(\sigma, H)$ typically relies upon power variations and related statistics, and the optimal convergence rate is known to be $(\sqrt{n}/ \log(n),\sqrt{n})$.  More recently, the class of multifractional Brownian motions, which accounts for time varying Hurst parameter, has been introduced in the literature (see e.g. \cite{ACL,PL,ST}). We refer to the work 
\cite{BS,LP} for estimation techniques for the regularity of a  multifractional Brownian motion.  

If we drop the Gaussianity assumption the class of stationary increments self-similar processes becomes much larger. This is a consequence of the work by Pipiras and Taqqu  \cite{PT}, which in turn applies the decomposition results from the seminal paper by Rosi\'nski \cite{R} (see also \cite{S}).  
The crucial  theorem proved in  \cite{R} shows that each stationary stable process can be uniquely decomposed (in distribution) into three independent parts: the mixed moving average process, the harmonizable process and the ``third kind'' process described by a conservative nonsingular flow.  
The most prominent example of a non-Gaussian stationary increments self-similar process is the 
linear fractional stable motion (an element of the first class), which has been introduced in \cite{CM}. 
It is defined as follows: On a filtered probability space 
$(\Omega, \mathcal F, (\mathcal{F}_t)_{t \in \R}, \mathbb P)$, we introduce the process
\begin{align} \label{fLm}
X_t = \int_{\R} \left\{(t-s)_+^{H-1/\alpha} - (-s)_+^{H-1/\alpha} \right\} dL_s, \qquad x_+:= \max\{x,0\},
\end{align}
where $L$ is a symmetric $\alpha$-stable L\'evy motion, $\alpha \in (0,2)$, with scale parameter 
$\sigma>0$ and $H \in (0,1)$ (here we use the convention $x_+^a=0$ for any $x\leq 0$ and 
$a \in \R$). In some sense the linear fractional stable motion is a non-Gaussian analogue of fBm. The process $(X_t)_{t \in \R}$ has symmetric $\alpha$-stable marginals, stationary increments and it is self-similar with parameter $H$. Fractional stable motions are often used in natural sciences, e.g. in physics or internet traffic, where the process under consideration exhibits stationarity and self-similarity 
along with heavy tailed marginals (see e.g. \cite{GLT} for the context of turbulence modelling).
The probabilistic properties of linear fractional stable motions, such as  integration concepts, path and variational properties, have been intensively studied in several papers, see for example \cite{BCI,BLS,BM} among many others. However, from 
the statistical point of view, very little is known about the inference for the parameter 
$\theta=(\sigma, \alpha, H) \in \R_+ \times (0,2) \times (0,1)$ in high or low frequency setting. The few
existing papers mostly concentrate on estimation of the self-similarity parameter $H$. The work \cite{AH,PTA} investigates the asymptotic theory for a wavelet-based estimator of $H$ when 
$\alpha \in (1,2)$. In \cite{BLP, SPT} the authors suggest to use power variation statistics to obtain an estimator of $H$, but this method also requires the a priori knowledge of the lower bound for the stability parameter $\alpha$. Recently, the work \cite{DI} suggested to use negative power variations 
to get a consistent estimator of $H$, which applies for any $\alpha \in (0,2)$, but  this article does not contain a central limit theorem for this method. Finally, in \cite{BLP, GLT} the authors propose to use an empirical scale function to estimate the pair $(\alpha, H)$. However, this approach only provides 
a $\log(n)$-consistent estimator without any hope for a central limit theorem. 

In this paper we will propose a new estimation procedure for the parameter 
$\theta=(\sigma, \alpha, H) $ in high and low frequency framework. Our methodology is based upon the use of power variation statistics, with possibly negative powers, and  
the empirical characteristic function. The probabilistic techniques  originate from the recent article 
\cite{BLP}, which has developed the asymptotic theory for power variations 
of higher order differences of
stationary increments L\'evy moving averages (see also \cite{PT2, PTA} for related asymptotic theory). 
However, we will need to derive  much more complex asymptotic results to obtain a complete distributional theory 
for the estimator of the parameter $\theta \in \R_+ \times (0,2) \times (0,1)$. We will obtain a fully feasible asymptotic theory for our estimator with convergence rates $(\sqrt{n}, \sqrt{n},\sqrt{n})$ 
in the low frequency setting and $(\sqrt{n}/\log(n), \sqrt{n}/\log(n),\sqrt{n})$ in the high frequency setting.

The paper is structured as follows. Section \ref{sec2} presents the basic properties of the linear fractional stable motion, the review of the probabilistic results from \cite{BLP} and a multivariate limit theorem, which plays a key role for the statistical estimation. Section \ref{sec3} is devoted to the statistical inference in the continuous case $H-1/\alpha>0$. The general case is treated in Section \ref{sec4}. Finally, Section \ref{sec5} demonstrates some simulation results.   All proofs are collected
in Section \ref{sec6}.

\section{First properties and some asymptotic results} \label{sec2}
\setcounter{equation}{0}
\renewcommand{\theequation}{\thesection.\arabic{equation}}

\subsection{Distributional and path properties} \label{sec2.1}
In this section we review some basic properties of the linear  fractional  stable motion. First of all, we
recall that the symmetric $\alpha$-stable process $(L_t)_{t \in \R}$ with scale parameter $\sigma>0$ is uniquely determined by the characteristic function of $L_1$, which is given by
\begin{align} \label{charL}
\E[\exp(itL_1)] = \exp(-\sigma^{\alpha} |t|^{\alpha}), \qquad t \in \R.
\end{align}
Following the theory of integration with respect to infinitely divisible processes investigated in \cite{RR},
we know that for any deterministic function $g : \R \to \R$
\begin{align*}
X= \int_{\R} g_s dL_s < \infty \quad \text{almost surely} \qquad 
\Leftrightarrow \qquad \| g \|_{\alpha}^{\alpha}:=\int_{\R} |g_s|^{\alpha} ds < \infty.
\end{align*}
Furthermore, if $\| g \|_{\alpha}< \infty$ then $X$ has a symmetric $\alpha$-stable distribution 
with scale parameter $\sigma\| g \|_{\alpha}$. In particular, setting 
\begin{align} \label{rep}
X_t = \int_{\R} g_t(s) dL_s, \qquad g_t(s):=\left\{(t-s)_+^{H-1/\alpha} - (-s)_+^{H-1/\alpha} \right\},
\end{align}
we see that $g_t \in L^{\alpha}(\R)$ for any $t \in \R$, since $|g_t(s)| \leq C_t |s|^{H-1-1/\alpha}$
when $s \to - \infty$ and $H\in (0,1)$. Hence, $X_t$ is well defined for any $t \in \R$ and all finite 
dimensional distributions of the linear  fractional  stable motion $(X_t)_{t \in \R}$ are symmetric 
$\alpha$-stable. It is easily seen that the linear fractional stable motion has stationary increments.

We recall that symmetric $\alpha$-stable random variables with $\alpha \in (0,2)$ do not exhibit finite
second moments, and hence their dependence structure can't be measured via the classical covariance kernel. Instead it is often useful to consider the following measure of dependence. Let 
$X= \int_{\R} g_s dL_s$ and $Y= \int_{\R} h_s dL_s$ with $\| g \|_{\alpha}, \| h\|_{\alpha}< \infty$. Then 
we introduce the measure of dependence $U_{g,h}: \R^2 \to \R$ via
\begin{align} \label{U}
U_{g,h} (u,v)&:= \E[\exp(i(uX+vY))] - \E[\exp(iuX)] \E[\exp(ivY)] \\[1.5 ex]
&= \exp(- \sigma^{\alpha} 
\| ug +vh \|_{\alpha}^{\alpha} ) - \exp(- \sigma^{\alpha} (\| ug \|_{\alpha}^{\alpha} + \| vh \|_{\alpha}^{\alpha})).\nonumber
\end{align}
The quantity $U_{g,h}$ is extremely useful when computing covariances $\text{cov}(K_1(X), K_2(Y))$
for functions $K_1, K_2 \in L^1(\R)$;
see for instance \cite{PTA}. Let $\mathfrak{F}$ denote the Fourier transform and let  $\mathfrak{F}^{-1}$ be its inverse. Furthermore, let $p_{(X,Y)}$, $p_{X}$ and $p_{Y}$ denote the density of $(X,Y)$,
$X$ and $Y$, respectively. We recall that these densities are not available in a closed form except in some special cases. 
Using the duality relationship we obtain the identity 
\begin{align}
\text{cov}(K_1(X), K_2(Y)) &= \int_{\R^2} K_1(x) K_2(y) \left(p_{(X,Y)} (x,y) - p_X(x) p_Y(y) \right) dxdy
\nonumber \\[1.5 ex]
\label{covFourier} &= \int_{\R^2} K_1(x) K_2(y) \mathfrak{F}^{-1} U_{g,h} (x,y) dxdy \\[1.5 ex]
&=  \int_{\R^2} \left(\mathfrak{F}^{-1}  K_1(x) \right) \left(\mathfrak{F}^{-1}  K_2(y) \right)  U_{g,h} (x,y) dxdy. \nonumber
\end{align}
We remark that the latter provides an explicit formula for computation of covariances  
$\text{cov}(K_1(X), K_2(Y))$. 

Finally, we recall that the path properties of a linear fractional stable motion strongly depend on the interplay between the parameters $H$ and $\alpha$. When $H-1/\alpha>0$ the process 
$(X_t)_{t \in \R}$ is H\"older continuous on compact intervals of any order smaller than $H-1/\alpha$;
we refer to \cite{BCI} for more details on this property. If  $H-1/\alpha<0$ the 
linear fractional stable motion explodes at jump times of the driving L\'evy process $L$; in particular, 
$X$ has unbounded paths on compact intervals. 
We demonstrate some sample paths of the linear fractional stable motions in Figure \ref{fig1}. 
In the critical case $H-1/\alpha=0$ we obviously have
the identity $X_t=L_t$. In this situation the parameter estimation  has been investigated in  \cite{AJ}.

\begin{figure}
\centering
                \includegraphics[width=1\textwidth]{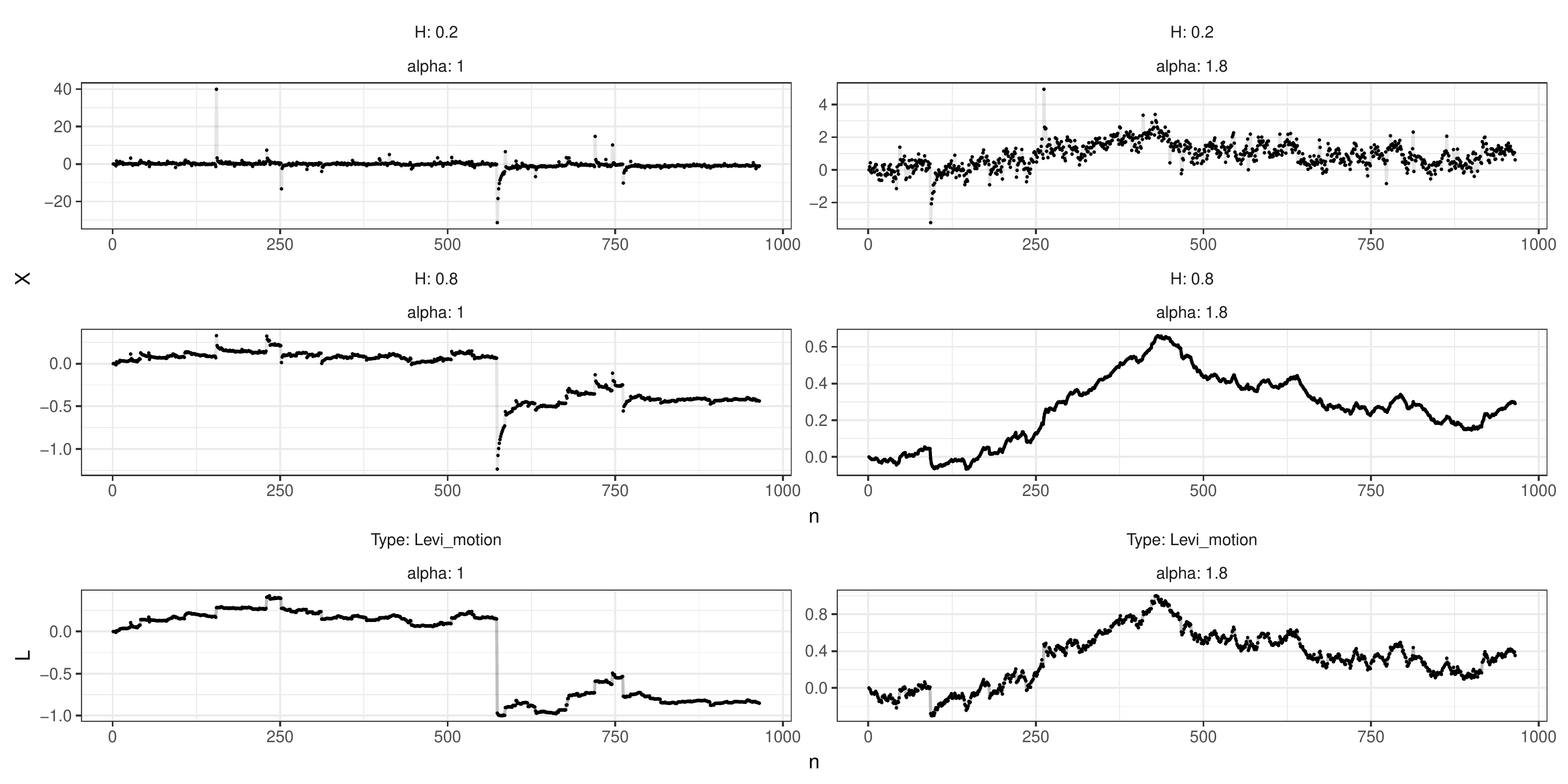}
\caption{\footnotesize Left (from bottom to top): The driving symmetric stable L\'evy process with $\alpha=1$, 
linear fractional stable motions with parameters $\alpha=1, H=0.8$ and $\alpha=1, H=0.2$.
Right (from bottom to top): The driving symmetric stable L\'evy process with $\alpha=1.8$, 
linear fractional stable motions with parameters $\alpha=1.8, H=0.8$ and $\alpha=1.8, H=0.2$.}
\label{fig1}
\end{figure}

\subsection{Review of the limit theory} \label{sec2.2}
In this section we review some probabilistic results, which will be relevant for our estimation method. Due to stationarity of the increments and self-similarity of the process $(X_t)_{t \in \R}$, we can
discuss the limit theory for the high and low frequency case simultaneously.  We start by introducing higher order increments of $X$.  We denote by $\Delta_{i,k}^{n,r} X$ ($i,k,r,n \in \N$)  the  $k$th order increment of $X$ at stage $i/n$ and frequency $r/n$, i.e. 
\begin{align} \label{filterhigh}
\Delta_{i,k}^{n,r} X:= \sum_{j=0}^k (-1)^j \binom{k}{j} X_{(i-rj)/n}, \qquad i\geq rk.
\end{align} 
Note that for $r=k=1$ we obtain the usual increments 
$\Delta_{i,1}^{n,1} X = X_{i/n}-X_{(i-1)/n}$. For the ease of notation we will often drop the index
$r$ (resp. $k$ and $n$) in $\Delta_{i,k}^{n,r} X$ and other quantities when $r=1$ (resp. $k=1$ and $n=1$). In particular, 
the low frequency $k$th order increments of $X$ are denoted by
\begin{align} \label{filterlow}
\Delta_{i,k}^{r} X:= \sum_{j=0}^k (-1)^j \binom{k}{j} X_{i-rj}, \qquad i\geq rk.
\end{align} 
According to the self-similarity of the process $(X_t)_{t \in \R}$ we readily have that 
$(n^H \Delta_{i,k}^{n,r} X)_{i \geq rk} \eqschw ( \Delta_{i,k}^{r} X)_{i \geq rk}$. Our main probabilistic 
tools will be statistics of the form
\begin{align}  \label{Vstat}
V_{\text{high}}(f; k,r)_n := \frac{1}{n} \sum_{i=rk}^n f\left( n^H \Delta_{i,k}^{n,r} X \right), 
\qquad V_{\text{low}}(f; k,r)_n := \frac{1}{n} \sum_{i=rk}^n f\left( \Delta_{i,k}^{r} X \right),
\end{align}
where $f: \R \to \R$ is a measurable function. It is well known that the process $(X_t)_{t \in \R}$ 
is mixing, see e.g. \cite{CHW}. Hence, Birkhoff's ergodic theorem implies the convergence
$V_{\text{low}}(f; k,r)_n \to \E[f(\Delta_{rk,k}^{r} X)]$ almost surely whenever 
$\E[|f(\Delta_{rk,k}^{r} X)|]< \infty$. The same result holds in probability for the statistic $V_{\text{high}}(f; k,r)_n$ due to self-similarity of the process $X$. 
However, the weak limit theorems associated with the aforementioned law of large numbers and 
the framework of functions $f$ with $\E[|f(\Delta_{rk,k}^{r} X)|]=\infty$ are not completely understood 
in the literature. To get an idea about possible limits that may appear we briefly demonstrate some recent theoretical developments from the paper \cite{BLP}, where the case $f_p(x)=|x|^p$ ($p>0$) has
been investigated. We remark that their results are obtained for a wider class of processes, namely
stationary increments L\'evy moving average processes, and we adapt them to the setting of linear
fractional stable motions. 

We need to introduce some more notation to describe the various limits. 
For $p \in (-1,1) \setminus \{0\}$ we define the constant
\begin{align} \label{ap}
a_p:= 
\begin{cases}
 \int_{\R} \left(1-\cos(y) \right) |y|^{-1-p} dy: & p \in (0,1) \\[1.5 ex]
 \sqrt{2 \pi} \Gamma(-p/2)/ 2^{p+1/2} \Gamma((p+1)/2): & p \in (-1,0)
\end{cases}
~,
\end{align}
where $\Gamma$ denotes the Gamma function. It is easy to see that $a_p>0$ is indeed finite in all relevant cases. 
For
any functions $g,h\in L^{\alpha}(\R)$, we introduce the notation 
\begin{align} \label{powercov}
\theta(g,h)_{p} = a_p^{-2} \int_{\R^2}  |xy|^{-1-p}U_{g,h}(x,y) dxdy,
\end{align} 
where $U_{g,h}$ is defined in \eqref{U}, whenever the above double integral is finite. 
 Furthermore,  for $k,r \in \N$, we define the function $h_{k,r}:\R \to \R$ by
\begin{align} \label{def-h}
h_{k,r}(x)=  \sum_{j=0}^k (-1)^j \binom{k}{j} (x-rj)_{+}^{H-1/\alpha},\qquad x\in \R.
\end{align}  
Below $(U_m)_{m\geq 1}$ is an i.i.d. $\mathcal{U}(0,1)$-distributed sequence of random variables
independent of $L$, $(T_m)_{m \geq 1}$ are jump times of $L$ and $\Delta L_{T_m}:= 
L_{T_m} -L_{T_m-}$ are jump sizes. The following result summarises the limit theory for the statistic
$V_{\text{high}}(f_p; k)_n$ (i.e. $r=1$) in the power variation setting. 

\begin{theo} \label{th1}
(\cite[Theorems 1.1 and 1.2]{BLP}) We consider the function $f_p(x)=|x|^p$ ($p>0$) and assume that
$H-1/\alpha>0$.  \\
(i) (First order asymptotics) If  $p>\alpha$ we obtain convergence in law 
\[
n^{1 -p/\alpha} V_{\text{high}}(f_p; k)_n \schw 
\sum_{m:\, T_m\in [0,1]} |\Delta L_{T_m}|^p \left( \sum_{l=0}^{\infty} |h_{k}(l+U_m)|^p \right).
\]
If $p<\alpha$ we deduce the  law of large numbers
\[
V_{\text{high}}(f_p; k)_n \toop m_{p,k}:= \E[|\Delta_{k,k} X|^p]. 
\]
(ii) (Second order asymptotics) Assume that $p< \alpha/2$. If $H< k-1/\alpha$ we obtain 
the central limit theorem
\[
\sqrt{n} \left( V_{\text{high}}(f_p; k)_n - m_{p,k} \right) \schw \mathcal N(0, \eta^2), \qquad
 \eta^2 = \theta(h_{k}, h_{k})_p +2 \sum_{j=1}^{\infty} 
\theta(h_{k}, h_{k} (\cdot+j))_p,
\]
where the quantity $\theta(g,h)$ has been introduced at \eqref{powercov}.
If $H >k-1/\alpha$ we deduce a non-central limit theorem 
\[
n^{1-1/(1+ \alpha(k-H))}\left(V_{\text{high}}(f_p; k)_n - m_{p,k} \right) \schw S,
\]
where $S$ is a totally right skewed  $(1+ \alpha(k-H))$-stable random variable with mean zero and scale parameter  $\widetilde{\sigma}$, which is defined in \cite[Theorem 1.2]{BLP}. 
\end{theo}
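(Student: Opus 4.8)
The plan is to transfer the whole problem to a single stationary symmetric $\alpha$-stable sequence and then to separate the four regimes according to the position of $p$ relative to $\alpha$ and $\alpha/2$, and of $H$ relative to $k-1/\alpha$. First I would use the representation \eqref{rep}: since $\sum_{j=0}^k(-1)^j\binom{k}{j}=0$ for $k\ge 1$, the term $(-s)_+^{H-1/\alpha}$ of $g_t$ cancels inside the increment, and the substitution $s\mapsto s/n$ together with the $1/\alpha$-self-similarity of the stable L\'evy motion $L$ yields the identity in law of entire sequences
\[
\big(n^H\Delta_{i,k}^{n,1}X\big)_{i\ge k}\eqschw (Y_i)_{i\ge k},\qquad Y_i:=\int_\R h_k(i-s)\,dL_s,
\]
where $h_k=h_{k,1}$ is the kernel \eqref{def-h}. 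Expanding the finite difference shows that $h_k$ is supported on $[0,\infty)$ with $|h_k(x)|\le C(1+x)^{H-1/\alpha-k}$, so that $h_k\in L^\alpha(\R)$ and the tail exponent $H-1/\alpha-k$ controls the entire dependence structure. In particular $m_{p,k}=\E|Y_0|^p$.

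For the law of large numbers one only needs $m_{p,k}<\infty$, which holds exactly when $p<\alpha$; Birkhoff's theorem for the mixing sequence $(Y_i)$, as invoked before the theorem, then gives $V_{\text{high}}(f_p;k)_n\toop m_{p,k}$. For $p>\alpha$ the $p$-th moment is infinite and the statistic is carried by the largest jumps of $L$. Writing $n^{1-p/\alpha}V_{\text{high}}(f_p;k)_n\eqschw n^{-p/\alpha}\sum_{i=k}^n|Y_i|^p$ and rescaling the jumps of $L$ on $[0,n]$ by $n^{-1}$ in time and $n^{-1/\alpha}$ in size (which by self-similarity reproduces the jumps of $L$ on $[0,1]$), I would show that the dominant contribution comes from a single jump $\Delta L_{T_m}$ spread across the kernel: the indices $i$ clustering around $nT_m$ contribute $|\Delta L_{T_m}|^p\sum_{l\ge0}|h_k(l+U_m)|^p$, where $U_m\sim\mathcal U(0,1)$ records the sub-grid position of the jump. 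The negligibility of the remaining small jumps is where $p>\alpha$ is used, and the Poisson structure of the jumps delivers the announced limit.

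In part (ii) the standing assumption $p<\alpha/2$ guarantees $\E|Y_0|^{2p}<\infty$, so the stationary sequence $(|Y_i|^p)$ has finite variance. The heart of the matter is the variance computation through \eqref{covFourier}. Since $p<\alpha/2<1$, the representation $|x|^p=a_p^{-1}\int_\R(1-\cos(xy))|y|^{-1-p}\,dy$, consistent with \eqref{ap}, expresses $f_p$ in terms of cosines, so that \eqref{covFourier} together with the explicit form \eqref{U} of $U_{g,h}$ gives
\[
\mathrm{cov}\big(f_p(Y_0),f_p(Y_j)\big)=\theta\big(h_k,h_k(\cdot+j)\big)_p,
\]
with $\theta$ as in \eqref{powercov}. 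Under $H<k-1/\alpha$ the correlations decay fast enough that $\sum_j|\theta(h_k,h_k(\cdot+j))_p|<\infty$, so $\eta^2$ is finite. The central limit theorem itself I would obtain by an $m$-dependent approximation: truncating $h_k$ to $[0,m]$ turns $(f_p(Y_i))$ into a stationary $m$-dependent sequence to which the Hoeffding--Robbins theorem applies, and the truncation error is controlled in $L^2$ by the same bounds on $U_{g,h}$, finally letting $m\to\infty$.

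When $H>k-1/\alpha$ the correlations are no longer summable: the slow kernel decay creates long-range dependence, the series defining $\eta^2$ diverges, and a different normalization with a non-Gaussian limit is forced. Here I would again isolate the contribution of the jumps of $L$, but now track their aggregated long-range effect on the centred sum $\sum_{i=k}^n(f_p(Y_i)-m_{p,k})$; under the scaling $n^{1/(1+\alpha(k-H))}$ this should converge, via a stable functional limit theorem built on the Poisson structure of the jumps, to the totally right-skewed $(1+\alpha(k-H))$-stable variable $S$, the skewness reflecting the sign-definite nature of the dominant contribution and the index degenerating to $2$ exactly at the boundary $H=k-1/\alpha$. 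The two cases of part (ii) are where the real work lies. In the central case the delicate point is that $U_{g,h}(x,y)$, weighted by the singular factor $|xy|^{-1-p}$, is integrable near the origin and summable in $j$: near the corner $x,y\to0$ one has $U_{g,h}(x,y)=O(|(x,y)|^\alpha)$, so integrability against $|xy|^{-1-p}$ holds precisely when $p<\alpha/2$, which forces careful use of \eqref{U} and of bounds on stable densities exactly because $2p$ may approach $\alpha$. In the non-central case the genuine obstacle is proving the stable convergence and identifying the scale $\widetilde\sigma$ and the skewness of $S$, which requires a bespoke functional limit argument rather than a moment computation.
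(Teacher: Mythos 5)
First, a framing point: the paper does not prove Theorem \ref{th1} at all — it is imported verbatim from \cite[Theorems 1.1 and 1.2]{BLP} — so there is no internal proof to compare against line by line. The closest internal material is Section \ref{sec6}, where the same machinery is redeveloped for Theorem \ref{multivariateclt}, and your sketch does follow that machinery in its broad lines: the reduction by self-similarity to the stationary moving average $Y_i=\int_\R h_k(i-s)\,dL_s$, Birkhoff's theorem for the $p<\alpha$ law of large numbers, the covariance identity $\mathrm{cov}(f_p(Y_0),f_p(Y_j))=\theta(h_k,h_k(\cdot+j))_p$ obtained from \eqref{xp} and \eqref{covFourier}, and the short-memory ($2m$-dependent) approximation \eqref{incapp} with an $m$-dependent CLT and an $L^2$-control of the truncation error \eqref{bill1} are exactly the ingredients used there.

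Two gaps deserve flagging. (a) Your key integrability claim in the CLT regime rests on a bound that is too weak: you assert $U_{g,h}(x,y)=O(|(x,y)|^\alpha)$ near the origin and conclude integrability against $|xy|^{-1-p}$ for $p<\alpha/2$. If $|(x,y)|$ denotes a norm, this does not suffice: $\int_{(0,1)^2}\left(|x|^\alpha+|y|^\alpha\right)|xy|^{-1-p}\,dx\,dy$ diverges for every $p>0$, because the factor $|y|^{-1-p}$ alone is non-integrable at $0$. What is actually needed — and what the paper records as Lemma \ref{lem3}, quoting the inequalities of \cite{PTA} — is the product-form bound $|U_{g,h}(u,v)|\le 2|uv|^{\alpha/2}\int_0^\infty|g(x)h(x)|^{\alpha/2}\,dx$, which simultaneously gives corner integrability (since $\alpha/2-1-p>-1$) and, through the factor $\rho_l$ of \eqref{rho} and Lemma \ref{lem2}, the summability in $j$ under $H<k-1/\alpha$; the exponential factors in Lemma \ref{lem3} are then needed to handle the integral away from the origin. (b) In the non-central regime your plan is a ``bespoke stable functional limit theorem built on the Poisson structure of the jumps,'' which you concede is left open. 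The actual argument (in \cite{BLP}, mirrored in Section \ref{sec6.2.1b}) is more elementary and more concrete: one decomposes the centred sum into remainder terms $R_i^n$, $Q_i^n$ that vanish in probability plus an i.i.d. sum $\sum_i\{\overline{\Phi}(f_p)(L_i-L_{i-1})-\E[\overline{\Phi}(f_p)(L_i-L_{i-1})]\}$; the tail asymptotics \eqref{cj1}, $\overline{\Phi}(f_p)(x)\sim c\,x^{1/(k+1/\alpha-H)}$ as $x\to\infty$, combined with the $\alpha$-tail of $L_1$, place these i.i.d. summands in the domain of attraction of a $(1+\alpha(k-H))$-stable law; nonnegativity of $\Phi^{(1)}$ explains the total right-skewness and identifies the scale $\widetilde{\sigma}$. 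This is a classical one-dimensional i.i.d. domain-of-attraction argument, not a functional or Poisson-process one; without this reduction your sketch leaves precisely the hardest part of part (ii) — identification of the limit law and of $\widetilde{\sigma}$ — unproved.
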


We remark that the results of Theorem \ref{th1} remain valid for the low frequency statistic 
$V_{\text{low}}(f_p; k)_n$ due to self-similarity property of $L$. Apart from various critical cases Theorem \ref{th1} gives a rather complete understanding of the asymptotic behaviour of the power
variation $V_{\text{high}}(f_p; k)_n$ in the setting $H-1/\alpha>0$. The strong law of large numbers
in Theorem \ref{th1}(i) will be useful for estimation of the parameter $H$. However, without an a priori 
knowledge about the stability parameter $\alpha$, we can't insure that the condition $p<\alpha$ holds.
Similarly, we would like to use the central limit theorem in  Theorem \ref{th1}(ii) whose convergence
rate $\sqrt{n}$ is faster than the rate $n^{1-1/(1+ \alpha(k-H))}$ in the non-central limit theorem. 
But the conditions of  Theorem \ref{th1}(ii) rely again on an a priori knowledge about $\alpha$. 

There are some few related results in the literature. In \cite{PT2} the authors have shown a central limit
theorem a standardised version of the statistic $\sum_{i=1}^n f(Y_i)$, where $f$ is a \textit{bounded}
function and $(Y_t)_{t \in \R}$ is a stable moving average process. In a later work \cite{PTA} the result
has been extended to a certain class of unbounded functions $f$ under the additional assumption 
that $\alpha \in (1,2)$. Similarly to Theorem \ref{th1} the sufficient conditions for the validity of the 
central limit theorems in \cite{PT2,PTA} depend on the interplay between the kernel function of the stable moving average process and the stability index $\alpha$.   We remark that extensions of these results in various directions will be necessary to obtain the full asymptotic theory for estimators of the
parameter  $\theta=(\sigma, \alpha, H) $.

\subsection{A multivariate weak limit theorem}

Although Theorem \ref{th1}(ii) gives a rather complete picture of the weak limit theory in the power variation case, we will require a much stronger result for our statistical applications. We introduce the function $\psi_t : \R \to \R$ with $\psi_t(x)= \cos(tx)$ and define the statistics 
\begin{align} \label{ecf}
\varphi_{\text{high}}(t; H,k)_n := V_{\text{high}}(\psi_t; k)_n \qquad \text{and}
\qquad \varphi_{\text{low}}(t; k)_n := V_{\text{low}}(\psi_t; k)_n,
\end{align}
which correspond to $r=1$. Notice that, in contrast to $\varphi_{\text{low}}(t; k)_n$, the high frequency statistic $\varphi_{\text{high}}(t; H,k)_n$ depends on the unknown self-similarity parameter $H$. In fact, this is the major difference between the high and low frequency settings, which will result in different rates of convergence later on. Applying again the strong law 
of large numbers  we readily obtain the strong consistency 
\begin{align} \label{ecfconv}
\varphi_{\text{low}}(t; k)_n \toas \varphi(t; k):= \exp \left( -|\sigma \|h_{k}\|_{\alpha} t|^ \alpha \right).
\end{align}
Clearly,  the same result holds in probability for the high frequency statistic 
$\varphi_{\text{high}}(t; H,k)_n$. Next, we introduce various types of statistics, which will play a major role in estimation of the unknown parameter $\theta$. More specifically, we will extend the definition of power variation to certain negative powers  and prove a multivariate limit theorem for power variations and empirical characteristic functions. 
We fix $d \in \N$ and define the statistics for any $1 \leq j \leq d$, $r_j \in \{1,2\}$, $p \in (-1/2, 1/2) \setminus \{0\}$ and $t_j>0$:
\begin{align} \label{multstat}
&\left.
\begin{array}{c}
W(n)^{(1)}_j := \sqrt{n} \left( V_{\text{low}} (f_{p}; k_j,r_j) - r_j^H m_{p,k_j}\right) \\[1.5 ex]
W(n)^{(2)}_j := \sqrt{n} \left( V_{\text{low}}(\psi_{t_j}; k_j)_n - 
\varphi(t_j; k_j) \right) 
\end{array}
\right \} \qquad \text{when } k_j> H+ 1/\alpha \\[1.5 ex]
&\left.
\begin{array}{c}
S(n)^{(1)}_j := n^{1-1/(1+ \alpha(k-H))}\left( V_{\text{low}} (f_{p}; k,r_j) - r_j^H m_{p,k}\right) \\[1.5 ex]
S(n)^{(2)}_j := n^{1-1/(1+ \alpha(k-H))} \left( V_{\text{low}}(\psi_{t_j}; k)_n - 
\varphi(t_j; k) \right)
\end{array}
\right \} \qquad \text{when } k< H+ 1/\alpha \nonumber 
\end{align} 
Note the identity $\E[|\Delta_{rk,k}^r X|^p]=r^{H}m_{p,k}$, which explains the centring of the statistics  
$W(n)^{(1)}$ and $S(n)^{(1)}$.
We remark that the functionals $W(n)^{(1)}$ and $W(n)^{(2)}$ are in the domain of attraction of the normal distribution (under appropriate assumption on the powers $p$) while the functionals $S(n)^{(1)}$ and $S(n)^{(2)}$ are in the domain of attraction of the $(1+\alpha(k-H))$-stable distribution. The latter fact is rather surprising since the statistic $S(n)^{(2)}_j$ exhibits finite moments of any order. 

Before we proceed with the main result of this section we need to introduce some more notation. In the first step, for any $x \in \R$, we define the functions
\begin{align} \label{Phi12}
\Phi_{j}^{(1)}(x) &= \E[f_p(\Delta_{r_jk,k}^{r_j} X + x)] - \E[f_p(\Delta_{r_jk,k}^{r_j} X )], \\[1.5 ex]
\Phi_{j}^{(2)}(x) &= \E[\psi_{t_j}(\Delta_{k,k} X + x)] - \E[\psi_{t_j}(\Delta_{k,k} X )]. \nonumber
\end{align}
Since the functions $f_p$ and $\psi_t$ are even we readily obtain that $\Phi_{j}^{(l)}(0)= 
\nabla \Phi_{j}^{(l)}(0)=0$ for all $l,j$. Thus, using Lemma \ref{lemPhi}, we deduce the growth estimates
\begin{align} \label{growth}
|\Phi_{j}^{(1)}(x)| \leq C\left(x^2 \wedge  |x|^{\max\{p,0\}}  \right), \quad 
|\Phi_{j}^{(2)}(x)| \leq C\left(x^2 \wedge  1\right),
\end{align} 
for some positive constant $C$. 
Next, we introduce the functions
\begin{align} \label{barPhi12}
\overline{\Phi}_j^{(1)} (x) = \sum_{i=1}^{\infty} \Phi_{j}^{(1)}\left(h_{k,r_j}(i)x\right), \qquad 
\overline{\Phi}_j^{(2)} (x) = \sum_{i=1}^{\infty} \Phi_{j}^{(2)}\left(h_{k}(i)x \right).
\end{align} 
Note that these functions are indeed finite due to \eqref{growth} and
the estimate $|h_{k,r}(x)|\leq C |x|^{H-1/\alpha-k}$ for large $x$. Finally,
we set $\overline{\Phi}= (\overline{\Phi}^{(1)}, \overline{\Phi}^{(2)}) =(\Phi_{1}^{(1)}, \ldots, \Phi_{d}^{(1)}, \Phi_{1}^{(2)}, \ldots, \Phi_{d}^{(2)})$.
The main probabilistic result of this paper is the following theorem.

\begin{theo} \label{multivariateclt}
Assume that either $p \in (-1/2,0)$ or $p \in (0,1/2)$ and $p< \alpha/2$. Set
$W(n)^{(i)} =(W(n)^{(i)}_1 , \ldots, W(n)^{(i)}_d)$ and 
$S(n)^{(i)} =(S(n)^{(i)}_1 , \ldots, S(n)^{(i)}_d)$ for $i=1,2$. Then we obtain weak convergence in law
on $\R^{4d}$:
\begin{align} \label{mclt}
\left(W(n)^{(1)}, W(n)^{(2)}, S(n)^{(1)}, S(n)^{(2)} \right) \schw 
\left(W^{(1)}, W^{(2)}, S^{(1)}, S^{(2)} \right),  
\end{align}
where $W=(W^{(1)}, W^{(2)})$ and $S=(S^{(1)}, S^{(2)})$ are independent, $W$ is a centred $2d$-dimensional normal distribution with covariance matrix determined by
\begin{align*}
\text{\rm cov}\left(W^{(i)}_j, W^{(i')}_{j'} \right) = \lim_{n \to \infty} 
\text{\rm cov}\left(W(n)^{(i)}_j, W(n)^{(i')}_{j'} \right) \qquad 1 \leq j,j'\leq d, ~ i,i'=1,2, 
\end{align*}
and $S^{(1)}$, $S^{(2)}$ are independent $d$-dimensional $(1+\alpha(k-H))$-stable random variables.
The law of $S^{(1)}$ (resp. $S^{(2)}$) is determined by the L\'evy measure $\nu_1$ (resp. $\nu_2$) whose support is the cone $(\R_+)^{d}$ (resp. $(\R_-)^{d}$). More specifically, for any Borel sets $A_1 \in
(\R_+)^{d}$, $A_2 \in
(\R_-)^{d}$ bounded away from $0$ the quantities $\nu_1(A_1), \nu_2(A_2)$ are determined by the identity
\begin{align} \label{nu}
\nu_l (A_l) = \lim_{n \to \infty} n \mathbb P\left(n^{-1/(1+ \alpha(k-H))} \overline{\Phi}^{(l)} (L_1) \in A_l \right) , \qquad l=1,2.
\end{align}
\end{theo}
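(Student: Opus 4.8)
The plan is to prove the joint convergence \eqref{mclt} by a big-jump/small-jump decomposition of the driving motion $L$, identifying the Gaussian vector $W$ with the fluctuations generated by the small jumps and the stable vector $S$ with the contribution of the large jumps, and then to read off the limit law from the characteristic function of an arbitrary linear combination of the $4d$ coordinates (a Cram\'er--Wold reduction that simultaneously captures the independence structure). Since all statistics in \eqref{multstat} are built from the stationary mixing sequence $\Delta_{i,k}^{r}X=\int_{\R}h_{k,r}(i-s)\,dL_s$, the governing feature is the tail behaviour of the random variable $\overline{\Phi}^{(l)}(L_1)$ from \eqref{barPhi12}: using $|h_{k,r}(x)|\le C|x|^{H-1/\alpha-k}$ for large $x$ together with the $\alpha$-stable tail of $L_1$, one checks that $\overline{\Phi}^{(l)}(L_1)$ is regularly varying with index $1+\alpha(k-H)$. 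This index exceeds $2$ exactly when $k>H+1/\alpha$ and lies in $(0,2)$ exactly when $k<H+1/\alpha$, which is precisely the dichotomy built into the definitions of $W(n)^{(\cdot)}$ and $S(n)^{(\cdot)}$ and explains why the first block is Gaussian at rate $\sqrt{n}$ while the second is stable at the slower rate $n^{1-1/(1+\alpha(k-H))}$.

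For the Gaussian block I would fix a truncation level $\varepsilon>0$, use the L\'evy--It\^o representation to split $L=L^{\varepsilon}+L'^{\varepsilon}$ into the compound Poisson process $L^{\varepsilon}$ of jumps larger than $\varepsilon$ and the independent remainder $L'^{\varepsilon}$, and show that, in the regime $k_j>H+1/\alpha$, the centred statistics $W(n)^{(1)}_j$ and $W(n)^{(2)}_j$ are, up to an $L^2$-negligible error (uniformly as $\varepsilon\to0$), sums of a stationary sequence of mean-zero functionals with finite variance. A central limit theorem for mixing sequences then applies, the long-range dependence being controlled by the polynomial decay of $h_{k,r}$. The limiting covariance matrix is the limit of the finite-$n$ covariances $\text{cov}(W(n)^{(i)}_j,W(n)^{(i')}_{j'})$, which is finite under the standing assumption on $p$ and can be expressed through the dependence measure $U_{g,h}$ of \eqref{U} via the duality identity \eqref{covFourier}, in the spirit of the quantities $\theta(g,h)_p$ in \eqref{powercov}.

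For the stable block I would run the single-big-jump analysis of \cite{BLP}: in the regime $k<H+1/\alpha$ the dominant contribution to the normalized centred statistics $S(n)^{(l)}_j$ comes from those increments that feel one large jump $\Delta L_{T_m}$ of $L^{\varepsilon}$, and the aggregate effect of such a jump on $\sum_i f_p(\cdot)$ (resp. $\sum_i\psi_{t_j}(\cdot)$), after averaging out the background and shifting the jump location by the uniform variable $U_m$, is exactly $\overline{\Phi}^{(1)}(\Delta L_{T_m})$ (resp. $\overline{\Phi}^{(2)}(\Delta L_{T_m})$). Since the pairs $(T_m,\Delta L_{T_m})$ form a Poisson point process and $\overline{\Phi}^{(l)}(L_1)$ has index $1+\alpha(k-H)\in(0,2)$, the classical convergence criterion for Poisson-type triangular arrays yields convergence of $n^{-1/(1+\alpha(k-H))}\sum_{m:\,T_m\in[0,n]}\overline{\Phi}^{(l)}(\Delta L_{T_m})$ to a $(1+\alpha(k-H))$-stable law whose L\'evy measure is given by \eqref{nu}. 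The one-sided support of $\nu_1$ and $\nu_2$ reflects the definite sign of $\overline{\Phi}^{(l)}$ at large arguments recorded through \eqref{growth}; for the cosine functional this is transparent, since $\psi_{t}-1\le0$ forces $\overline{\Phi}^{(2)}\le0$ and hence the cone $(\R_-)^{d}$.

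The hard part is the joint statement, and in particular the two independence assertions. The independence of $W$ and $S$ I would obtain by a nested-limit argument: for fixed $\varepsilon$ let $n\to\infty$, so that the stable contribution is a functional of $L^{\varepsilon}$ while the Gaussian fluctuation is a functional of the independent part $L'^{\varepsilon}$; since $L^{\varepsilon}\perp L'^{\varepsilon}$ the joint characteristic function factorizes, and letting $\varepsilon\to0$ transfers this factorization to the limit once the two approximations are shown to converge to $S$ and $W$ with errors vanishing uniformly in the relevant variables. The genuinely delicate point — where I expect the main obstacle — is the identification of the \emph{joint} stable law of $(S^{(1)},S^{(2)})$: one must describe how the $2d$-dimensional large-jump contribution $(\overline{\Phi}^{(1)}(\Delta L_{T_m}),\overline{\Phi}^{(2)}(\Delta L_{T_m}))$ is distributed across the coordinates and verify that the resulting L\'evy measure on $\R^{2d}$ factorizes into $\nu_1$ on $(\R_+)^{d}$ and $\nu_2$ on $(\R_-)^{d}$, so that $S^{(1)}$ and $S^{(2)}$ come out independent with the stated supports. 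This hinges on a uniform-in-jump-size control of the approximation of $\sum_i$ by $\overline{\Phi}^{(l)}$, and on showing that increments influenced by two or more large jumps are negligible at the scale $n^{1/(1+\alpha(k-H))}$.
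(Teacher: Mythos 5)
Your proposal has the right raw ingredients (the function $\overline{\Phi}^{(l)}$, the tail index $1+\alpha(k-H)$, the dichotomy at $k=H+1/\alpha$), but it breaks down at the decisive step, namely the independence of $W$ and $S$. You propose to get it from the independence of $L^{\varepsilon}$ (jumps larger than $\varepsilon$) and $L'^{\varepsilon}$, claiming that for fixed $\varepsilon$ the stable contribution is a functional of $L^{\varepsilon}$ while ``the Gaussian fluctuation is a functional of the independent part $L'^{\varepsilon}$''. The second half of this claim is false. In the regime $k_j>H+1/\alpha$ the per-jump contribution $\overline{\Phi}^{(l)}(\Delta L_{T_m})$ has tail index larger than $2$, hence finite variance, so the jumps of size larger than $\varepsilon$ generate a non-degenerate Gaussian fluctuation of their own inside $W(n)^{(i)}_j$; since $f_p$ and $\psi_t$ are nonlinear, $W(n)$ does not even split additively into an $L^{\varepsilon}$-part and an $L'^{\varepsilon}$-part. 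Worse, your limit ordering ($n\to\infty$ then $\varepsilon\to 0$) attributes $W$ to the small-jump process $L'^{\varepsilon}$, which vanishes as $\varepsilon\to 0$, while the true limiting covariance of $W$ is computed from the full $\alpha$-stable law; so the approximation error neither vanishes for fixed $\varepsilon$ nor ``uniformly as $\varepsilon\to0$'', the characteristic function does not factorize, and the nested-limit argument collapses precisely at the point it was designed to handle. Asymptotic independence of a Gaussian-limit block and a stable-limit block built from the \emph{same} randomness is a genuine theorem, not a consequence of decoupling.

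The paper's route is different and avoids all of this. Both blocks are first reduced to (approximately) i.i.d.\ sums indexed by the unit increments $L_i-L_{i-1}$: the Gaussian block via the short-memory approximation $\Delta_{i,k}^{r}X(m)$ (truncation of the kernel, giving $2m$-dependence, then Bernstein blocking), with the $\mathbb{L}^2$-errors \eqref{bill1}--\eqref{bill2} imported from \cite{BLP} and \cite{PT2}, plus the extra truncation $f_{-p}^{\epsilon}$ for negative powers; the stable block via the conditional-expectation decomposition \eqref{barSf} into $\sum_i R_i^n$, $\sum_i Q_i^n-\overline{S}(f)_n$ and the i.i.d.\ sum $\overline{S}(f)_n$ of $\overline{\Phi}(f)(L_i-L_{i-1})$, whose remainders are shown negligible for $f_{-p}$ and $\psi_t$ through Lemma \ref{lemPhi}. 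This is \emph{not} the single-big-jump/Poisson analysis you invoke: that is the method behind \cite[Theorem 1.1]{BLP} (the first-order result for $p>\alpha$), whereas the second-order theory used here rests on the $R_i^n,Q_i^n$ expansion. Once everything is an i.i.d.\ (or $2m$-dependent) sum, the joint convergence \eqref{mclt} and the independence of $W$ and $S$ follow at once from the Resnick--Greenwood theorem \cite{RG}, because the two blocks have different stability indices; the independence and one-sided supports of $S^{(1)}$ and $S^{(2)}$ — the point you correctly flag as the main obstacle but leave open — are settled inside the same framework using $\overline{\Phi}^{(1)}\geq 0$, $\overline{\Phi}^{(2)}\leq 0$ (i.e.\ $b_j^-=0$ for the first block and $b_j^+=0$ for the second). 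Finally, your appeal to ``a central limit theorem for mixing sequences'' is not available as stated: the increment sequence is mixing in the ergodic-theoretic sense, but no mixing \emph{rates} are known, which is exactly why the paper (following \cite{PT2,BLP}) works with $m$-dependent approximations instead.
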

The probabilistic result of Theorem \ref{multivariateclt} is new in the literature; neither the negative power variations nor the (real part of) empirical characteristic function have been studied from the distributional perspective. We remark that the statistics $W(n)^{(1)}$ and $S(n)^{(1)}$ use the same  powers $p$ while 
the quantities  $S(n)^{(1)}$ and $S(n)^{(2)}$ are based on the same order of increments $k$. The result of Theorem \ref{multivariateclt} does not really use these particular restrictions, but its statement is sufficient for the statistical application under investigation. 

There exists an explicit expression for the covariance matrix of the limit $W$. We obtain the following representations:
\begin{align} \label{asycov}
\text{\rm cov}\left(W^{(1)}_j, W^{(1)}_{j'} \right) &= \sum_{l \in \Z} 
 \theta(h_{k_{j},r_j}, h_{{k_{j'}},r_{j'}}(\cdot+l))_{p}, \\
\text{\rm cov}\left(W^{(2)}_j, W^{(2)}_{j'} \right)&= \frac 12 \sum_{l \in \Z} \left(U_{h_{k_j},h_{k_{j'}}(\cdot+l)} (t_{j},t_{j'})
+ U_{h_{k_j},-h_{k_{j'}}(\cdot+l)} (t_{j},t_{j'}) \right) ,  \nonumber \\
\text{\rm cov}\left(W^{(1)}_j, W^{(2)}_{j'} \right) &= \sum_{l \in \Z} \overline{\theta} (l)_{jj'},  \nonumber
\end{align}
with 
\begin{align*}
\overline{\theta} (l)_{jj'}= -a_{p}^{-1} \int_{\R} |y|^{-1-p} 
U_{h_{k_j,r_j},h_{k_{j'}}(\cdot+l)} (y,t_{j'})
dy.
\end{align*}
We will prove that $\text{\rm cov}(W)<\infty$ in all relevant cases 
 and the mapping $(\sigma, \alpha, H) \mapsto \text{\rm cov}(W)$ is continuous (see Section \ref{sec6.1}). In principle, the latter allows us to estimate  the covariance matrix $\text{\rm cov}(W)<\infty$ and thus to obtain a feasible version 
of the central limit theorem in Theorem \ref{multivariateclt}, although we will use a different approach in the simulation study. 

Similarly, the L\'evy measures $\nu_l$  ($l=1,2$)  can be determined explicitly. First of all, the representation \eqref{xp} from Section \ref{sec6.1} implies the identities
\begin{align*} 
\Phi_{j}^{(1)}(x) &= a_p^{-1} \int_{\R} \left(1- \cos(ux) \right)
 \exp\left( -|\sigma  \|h_{k,r_j}\|_{\alpha} u|^{\alpha}\right) |u|^{-1-p} du, \\[1.5 ex]
\Phi_{j}^{(2)}(x) &= (\cos(t_j x) - 1) \exp\left( -|\sigma  \|h_{k}\|_{\alpha} t_j|^{\alpha} \right). 
\end{align*}
In particular, it holds that $\Phi_{j}^{(1)}(x) \geq 0$ and $\Phi_{j}^{(2)}(x) \leq 0$. In the next step we need to determine the asymptotic behaviour of $\overline{\Phi}_j^{(1)} (x)$  
(resp. $\overline{\Phi}_j^{(2)} (x)$) as $x \to \infty$ (resp. as $x \to -\infty$). By the substitution 
$u=(x/z)^{1/(k+1/\alpha - H)}$ we have that 
\begin{align} \label{cj1}
&x^{1/(H-k-1/\alpha)} \overline{\Phi}_j^{(1)} (x)=  x^{1/(H-k-1/\alpha)}   \int_0^\infty 
\Phi_j^{(1)} \left(h_{k,r_j} (\lfloor u \rfloor +1)x \right) du \nonumber \\[1.5 ex] 
&=   (k+1/\alpha - H)^{-1}\int_0^\infty 
\Phi_j^{(1)} \left(h_{k,r_j} (\lfloor (x/z)^{1/(k+1/\alpha - H)} \rfloor +1)x \right) 
z^{-1+1/(H-k-1/\alpha)}dz  \nonumber \\[1.5 ex] 
&\to c_{j}^{(1)}:=  (k+1/\alpha - H)^{-1} \int_0^\infty 
\Phi_j^{(1)} \left(r_j^k \prod_{i=0}^{k-1} (H-1/\alpha -i) \cdot z \right) z^{-1+1/(H-k-1/\alpha)}dz
\end{align}
as $x \to \infty$. The convergence at \eqref{cj1} follows from the asymptotic behaviour $h_{k,r_j}(x)
\sim r_j^k \prod_{i=0}^{k-1} (H-1/\alpha -i) \cdot x^{H-1/\alpha -k}$ as $x \to \infty$. Applying the same technique we deduce that 
\begin{align} \label{cj2}
&|x|^{1/(H-k-1/\alpha)} \overline{\Phi}_j^{(2)} (x) \to \nonumber \\[1.5 ex]
& c_{j}^{(2)}:=  (k+1/\alpha - H)^{-1} \int_0^\infty 
\Phi_j^{(2)} \left( \prod_{i=0}^{k-1} (H-1/\alpha -i) \cdot z \right) z^{-1+1/(H-k-1/\alpha)}dz
\end{align}
as $x \to - \infty$. Now, both measures $\nu_1$ and $\nu_2$ from Theorem \ref{multivariateclt} can be related to the L\'evy measure $\nu$ of $L$. We introduce the mappings $\tau_1: \R_+ \to (\R_+)^d$
and $\tau_2: \R_- \to (\R_-)^d$ via
\begin{align*}
\tau_1 (x) = x^{1/(k+1/\alpha - H)} \left(c_1^{(1)}, \ldots, c_d^{(1)} \right), \qquad
\tau_2 (x) = |x|^{1/(k+1/\alpha - H)} \left(c_1^{(2)}, \ldots, c_d^{(2)} \right).
\end{align*}
Then, for Borel sets $A_1, A_2$ as defined in Theorem \ref{multivariateclt},  we deduce the identity
\begin{align} \label{nul}
\nu_l(A_l) = \lim_{n \to \infty} n \mathbb P\left(\tau_l(n^{-1/ \alpha} L_1) \in A_l \right) 
= \nu\left(\tau_l^{-1} (A_l) \right), \qquad l=1,2.
\end{align}

\section{Statistical inference in the continuous case $H-1/\alpha>0$} \label{sec3}
\setcounter{equation}{0}
\renewcommand{\theequation}{\thesection.\arabic{equation}}

We start with the continuous case $H-1/\alpha>0$, which turns out to be somewhat easier to treat compared to the general setting. Since $H \in (0,1)$ and $\alpha \in (0,2)$, condition $H-1/\alpha>0$ 
implies the restrictions
\[
\alpha \in (1,2) \qquad \text{and} \qquad H \in (1/2,1).
\]
It is the lower bound $\alpha>1$ that enables us to use the  law of large numbers in Theorem 
\ref{th1}(i) whenever $p<1$, and the central limit theorem in Theorem 
\ref{th1}(ii) whenever $p<1/2$ and $H<k-1/\alpha$. The latter condition $H<k-1/\alpha$
never holds for $k=1$ since $0<H-1/\alpha < 1-2/\alpha <0$ gives a contradiction, 
but it is  always satisfied for any $k \geq 2$ since
\[
H<1< k - 1/\alpha \qquad \text{for any } k \geq 2,
\] 
because $\alpha>1$.

Now, we introduce an estimator for the parameter $\theta=(\sigma, \alpha, H) $ in high and low frequency setting. We start with the statistical inference for the self-similarity parameter $H$, which is 
based upon a ratio statistic that compares power variations at two different frequencies. More specifically, we define the quantities 
\begin{align} \label{ratio}
R_{\text{high}} (p,k)_n := \frac{\sum_{i=2k}^n \left| \Delta_{i,k}^{n,2} X \right|^p}
{\sum_{i=k}^n \left| \Delta_{i,k}^{n,1} X \right|^p}, \qquad 
R_{\text{low}} (p,k)_n := \frac{\sum_{i=2k}^n \left| \Delta_{i,k}^{2} X \right|^p}
{\sum_{i=k}^n \left| \Delta_{i,k}^{1} X \right|^p},
\end{align}
where the increments $\Delta_{i,k}^{r} X$ have been defined at \eqref{filterlow}.
We obtain the convergence 
\[
R_{\text{high}} (p,k)_n \toop 2^{pH}, \qquad R_{\text{low}} (p,k)_n \toas 2^{pH}
\]
for any $p \in (0,1)$ as an immediate consequence of Theorem \ref{th1}(i). Consequently, defining 
the statistics
\begin{align} \label{Hest}
\widehat{H}_{\text{high}} (p,k)_n:= \frac 1p \log_2  R_{\text{high}} (p,k)_n, 
\qquad \widehat{H}_{\text{low}} (p,k)_n:= \frac 1p \log_2  R_{\text{low}} (p,k)_n,
\end{align} 
we deduce the consistency $\widehat{H}_{\text{high}} (p,k)_n \toop H$, 
$\widehat{H}_{\text{low}} (p,k)_n \toas H$ as $n \to \infty$ for any $k \geq 1$ and any $p \in (0,1)$. 
We remark that this type of ratio statistics is commonly used in the framework of fBm's when 
estimating 
the Hurst parameter $H$ (see e.g. \cite{IL} among many others). In the Gaussian setting, which corresponds to $\alpha=2$, the central 
limit theorem for the quantity $\sqrt{n}(\widehat{H}_{\text{high}} (p,k)_n - H)$  holds for all $k \geq 2$
and also for $k=1$ if further $H\in (0, 3/4)$. As we indicated above, in the framework of pure jump
$\alpha$-stable driving motion $L$ the central limit theorem never holds if $k=1$. Hence, there is no
smooth transition between the non-Gaussian and Gaussian setting when $\alpha \to 2$.

The estimation strategy for the parameter $\theta=(\sigma, \alpha, H)$ based on high frequency observations is now straightforward: Infer the self-similarity parameter $H$ by  \eqref{Hest} and 
use the plug-in estimator $\varphi_{\text{high}}(t; \widehat{H}_{\text{high}}(p,k),k)_n$ for two different
values of $t$ to infer the scale parameter $\sigma$ and the stability index $\alpha$. For the latter step
we consider $t_2>t_1>0$ and observe the identities 
\begin{align*}
 \sigma= \left(- \log  \varphi(t_1; k) \right)^{1/\alpha} /  t_1 \|h_{k}\|_{\alpha} , \quad
\alpha = \frac{\log |\log \varphi(t_2; k)| - \log |\log \varphi(t_1; k)| } {\log t_2 - \log t_1}.
\end{align*}
Recalling that the function $h_k$ depends on $\alpha$ and $H$, we readily  obtain a function $G$ such that 
\begin{align} \label{G}
(\sigma, \alpha ) = G\left(\varphi(t_1; k), \varphi(t_2; k), H \right)
\end{align} 
where we applied the above identities. Next, we present the estimator of the pair $(\sigma, \alpha)$ in high and low frequency setting, recalling 
that the estimators of the self-similarity parameter $H$ have been defined at   \eqref{Hest}.  
We introduce the following estimators:
\begin{align} 
&\left( \widehat{\sigma}_{\text{high}} (k,t_1,t_2)_n, \widehat{\alpha}_{\text{high}} (k,t_1,t_2)_n \right)
 \nonumber \\[1.5 ex]
&= G\left( \varphi_{\text{high}}(t_1; \widehat{H}_{\text{high}} (p,k)_n,k)_n, 
\varphi_{\text{high}}(t_2; \widehat{H}_{\text{high}} (p,k)_n,k)_n, 
\widehat{H}_{\text{high}} (p,k)_n \right), \nonumber \\[1.5 ex]
\label{alphasigma}
&\left( \widehat{\sigma}_{\text{low}} (k,t_1,t_2)_n, \widehat{\alpha}_{\text{low}} (k,t_1,t_2)_n \right)
= G\left( \varphi_{\text{low}}(t_1; k)_n, 
\varphi_{\text{low}}(t_2; k)_n, \widehat{H}_{\text{low}} (p,k)_n \right).
\end{align}
Before we present the main result of this section we need to introduce more notation. We define 
the functions $v_p : \R_{+}^2 \to \R$ and $F: \R_{+}^2 \times \R^2 \to \R^3$ by 
\begin{align} \label{vF}
v_p(x,y)= p^{-1} (\log_2 y - \log_2 x), \qquad F(x,y,u,w) = \left(G(u,w, v_p(x,y)), v_p(x,y) \right),
\end{align}
and let $JF$ denotes the Jacobian of $F$. For any matrix $A$ we write  $A^{\star}$ for its transpose.
The asymptotic normality in the low and high frequency setting is summarised in the following theorem.

\begin{theo} \label{th3} Consider the linear fractional stable motion $(X_t)_{t \in \R}$
introduced at \eqref{fLm}. Let $k \geq 2$ and $t_2>t_1>0$. 
\newline
(i) (\textit{Low frequency case}) Let $W=(W^{(1)}, W^{(2)})$ be the $4$-dimensional normal limit defined in Theorem \ref{multivariateclt}
associated with $d=2$, $p \in (0,1/2)$, $k_1=k_2=k$ and $r_j=j$. Then we obtain the central limit theorem
\[
\sqrt{n} \left( 
\begin{array} {c}
\widehat{\sigma}_{\text{\rm low}} (k,t_1,t_2)_n - \sigma \\
\widehat{\alpha}_{\text{\rm low}} (k,t_1,t_2)_n - \alpha \\
\widehat{H}_{\text{\rm low}} (p,k)_n - H 
\end{array}
\right)
\schw B_{\text{\rm low}}^{\text{\rm nor}}(p,k)=JF \left(m_{p,k}, 2^H m_{p,k}, \varphi(t_1;k), \varphi(t_2;k) \right) W^{\star}.
\]
(ii)  (\textit{High frequency case}) We obtain the central limit theorem
\begin{align*}
&\left( 
\begin{array} {c}
\sqrt{n} (\log n)^{-1} \left(\widehat{\sigma}_{\text{\rm high}} (k,t_1,t_2)_n - \sigma \right) \\
\sqrt{n} (\log n)^{-1} \left( \widehat{\alpha}_{\text{\rm high}} (k,t_1,t_2)_n - \alpha \right) \\
\sqrt{n}  \left(\widehat{H}_{\text{\rm high}} (p,k)_n - H \right) 
\end{array}
\right)
\schw B_{\text{\rm high}}^{\text{\rm nor}}(p,k)=\\
& \nabla v(m_{p,k}, 2^H m_{p,k}) (W^{(1)})^{\star}
\times \left(
\begin{array} {c}
\nabla G_1(\varphi(t_1;k), \varphi(t_2;k),H) \left(t_1\varphi'(t_1;k), t_2\varphi'(t_2;k),0 \right)^{\star} \\
\nabla G_2(\varphi(t_1;k), \varphi(t_2;k),H) \left(t_1\varphi'(t_1;k), t_2\varphi'(t_2;k),0 \right)^{\star} \\
1
\end{array} 
\right).
\end{align*}  
\end{theo}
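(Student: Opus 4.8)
The plan is to derive Theorem \ref{th3} from the multivariate central limit theorem in Theorem \ref{multivariateclt} by the delta method, treating the low and high frequency cases separately but along parallel lines. First I would recall that the estimators are explicit smooth functions of a handful of underlying statistics: the two power variations $V_{\text{low}}(f_p;k,1)_n$ and $V_{\text{low}}(f_p;k,2)_n$ (which feed into $\widehat{H}$ through the ratio $R_{\text{low}}$ and the map $v_p$), and the two empirical characteristic functions $\varphi_{\text{low}}(t_1;k)_n$ and $\varphi_{\text{low}}(t_2;k)_n$ (which, together with the estimated $\widehat{H}$, feed into $G$ to produce $(\widehat{\sigma},\widehat{\alpha})$). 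Composing these dependencies, the full estimator vector $(\widehat{\sigma}_{\text{low}},\widehat{\alpha}_{\text{low}},\widehat{H}_{\text{low}})$ is exactly $F$ evaluated at the four-dimensional statistic $\bigl(V_{\text{low}}(f_p;k,1)_n, V_{\text{low}}(f_p;k,2)_n, \varphi_{\text{low}}(t_1;k)_n, \varphi_{\text{low}}(t_2;k)_n\bigr)$, with $F$ as defined in \eqref{vF}. The true parameter is the image under $F$ of the limiting values $\bigl(m_{p,k}, 2^H m_{p,k}, \varphi(t_1;k), \varphi(t_2;k)\bigr)$, a point at which $F$ must be shown to be continuously differentiable.

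For part (i), I would invoke Theorem \ref{multivariateclt} with $d=2$, $k_1=k_2=k$, $r_1=1$, $r_2=2$, and the stated choice of $p$, which since $k\geq 2$ and $\alpha>1$ guarantees $k>H+1/\alpha$ so that the relevant components are in the normal (Gaussian $W$) regime rather than the stable ($S$) regime. This yields joint $\sqrt{n}$-asymptotic normality of the centred vector $\bigl(V_{\text{low}}(f_p;k,1)_n - m_{p,k},\, V_{\text{low}}(f_p;k,2)_n - 2^H m_{p,k},\, \varphi_{\text{low}}(t_1;k)_n - \varphi(t_1;k),\, \varphi_{\text{low}}(t_2;k)_n - \varphi(t_2;k)\bigr)$ with limit $W=(W^{(1)},W^{(2)})$. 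Then the delta method applied to $F$ gives $\sqrt{n}$-convergence of the estimator error to $JF\bigl(m_{p,k}, 2^H m_{p,k}, \varphi(t_1;k), \varphi(t_2;k)\bigr) W^{\star}$, which is precisely $B_{\text{low}}^{\text{nor}}(p,k)$. The substantive point to verify here is that the Jacobian $JF$ exists and is finite at the limiting point, i.e. that $v_p$ and $G$ are differentiable there; this reduces to checking $m_{p,k}>0$ (so the logarithm in $v_p$ is differentiable) and that the map $G$ from \eqref{G}, built from the explicit inversion formulas for $\sigma$ and $\alpha$, is smooth when $t_2>t_1>0$ and $0<\varphi(t_i;k)<1$.

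The high frequency case (ii) requires more care because, unlike $\varphi_{\text{low}}$, the statistic $\varphi_{\text{high}}(t;H,k)_n$ depends on the unknown $H$, and this $H$ must itself be replaced by the estimator $\widehat{H}_{\text{high}}(p,k)_n$ before plugging into $G$. The hard part will be disentangling the two sources of randomness and the two different convergence rates: the estimator $\widehat{H}_{\text{high}}$ converges at rate $\sqrt{n}$ (being a ratio of high frequency power variations, which by self-similarity inherit the same fluctuations as $W^{(1)}$), whereas $(\widehat{\sigma}_{\text{high}},\widehat{\alpha}_{\text{high}})$ acquire an extra $\log n$ factor. The $\log n$ slowdown arises because the plug-in $\varphi_{\text{high}}(t;\widehat{H}_{\text{high}},k)_n$ carries the factor $n^{\widehat{H}_{\text{high}}}$ inside the argument of $\psi_t$, so the estimation error $\widehat{H}_{\text{high}}-H$ gets amplified by $\log n$ through a term of the form $t\,\varphi'(t;k)\cdot\log n\cdot(\widehat{H}_{\text{high}}-H)$ in a Taylor expansion; this dominates the $O(1/\sqrt n)$ intrinsic fluctuation of $\varphi_{\text{high}}$ itself. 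I would make this precise by a first-order Taylor expansion of $\varphi_{\text{high}}(t_i;\widehat{H}_{\text{high}},k)_n$ around the true $H$, isolating the leading $\log n$ term, and then feeding the result through the chain rule for $G$. The upshot is that, after the $\sqrt{n}/\log n$ normalisation, the limiting laws of $\widehat{\sigma}_{\text{high}}$ and $\widehat{\alpha}_{\text{high}}$ are both driven solely by the single Gaussian factor coming from $\widehat{H}_{\text{high}}$, namely $\nabla v(m_{p,k}, 2^H m_{p,k})(W^{(1)})^{\star}$, multiplied by the deterministic gradients $\nabla G_1$ and $\nabla G_2$ contracted against $(t_1\varphi'(t_1;k), t_2\varphi'(t_2;k), 0)^{\star}$, exactly as displayed in the theorem; the third component retains the $\sqrt{n}$ rate and the same Gaussian factor. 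The main obstacle is thus the careful bookkeeping of the $\log n$ amplification and the verification that the intrinsic fluctuations of $\varphi_{\text{high}}$ are genuinely negligible at this rate, which I would establish using the probability convergence of $\varphi_{\text{high}}$ from \eqref{ecfconv} together with the $\sqrt{n}$-tightness of $\widehat{H}_{\text{high}}$.
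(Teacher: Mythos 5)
Your overall strategy coincides with the paper's: part (i) is exactly the paper's proof (Theorem \ref{multivariateclt} with $d=2$, $k_1=k_2=k$, $r_j=j$, followed by the delta method applied to $F$), and for part (ii) you correctly identify the decomposition of the plug-in error into a $\log n$-amplified term driven by $\widehat{H}_{\text{high}}-H$ and an intrinsic fluctuation of $\varphi_{\text{high}}(t;H,k)_n$, with the former dominating. However, the central technical step of part (ii) is not secured by your sketch. The ``first-order Taylor expansion'' of $\varphi_{\text{high}}(t;\widehat{H},k)_n$ in $\widehat{H}$ cannot be closed with a standard second-order (Lagrange) remainder: differentiating $\cos(tn^{\widehat{H}}\Delta_{i,k}^{n}X)$ twice in $H$ produces terms of order $|n^{H}\Delta_{i,k}^{n}X|^{2}(\log n)^2$, and $\E[|\Delta_{k,k}X|^{2}]=\infty$ since $\alpha<2$, so the remainder average obeys no law of large numbers. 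The paper instead uses the global inequality $|\cos(y)-\cos(x)+(y-x)\sin(x)|\le C|y-x|^{\alpha'}$ for some $\alpha'\in(1,\alpha)$ --- available precisely because $\alpha>1$ in the continuous case --- which, with $M_n=n^{\widehat{H}-H}$, yields a remainder bounded by $C|M_n-1|^{\alpha'}n^{-1}\sum_{i}|n^{H}\Delta_{i,k}^{n}X|^{\alpha'}$; here Birkhoff's ergodic theorem applies to the $\alpha'$-power sum, and $\sqrt{n}(\log n)^{-1}|M_n-1|^{\alpha'}=o_{\mathbb{P}}(1)$ exactly because $\alpha'>1$. Some such fractional-moment device is indispensable, and without it your expansion does not go through.

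A second, smaller point: you propose to show that the intrinsic fluctuation $\varphi_{\text{high}}(t;H,k)_n-\varphi(t;k)$ is negligible at rate $\sqrt{n}/\log n$ using the consistency statement \eqref{ecfconv}. That statement only gives $o_{\mathbb{P}}(1)$, which multiplied by $\sqrt{n}/\log n$ yields nothing. What is needed --- and what the paper invokes --- is the rate $O_{\mathbb{P}}(n^{-1/2})$ coming from the CLT part of Theorem \ref{multivariateclt} applied to $W(n)^{(2)}$ (together with self-similarity to pass from low to high frequency). Since you do assert the $O(1/\sqrt{n})$ rate earlier in your sketch, this reads as a citation slip rather than a conceptual error, but as written the step would fail.
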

We remark that the central limit theorem of Theorem \ref{th3}(i) is a simple consequence of 
Theorem \ref{multivariateclt} and the delta method. In contrast to the low frequency case Theorem \ref{th3}(ii) is degenerate 
in the sense that the limit distribution is solely driven by the asymptotics of the term 
$\sqrt{n}  (\widehat{H}_{\text{high}} (p,k)_n - H ) $. Since the parameter $H$ enters the quantity 
$\varphi_{\text{high}}(t; H,k)_n$ via $n^H$ the additional term $(\log n)^{-1}$ appears in the convergence rate. 

For a later use we need to extend the definition of the random variables $B_{\text{\rm high}}^{\text{\rm nor}}(p,k)$
and $B_{\text{\rm low}}^{\text{\rm nor}}(p,k)$ to various directions. First of all, we will allow for negative powers $-p$
with $p \in (0,1/2)$. Secondly, we would like to define the same limiting variables but associated with the stable limit $S=(S^{(1)}, S^{(2)})$ from Theorem \ref{multivariateclt} rather than $W$. Thus,    
for $d=2$, $p \in (-1/2,1/2) \setminus \{0\}$, $k_1=k_2=k$ and $r_j=j$, we set
\begin{align*}
&B_{\text{\rm low}}^{\text{\rm sta}}(p,k)=JF \left(m_{p,k}, 2^H m_{p,k}, \varphi(t_1;k), \varphi(t_2;k) \right) S^{\star}, \\[1.5 ex]
&B_{\text{\rm high}}^{\text{\rm sta}}(p,k)= \\[1.5 ex]
& \nabla v(m_{p,k}, 2^H m_{p,k}) (S^{(1)})^{\star}
\times \left(
\begin{array} {c}
\nabla G_1(\varphi(t_1;k), \varphi(t_2;k),H) \left(t_1\varphi'(t_1;k), t_2\varphi'(t_2;k),0 \right)^{\star} \\
\nabla G_2(\varphi(t_1;k), \varphi(t_2;k),H) \left(t_1\varphi'(t_1;k), t_2\varphi'(t_2;k),0 \right)^{\star} \\
1
\end{array} 
\right).
\end{align*}

\begin{rem} \label{rem1}  \rm 
In the low frequency setting there is of course no need to rely on the ratio statistic 
$R_{\text{low}}(p,k)$ to obtain an asymptotically normal estimator of the parameter $\theta=(\sigma,
\alpha, H)$. The empirical characteristic function (or, more precisely, its real part) $\varphi_{\text{low}}(t;k)_n$ is a more natural probabilistic tool for the statistical inference for $\theta$. We observe that 
a multivariate central limit theorem for the triple $(\varphi_{\text{low}}(t_j;k)_n)_{1\leq j\leq 3}$ suffices
to obtain an asymptotically normal $\sqrt{n}$-estimator $\widehat {\theta}_n$ of $\theta$. 
To increase the efficiency we may consider $l$ points $t_l >\ldots>t_1>0$  with $l \geq 3$
and minimise the asymptotic variance over $(t_1, \ldots, t_l)$.
Since the 
mathematical derivation is very similar to Theorem \ref{multivariateclt}, we leave the details to the reader.

Another intuitive method in the low frequency framework is to estimate $\theta$ via a minimal contrast approach. Given a positive weight function $w\in L^1(\R_+)$ we may obtain an estimator
$\widetilde{\theta}_n$ of $\theta$ by
\[
\widetilde{\theta}_n \in \text{argmin}_{\theta \in \R_+ \times (0,2) \times (0,1)}
\int_0^\infty \left(\varphi_{\text{low}}(t;k)_n - \varphi(t;k) \right)^2 w(t) dt.
\] 
In this setting we are likely to require tightness or a similar property of the stochastic process 
$\varphi_{\text{low}}(\cdot;k)_n$ to prove asymptotic normality of $\widetilde{\theta}_n$. However,
this seems to be a non-trivial problem, at least when using standard tightness criteria for the space 
$(C(\R_+), \| \cdot \|_{\infty})$.  We leave it for future research. \qed
\end{rem}

\begin{rem} \label{rem2} \rm
The described statistical methodology can be applied to more general processes than the mere linear
fractional stable motion. In the paper \cite{BLP} the authors investigated limit theorems for stochastic processes of the form
\[
Y_t = \int_{\R} \{g(t-s) - g_0(-s)\} dL_s,
\]
where $g,g_0$ are deterministic functions vanishing on $\R_{-}$ with $g(x)=x^{H-1/\alpha} f(x)$ and 
$f(0) \not = 0$, and $L$ is a symmetric $\alpha$-stable L\'evy motion. In the high frequency setting 
the process $Y$ exhibits the \textit{tangent process} $f(0) X$, i.e. we have that 
\[
\Delta_{i,k}^{n,r} Y \approx f(0)\Delta_{i,k}^{n,r} X.
\] 
In particular, under certain assumption on $f$ (cf. \cite{BLP}), the central limit theorem part of Theorem \ref{multivariateclt}
holds for the more general class of processes $Y$. Hence, in this semi-parametric model it is possible 
to estimate the parameter $(|f(0)| \sigma, \alpha, H)$ via the same approach as presented in Theorem \ref{th3}(ii). We remark that the function $f$ can't be inferred  from high frequency observations on a fixed time interval. \qed 
\end{rem}

\section{Statistical inference in the general  case} \label{sec4}
\setcounter{equation}{0}
\renewcommand{\theequation}{\thesection.\arabic{equation}}

In this section we treat the case of a general linear fractional stable motion as it has been introduced at 
\eqref{fLm}. We recall that in the continuous setting the restriction $H-1/ \alpha>0$ has led to the lower bound $\alpha >1$, which is essential for obtaining the asymptotic results of Theorem \ref{th3}. Without having an explicit lower bound for the stability parameter $\alpha$ statistical inference turns out to be more complex. As a consequence we will require a different estimation method for the 
self-similarity parameter $H$ and a two-step procedure to choose the right order of increments $k$. Furthermore, in order to obtain fast rates of convergence we need different treatments for the low and high frequency frameworks.

\subsection{Low frequency setting} \label{sec4.1}

We note that the basic idea behind the ratio statistic  
$R_{\text{low}}(p,k)_n$ introduced in \eqref{ratio} 
is the homogeneity of the function $f_p(x)=|x|^p$ and the fact that $m_{p,k} <\infty$
which is a consequence of $p<\alpha$ (for the associated 
central limit theorem we need the stronger condition $p < \alpha/2$). In order to keep both properties we may instead consider the negative power variation, which corresponds to the function
$f_{-p}(x)=|x|^{-p}$, and we assume throughout this section that $p\in (0,1/2)$. This approach has been 
originally proposed in \cite{DI}, although central limit theorems have not been investigated in this setting. Note that the function $f_{-p}$ is still homogenous and  $m_{-2p,k} <\infty$, which is due to the
fact that for any random variable $Y$ with  bounded density near $0$ it holds that $\E[|Y|^a] <\infty$
for all $a\in (-1,0)$.  Thus, $\widehat{H}_{\text{low}} (-p,k)_n$ is a  strongly consistent estimator of the parameter $H$ for any $p\in (0,1/2)$.

In the next step we need to ensure that we end up in the domain of attraction of the central limit theorem in Theorem
\ref{th1}(ii), which requires that $k>H+1/\alpha$. To guarantee this we need a preliminary estimator of the parameter $\alpha$. They are obtained as in \eqref{alphasigma} using the function $f_{-p}$ and
$k=1$:
\begin{align} \label{prealpha}
\widehat{\alpha}_{\text{low}}^{0} (t_1,t_2)_n
= G_2\left( \varphi_{\text{low}}(t_1)_n, 
\varphi_{\text{low}}(t_2)_n, \widehat{H}_{\text{low}} (-p)_n \right),
\end{align} 
where $G=(G_1, G_2)$.
Notice that this estimator is consistent, but we do not know if it is in the domain of attraction of a normal distribution or not. Now, we define 
\begin{align} \label{kn}
\widehat{k}_{\text{low}} (t_1,t_2)_n = 2+ 
\lfloor \widehat{\alpha}_{\text{low}}^{0} (t_1,t_2)_n^{-1}  \rfloor. 
\end{align}   
For the sake of brevity we write  $\widehat{k}_{\text{low}}= \widehat{k}_{\text{low}} (t_1,t_2)_n$. In the second step we estimate the parameter $\theta= (\sigma,
\alpha, H)$ using $\widehat{k}_{\text{low}}$. The self-similarity parameter $H$ is thus estimated by 
$\widehat{H}_{\text{low}} (-p, \widehat{k}_{\text{low}})_n$. Next, similarly to definitions at \eqref{alphasigma}, we introduce the estimators
\begin{align} \label{tildealphasigma}
&\left( \widetilde{\sigma}_{\text{low}} (\widehat{k}_{\text{low}},t_1,t_2)_n, \widetilde{\alpha}_{\text{low}} (\widehat{k}_{\text{low}},t_1,t_2)_n \right)\\[1.5 ex]
&= G\left( \varphi_{\text{low}}(t_1; \widehat{k}_{\text{low}})_n, 
\varphi_{\text{low}}(t_2; \widehat{k}_{\text{low}})_n, \widehat{H}_{\text{low}} 
(-p, \widehat{k}_{\text{low}})_n \right). \nonumber
\end{align}  
In order to determine the asymptotic distribution of the proposed estimators we will need the full force 
of Theorem \ref{multivariateclt}. Due to definition \eqref{kn} we also require a separate treatment of the cases  $\alpha^{-1} \not \in \N$
and $\alpha^{-1}  \in \N$. In the first case $\widehat{k}_{\text{low}} \toas 2+ 
\lfloor \alpha^{-1}  \rfloor$ while in the second case we will have
\[
\mathbb P\left( \widehat{k}_{\text{low}}  = 2+ 
 \alpha^{-1}  \right) \to \lambda  \qquad \text{and} \qquad  
\mathbb P\left( \widehat{k}_{\text{low}}  = 1+ 
 \alpha^{-1}  \right) \to 1- \lambda
\]
for a certain constant $\lambda \in (0,1)$. In the first setting, which is easier to treat, we obtain the following result.

\begin{theo} \label{th4}
Let $X$ be the linear fractional stable motion defined at \eqref{fLm}. Assume that $p \in (0,1/2)$ 
and $\alpha^{-1} \not \in \N$. We obtain the central limit theorem
\[
\sqrt{n} \left( 
\begin{array} {c}
\widetilde{\sigma}_{\text{\rm low}} (\widehat{k}_{\text{\rm low}},t_1,t_2)_n - \sigma \\
\widetilde{\alpha}_{\text{\rm low}} (\widehat{k}_{\text{\rm low}},t_1,t_2)_n - \alpha \\
\widehat{H}_{\text{\rm low}} (-p, \widehat{k}_{\text{\rm low}})_n - H 
\end{array}
\right)
\schw B_{\text{\rm low}}^{\text{\rm nor}} \left(-p,2+ 
\lfloor \alpha^{-1}  \rfloor \right).
\]
\end{theo}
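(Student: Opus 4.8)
The plan is to exploit the fact that, when $\alpha^{-1} \notin \N$, the data-driven order $\widehat{k}_{\text{low}}$ stabilises almost surely, so that the randomly-indexed estimators in \eqref{tildealphasigma} coincide, with probability tending to one, with their fixed-order counterparts of order $k^\ast := 2 + \lfloor \alpha^{-1}\rfloor$. The whole statement then reduces to a fixed-order central limit theorem, which follows from Theorem \ref{multivariateclt} via the delta method exactly as in the proof of Theorem \ref{th3}(i), but with the positive power $p$ replaced by the negative power $-p$.

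\textbf{Step 1 (eventual constancy of $\widehat{k}_{\text{low}}$).} First I would show that the preliminary estimator $\widehat{\alpha}_{\text{low}}^{0}(t_1,t_2)_n$ from \eqref{prealpha} is strongly consistent. Indeed, $\widehat{H}_{\text{low}}(-p)_n \toas H$ as noted in Section \ref{sec4.1}, and $\varphi_{\text{low}}(t_j)_n \toas \varphi(t_j;1)$ by \eqref{ecfconv}; since $G_2$ is continuous at the limiting point, \eqref{prealpha} yields $\widehat{\alpha}_{\text{low}}^{0}(t_1,t_2)_n \toas \alpha$. Because $\alpha^{-1}$ is not an integer, the map $x \mapsto \lfloor x^{-1} \rfloor$ is continuous at $x = \alpha$, so $\lfloor (\widehat{\alpha}_{\text{low}}^{0})^{-1} \rfloor \toas \lfloor \alpha^{-1}\rfloor$; being integer-valued, this quantity is almost surely eventually equal to $\lfloor \alpha^{-1}\rfloor$, whence by \eqref{kn} we get $\widehat{k}_{\text{low}} \toas k^\ast$ and in particular $\PP(\widehat{k}_{\text{low}} = k^\ast) \to 1$.

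\textbf{Steps 2--3 (reduction and fixed-order limit).} On the event $\{\widehat{k}_{\text{low}} = k^\ast\}$ the left-hand side of the asserted central limit theorem coincides identically with the centred and $\sqrt{n}$-scaled vector $Z_n^\ast$ built from the deterministic-order estimators $\widetilde{\sigma}_{\text{low}}(k^\ast,t_1,t_2)_n$, $\widetilde{\alpha}_{\text{low}}(k^\ast,t_1,t_2)_n$, $\widehat{H}_{\text{low}}(-p,k^\ast)_n$. Since $\PP(\widehat{k}_{\text{low}} \ne k^\ast) \to 0$, the two vectors differ only on an event of vanishing probability, so their difference tends to $0$ in probability and, by Slutsky's theorem, it suffices to prove $Z_n^\ast \schw B_{\text{low}}^{\text{nor}}(-p,k^\ast)$. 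For this I invoke Theorem \ref{multivariateclt} with $d=2$, power $-p \in (-1/2,0)$, $k_1=k_2=k^\ast$ and $r_j=j$: the admissibility condition holds because $-p \in (-1/2,0)$, the integrability $m_{-2p,k^\ast}<\infty$ is guaranteed by the density argument recalled in Section \ref{sec4.1}, and since $H<1$ and $1/\alpha < \lfloor \alpha^{-1}\rfloor + 1$ one has $H+1/\alpha < 2 + \lfloor \alpha^{-1}\rfloor = k^\ast$, placing us in the Gaussian regime with limit $W=(W^{(1)},W^{(2)})$. The estimators are smooth functions, through $F$ of \eqref{vF}, of the statistics $V_{\text{low}}(f_{-p};k^\ast,r_j)_n$ and $\varphi_{\text{low}}(t_j;k^\ast)_n$, so the delta method with Jacobian $JF$ evaluated at $(m_{-p,k^\ast}, 2^H m_{-p,k^\ast}, \varphi(t_1;k^\ast), \varphi(t_2;k^\ast))$ produces the limit $B_{\text{low}}^{\text{nor}}(-p, 2+\lfloor \alpha^{-1}\rfloor)$, as claimed.

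The main obstacle is Step 1, and precisely the role of the hypothesis $\alpha^{-1} \notin \N$: it is exactly this assumption that forces $\widehat{k}_{\text{low}}$ to be eventually constant and thereby decouples the randomness of the selected order from the fluctuations of the estimators. Were $\alpha^{-1}$ an integer, $\lfloor (\widehat{\alpha}_{\text{low}}^{0})^{-1}\rfloor$ would oscillate between $\lfloor\alpha^{-1}\rfloor-1$ and $\lfloor\alpha^{-1}\rfloor$, the clean reduction of Step 2 would fail, and one would be led to a random mixture of two distinct fixed-order limits --- precisely the harder case treated separately. Once eventual constancy is secured the remaining steps are routine: the Slutsky reduction is standard and the fixed-order statement is the negative-power analogue of Theorem \ref{th3}(i), requiring only the smoothness of $F$ and the integrability $m_{-2p,k^\ast}<\infty$, both already available.
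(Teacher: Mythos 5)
Your proposal is correct and follows essentially the same route as the paper's proof: establish that $\widehat{k}_{\text{low}}$ equals $2+\lfloor \alpha^{-1}\rfloor$ with probability tending to one (the paper does this via an explicit $\delta'$-argument from consistency in probability, you via strong consistency and eventual constancy, which is equally valid given \eqref{ecfconv} and the strong consistency of $\widehat{H}_{\text{low}}(-p)_n$), reduce to the fixed-order statement, verify $2+\lfloor \alpha^{-1}\rfloor > H+1/\alpha$, and conclude by Theorem \ref{multivariateclt} and the delta method as in Theorem \ref{th3}(i). No substantive differences.
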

  In the framework $\alpha^{-1}  \in \N$ we distinguish two further cases, that determine the asymptotic behaviour the preliminary estimate  $\widehat{\alpha}^0_{\text{low}}$, which is constructed using $k=1$. According to 
Theorem \ref{multivariateclt}  we are in the domain of the validity of a central limit theorem when 
$H<1-1/\alpha$ while a non-central limit theorem holds if  $H>1-1/\alpha$.

\begin{prop} \label{prop1}
Let $X$ be the linear fractional stable motion defined at \eqref{fLm}. Assume that $p \in (0,1/2)$.
\newline
(i) (Normal case) Assume that $H<1-1/\alpha$. Then we obtain 
the central limit theorem
\begin{align*}
\sqrt{n}\left( \widehat{\alpha}_{\text{\rm low}}^{0} (t_1,t_2)_n - \alpha \right)
\schw 
B_{\text{\rm low}}^{\text{\rm nor}} \left(-p,1 \right)_2.
\end{align*}
(ii) (Stable case)
Assume that  $H>1-1/\alpha$. Then we obtain 
the weak limit theorem
\begin{align*}
n^{1-1/(1+ \alpha(1-H))} \left( \widehat{\alpha}_{\text{\rm low}}^{0} (t_1,t_2)_n - \alpha \right)
\schw  B_{\text{\rm low}}^{\text{\rm sta}} \left(-p,1 \right)_2.
\end{align*}
\end{prop}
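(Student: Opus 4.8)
The plan is to recognise $\widehat\alpha^0_{\text{low}}$ as a smooth, explicitly known function of the two empirical-characteristic-function statistics $\varphi_{\text{low}}(t_1)_n$ and $\varphi_{\text{low}}(t_2)_n$, and then to transport the limit theory of Theorem \ref{multivariateclt} through the delta method. The crucial preliminary observation is that the second component $G_2$ of the map $G$ in \eqref{G}, namely
\[
G_2(u,w)= \frac{\log|\log w| - \log|\log u|}{\log t_2 - \log t_1},
\]
does not depend on its $H$-slot, since the factor $\|h_k\|_\alpha$ cancels in the ratio defining $\alpha$. Hence, by \eqref{prealpha}, $\widehat\alpha^0_{\text{low}}(t_1,t_2)_n = G_2(\varphi_{\text{low}}(t_1)_n, \varphi_{\text{low}}(t_2)_n)$ is unaffected by the plugged-in estimator $\widehat H_{\text{low}}(-p)_n$. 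Because $0<\varphi(t_j)<1$ for $t_j>0$ by \eqref{ecfconv}, the map $G_2$ is $C^1$ in a neighbourhood of the true value $(\varphi(t_1),\varphi(t_2))$, and $\alpha = G_2(\varphi(t_1),\varphi(t_2))$.

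For part (i), I would apply Theorem \ref{multivariateclt} with $d=2$, $k_1=k_2=1$, $r_j=j$, negative power $-p\in(-1/2,0)$, and evaluation points $t_1<t_2$; note that the negative-power branch of the hypothesis imposes no restriction on $\alpha$. The condition $H<1-1/\alpha$ is exactly $k=1>H+1/\alpha$, so the characteristic-function coordinates sit in the normal block and $W(n)^{(2)}=(\sqrt n(\varphi_{\text{low}}(t_j)_n - \varphi(t_j)))_{j=1,2}\schw W^{(2)}$ jointly with the power-variation block. The classical delta method then yields $\sqrt n(\widehat\alpha^0_{\text{low}} - \alpha) \schw \nabla G_2(\varphi(t_1),\varphi(t_2))(W^{(2)})^\star$. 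To identify this with $B^{\text{nor}}_{\text{low}}(-p,1)_2$ I compute the second row of $JF$ at $(m_{-p,1}, 2^Hm_{-p,1}, \varphi(t_1), \varphi(t_2))$: since $F_2=G_2$ is independent of its first two ($f_{-p}$-based) arguments and of $H$, that row equals $(0,0,\partial_u G_2,\partial_w G_2)$, whence $B^{\text{nor}}_{\text{low}}(-p,1)_2 = \nabla G_2 (W^{(2)})^\star$, as required.

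For part (ii), the hypothesis $H>1-1/\alpha$ reads $k=1<H+1/\alpha$, placing the same coordinates in the stable block of Theorem \ref{multivariateclt}, so that $S(n)^{(2)}=(n^{1-1/(1+\alpha(1-H))}(\varphi_{\text{low}}(t_j)_n-\varphi(t_j)))_{j=1,2}\schw S^{(2)}$. I would then invoke the delta method in its general form, valid for an arbitrary limit law as long as the normalising rate diverges and $G_2$ is differentiable at the deterministic limit point, since a first-order Taylor expansion with $o_P$-remainder carries any weak limit through unchanged. This gives $n^{1-1/(1+\alpha(1-H))}(\widehat\alpha^0_{\text{low}}-\alpha)\schw \nabla G_2(\varphi(t_1),\varphi(t_2))(S^{(2)})^\star$, and the same second-row computation of $JF$ identifies the right-hand side with $B^{\text{sta}}_{\text{low}}(-p,1)_2$.

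The genuinely demanding input, the joint weak convergence in the two regimes, is supplied by Theorem \ref{multivariateclt}, so the remaining work is bookkeeping: the smoothness and gradient of $G_2$ at the true value (routine, using $0<\varphi(t_j)<1$), and the verification that the differentiability-based delta method applies in the stable regime (it does, requiring only consistency, here from \eqref{ecfconv}, together with differentiability of $G_2$). The one point that must be handled with care rather than computed is the dichotomy itself: reading off whether the characteristic-function statistics belong to the $W$-block or the $S$-block of Theorem \ref{multivariateclt}, which is governed precisely by the sign of $1-H-1/\alpha$ and thereby matches the case split $H\lessgtr 1-1/\alpha$.
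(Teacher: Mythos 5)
Your proposal is correct and follows exactly the route the paper takes: the paper's proof of Proposition \ref{prop1} is the one-line remark that it ``is shown by exactly the same arguments as Theorem \ref{th3}'', i.e.\ Theorem \ref{multivariateclt} (with $d=2$, negative power $-p$, $k_1=k_2=1$, $r_j=j$) combined with the delta method, which is precisely what you carry out, including the correct reading of the dichotomy $H\lessgtr 1-1/\alpha$ as $k=1\gtrless H+1/\alpha$ selecting the $W$- or $S$-block. Your additional observations --- that $G_2$ is free of the $H$-slot so the plug-in $\widehat H_{\text{low}}(-p)_n$ does not propagate, and that the second row of $JF$ reduces to $\bigl(0,0,\partial_u G_2,\partial_w G_2\bigr)$, identifying the limit with $B_{\text{\rm low}}^{\text{\rm nor}}(-p,1)_2$ resp.\ $B_{\text{\rm low}}^{\text{\rm sta}}(-p,1)_2$ --- are correct details that the paper leaves implicit.
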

We note that the result of Proposition \ref{prop1}(ii) is essentially the same as in the asymptotically normal regime except that the convergence rate is now $n^{1-1/(1+ \alpha(1-H))}$ and the normal limit $W$ is replaced by $S$.

The next theorem presents the statistical behaviour of the estimator $(\widetilde{\sigma}_{\text{low}}, \widetilde{\alpha}_{\text{low}}, \widehat{H}_{\text{low}} 
(-p, \widehat{k}_{\text{low}})_n)$ in the case $\alpha^{-1} \in \N$.
\begin{theo} \label{th5}
Let $X$ be the linear fractional stable motion defined at \eqref{fLm}. Assume that $p \in (0,1/2)$ 
and $\alpha^{-1}  \in \N$. \newline
(i) (Case $H<1-1/\alpha$) Assume that $H<1-1/\alpha$. Then we obtain 
\[
\sqrt{n} \left( 
\begin{array} {c}
\widetilde{\sigma}_{\text{\rm low}} (\widehat{k}_{\text{\rm low}},t_1,t_2)_n - \sigma \\
\widetilde{\alpha}_{\text{\rm low}} (\widehat{k}_{\text{\rm low}},t_1,t_2)_n - \alpha \\
\widehat{H}_{\text{\rm low}} (-p,\widehat{k}_{\text{\rm low}})_n - H 
\end{array}
\right)
\schw D_{\text{\rm low}}^{\text{\rm nor}},
\]
where the probability distribution $D_{\text{\rm low}}^{\text{\rm nor}}$ on $\R^3$ is given by
\begin{align*}
D_{\text{\rm low}}^{\text{\rm nor}} (\cdot) &= \mathbb{P} (\{B_{\text{\rm low}}^{\text{\rm nor}} (-p, 2+\alpha^{-1}) 
\in \cdot\} \cap \{B_{\text{\rm low}}^{\text{\rm nor}} \left(-p,1 \right)_2<0 \}) \\
&+ \mathbb{P} (\{B_{\text{\rm low}}^{\text{\rm nor}} (-p, 1+\alpha^{-1}) 
\in \cdot\} \cap \{B_{\text{\rm low}}^{\text{\rm nor}} \left(-p,1 \right)_2>0 \}).
\end{align*}
(ii) (Case $H>1-1/\alpha$) Assume that $H>1-1/\alpha$. Then we obtain 
\[
\sqrt{n} \left( 
\begin{array} {c}
\widetilde{\sigma}_{\text{\rm low}} (\widehat{k}_{\text{\rm low}},t_1,t_2)_n - \sigma \\
\widetilde{\alpha}_{\text{\rm low}} (\widehat{k}_{\text{\rm low}},t_1,t_2)_n - \alpha \\
\widehat{H}_{\text{\rm low}} (-p,\widehat{k}_{\text{\rm low}})_n - H 
\end{array}
\right)
\schw D_{\text{\rm low}}^{\text{\rm sta}},
\]
where the probability distribution $D_{\text{\rm low}}^{\text{\rm sta}}$ on $\R^3$ is given by
\begin{align*}
D_{\text{\rm low}}^{\text{\rm sta}} (\cdot) &= \mathbb P(B_{\text{\rm low}}^{\text{\rm sta}} \left(-p,1 \right)_2<0)
\mathbb{P} (B_{\text{\rm low}}^{\text{\rm nor}} (-p, 2+\alpha^{-1}) 
\in \cdot) \\
&+ \mathbb P(B_{\text{\rm low}}^{\text{\rm sta}} \left(-p,1 \right)_2>0) \mathbb{P} (B_{\text{\rm low}}^{\text{\rm nor}} (-p, 1+\alpha^{-1}) 
\in \cdot).
\end{align*}
\end{theo}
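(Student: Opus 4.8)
The plan is to treat the data-driven order $\widehat{k}_{\text{low}}$ as a two-valued random selector and to reduce the statement to Theorem \ref{th3}(i), in its extension to negative powers $-p$ with $p\in(0,1/2)$, combined with a single application of Theorem \ref{multivariateclt}. Since $\widehat{\alpha}^0_{\text{low}} \toas \alpha$ and $\alpha^{-1}\in\N$, the quantity $(\widehat{\alpha}^0_{\text{low}})^{-1}$ concentrates at the integer $\alpha^{-1}$, so that $\lfloor (\widehat{\alpha}^0_{\text{low}})^{-1}\rfloor \in \{\alpha^{-1}-1,\alpha^{-1}\}$ and therefore $\widehat{k}_{\text{low}}\in\{1+\alpha^{-1},\,2+\alpha^{-1}\}$ with probability tending to one. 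On these events one has the identifications
\[
\{\widehat{k}_{\text{low}}=2+\alpha^{-1}\}=\{\widehat{\alpha}^0_{\text{low}}<\alpha\},\qquad
\{\widehat{k}_{\text{low}}=1+\alpha^{-1}\}=\{\widehat{\alpha}^0_{\text{low}}>\alpha\},
\]
up to the boundary event $\{\widehat{\alpha}^0_{\text{low}}=\alpha\}$, which will be shown to be asymptotically negligible because the relevant limit law has a density. Writing $\widehat{\theta}^{(k)}_n:=(\widehat{\sigma}_{\text{low}}(k,t_1,t_2)_n,\,\widehat{\alpha}_{\text{low}}(k,t_1,t_2)_n,\,\widehat{H}_{\text{low}}(-p,k)_n)$ for the final estimator built with a \emph{deterministic} order $k$, the definition of $\widetilde{\theta}_n=(\widetilde{\sigma}_{\text{low}},\widetilde{\alpha}_{\text{low}},\widehat{H}_{\text{low}}(-p,\widehat{k}_{\text{low}})_n)$ gives $\widetilde{\theta}_n=\widehat{\theta}^{(k)}_n$ on $\{\widehat{k}_{\text{low}}=k\}$.

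Next I would set up one instance of Theorem \ref{multivariateclt} on a vector large enough to contain, jointly, all statistics feeding (a) the preliminary estimator $\widehat{\alpha}^0_{\text{low}}$ built with $k=1$, and (b) the two candidate final estimators built with $k=1+\alpha^{-1}$ and $k=2+\alpha^{-1}$; concretely one includes the power variations $f_{-p}$ with $r\in\{1,2\}$ and the empirical characteristic functions at $t_1,t_2$ for each of these three orders. Applying the delta method exactly as in the proof of Theorem \ref{th3}(i) then yields joint convergence of
\[
\Big(\sqrt{n}\big(\widehat{\theta}^{(1+\alpha^{-1})}_n-\theta\big),\ \sqrt{n}\big(\widehat{\theta}^{(2+\alpha^{-1})}_n-\theta\big),\ \rho_n\big(\widehat{\alpha}^0_{\text{low}}-\alpha\big)\Big),
\]
where $\rho_n$ is the rate of the preliminary estimate. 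In part (i), where $H<1-1/\alpha$, the order $k=1$ lies in the normal domain, so $\rho_n=\sqrt{n}$ and the selector converges to $B_{\text{low}}^{\text{nor}}(-p,1)_2$ by Proposition \ref{prop1}(i); in part (ii), where $H>1-1/\alpha$, the order $k=1$ lies in the stable domain, so $\rho_n=n^{1-1/(1+\alpha(1-H))}$ and the selector converges to $B_{\text{low}}^{\text{sta}}(-p,1)_2$ by Proposition \ref{prop1}(ii). In both cases the orders $1+\alpha^{-1},\,2+\alpha^{-1}\geq 2$ satisfy $k>H+1/\alpha$, so the two candidate final estimators converge at rate $\sqrt{n}$ to the normal limits $B_{\text{low}}^{\text{nor}}(-p,1+\alpha^{-1})$ and $B_{\text{low}}^{\text{nor}}(-p,2+\alpha^{-1})$.

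With this joint convergence I would write
\[
\sqrt{n}\big(\widetilde{\theta}_n-\theta\big)=\sqrt{n}\big(\widehat{\theta}^{(2+\alpha^{-1})}_n-\theta\big)\mathbf{1}_{\{\widehat{\alpha}^0_{\text{low}}<\alpha\}}+\sqrt{n}\big(\widehat{\theta}^{(1+\alpha^{-1})}_n-\theta\big)\mathbf{1}_{\{\widehat{\alpha}^0_{\text{low}}>\alpha\}}+o_{\mathbb P}(1),
\]
and conclude by the continuous mapping theorem applied to the map $(y,z,s)\mapsto y\,\mathbf{1}_{\{s<0\}}+z\,\mathbf{1}_{\{s>0\}}$, which is continuous off the set $\{s=0\}$; since the limit of $\rho_n(\widehat{\alpha}^0_{\text{low}}-\alpha)$ has a density, this discontinuity set is a null set of the limit law. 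The decisive structural point separating the two parts is supplied by the independence of $W$ and $S$ in Theorem \ref{multivariateclt}. In part (i) the selector variable $B_{\text{low}}^{\text{nor}}(-p,1)_2$ and the value variables $B_{\text{low}}^{\text{nor}}(-p,k)$ all originate from the single Gaussian limit $W$, hence are jointly normal and generally dependent, which produces the intersection probabilities defining $D_{\text{low}}^{\text{nor}}$. In part (ii) the selector $B_{\text{low}}^{\text{sta}}(-p,1)_2$ originates from the stable limit $S$ while the value variables originate from $W$; by the independence of $W$ and $S$ the selection becomes asymptotically independent of the values, which produces the factorised (product) form of $D_{\text{low}}^{\text{sta}}$.

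The main obstacle I anticipate is the rigorous interchange between the random index $\widehat{k}_{\text{low}}$ and the asymptotics of the estimators it parametrises: one cannot invoke Theorem \ref{th3}(i) directly, since $\widehat{k}_{\text{low}}$ is random and, in part (i), correlated with the statistics entering the final estimator. This forces the joint formulation through a single large instance of Theorem \ref{multivariateclt}, and requires verifying that both $\{\widehat{\alpha}^0_{\text{low}}=\alpha\}$ and $\{\widehat{k}_{\text{low}}\notin\{1+\alpha^{-1},2+\alpha^{-1}\}\}$ are asymptotically negligible, which rests on the absolute continuity of the limiting laws, as well as the asymptotic independence of selection and values in part (ii), which is precisely the independence of $W$ and $S$. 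The differing rates in part (ii) (the selector being slower, of order $o(\sqrt{n})$) is a further point to track, although it only reinforces the factorisation.
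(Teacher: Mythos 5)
Your proposal is correct and follows essentially the same route as the paper's proof: restrict to the two candidate orders $1+\alpha^{-1}$ and $2+\alpha^{-1}$, identify the selection events with the sign of $\widehat{\alpha}_{\text{low}}^{0}(t_1,t_2)_n - \alpha$, obtain joint convergence of the candidate estimators and the selector via Theorem \ref{multivariateclt}, Proposition \ref{prop1} and the delta method, and invoke the independence of $W$ and $S$ to get the product form in case (ii). The only (cosmetic) difference is that the paper carries out the final step as a decomposition of $\mathbb{P}(T_{\text{low}}(\widehat{k}_{\text{low}},n)\in A)$ over rectangles $A$, whereas you use an indicator decomposition plus the continuous mapping theorem, with the boundary event handled by absolute continuity of the limit law in both formulations.
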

According to Theorem \ref{multivariateclt} the statistic $(B_{\text{\rm low}}^{\text{\rm nor}} (-p, k), 
B_{\text{\rm low}}^{\text{\rm nor}} (-p, 1))$ is  jointly normal for $k \in \{1+\alpha^{-1},2+\alpha^{-1}\}$. Thus, the probability distribution  $D_{\text{\rm low}}^{\text{\rm nor}}$ can be easily computed using conditioning rules for normal distribution. 

Note however that it is problematic to use Theorem \ref{th5} for constructing confidence regions since we do not know a priori whether part (i) or part (ii) applies. We now introduce a decision rule that helps us to solve this problem.  Let $t_4>t_3>t_2>t_1>0$ be given real numbers and let 
$\widehat{\alpha}_{\text{low}}^{0} (t_1,t_2)_n$, $\widehat{\alpha}_{\text{low}}^{0} (t_3,t_4)_n$ be two estimators of parameter $\alpha \in (0,2)$ defined at \eqref{prealpha}.  Then, similarly to Proposition 
\ref{prop1}, we deduce that 
\begin{align*} 
a_n \left(\widehat{\alpha}_{\text{\rm low}}^{0} (t_3,t_4)_n - \widehat{\alpha}_{\text{\rm low}}^{0} (t_1,t_2)_n\right) \qquad \text{converges in law,}
\end{align*}  
where $a_n= \sqrt{n}$ if $H<1-1/\alpha$ and $a_n= n^{1-1/(1+ \alpha(1-H))} $ if $H>1-1/\alpha$. Hence, we immediately conclude the convergence
\begin{align*}
d_n:=-\frac{\log \left| \widehat{\alpha}_{\text{\rm low}}^{0} (t_3,t_4)_n - \widehat{\alpha}_{\text{\rm low}}^{0} (t_1,t_2)_n \right|}{\log (n)} \toop 
\begin{cases}
1/2: & \text{if } H<1-1/\alpha \\
1-1/(1+ \alpha(1-H)): & \text{if } H>1-1/\alpha
\end{cases}
\end{align*} 
In other word, the statistic $d_n$ helps us to identify the rate of convergence, but it has a bias of order $1/ \log(n)$. Our decision rule is now as follows: Use Theorem  \ref{th5}(i) to perform statistical inference if 
\[
d_n >1/2 - (\log(n))^{-1+\epsilon}
\] 
for some small chosen $\epsilon>0$; otherwise use Theorem  \ref{th5}(ii). 

\begin{rem} \rm
While we can obtain fully feasible asymptotic theory if we know whether $\alpha^{-1} \in \N$ or not, we are not yet able to deduce a complete statistical method without this a priori knowledge. Possibly subsampling procedures are required to obtain empirical confidence regions that automatically adapt 
to a given setting.
\qed
\end{rem}

\subsection{High frequency setting} \label{sec4.2}

In the framework of high frequency observations the application of the empirical characteristic function might lead to suboptimal convergence rates for the estimator of $(\sigma, \alpha)$. 
This comes from the following observation. Assume that $\alpha<1$. Using the inequality $|\cos(x)-\cos(y)| \leq |x-y|^{\alpha'}$ for any $\alpha'<\alpha$ we obtain the upper bound
\begin{align*}
&|\varphi_{\text{high}}(t; \widehat{H}_{\text{high}} (p,k)_n,k)_n - 
\varphi_{\text{high}}(t; H,k)_n|  \\[1.5 ex]
& \leq  \frac{t^{\alpha'}(n^{\widehat{H}_{\text{high}} (p,k)_n-H} -1)^{\alpha'}}{n} \sum_{i=k}^n   |n^H\Delta_{i,k}^{n} X|^{\alpha'} = O_{\mathbb P} \left( (n^{-1/2} \log n)^{-\alpha'/2}\right),
\end{align*}
where the last statement follows from $\E[|\Delta_{k,k} X|^{\alpha'}]<\infty$ and the ergodic theorem. 
Since the above expression is  predominant  in the asymptotic theory and it seems hard to improve it, we obtain slow rates of convergence for the parameters $\sigma$ and $\alpha$ if we apply the same estimation procedure as in the previous section. For this reason we require a different approach in the high frequency setting. 

First of all, we give an explicit formula for the constant $m_{-p,k}= \E[|\Delta_{k,k} X|^{-p}]$, $p \in (0,1/2)$, which has been introduced in Theorem \ref{th1}. We recall that the random variable 
$\Delta_{k,k} X$ is symmetric $\alpha$-stable with scale parameter $\sigma \|h_k\|_{\alpha}$. Consequently, applying the identity \cite[Eq. (18)]{DI} we conclude that 
\[
m_{-p,k} =  \frac{(\sigma \|h_k\|_{\alpha})^{-p}}{a_{-p}} \int_{\R} \exp(-|y|^{\alpha}) |y|^{-1+p} dy
= \frac{2(\sigma \|h_k\|_{\alpha})^{-p}}{\alpha a_{-p}} \Gamma(p/\alpha),
\] 
where the last equality follows by substitution $z=y^{\alpha}$ for $y>0$. Now, we use the idea that has been originally proposed in \cite{DI} to identify the parameter $\alpha$ via power variation statistics. We consider $p,p' \in (0,1/2)$, $p \not = p'$, and observe that
\begin{align} \label{defphi}
\frac{m_{-p',k}^{p}}{m_{-p,k}^{p'}} = 
\frac{(2/\alpha)^{p-p'} a_{-p}^{p'} \Gamma(p'/\alpha)^p }{a_{-p'}^{p} \Gamma(p/\alpha)^{p'}}=: \phi_{p,p'}(\alpha). 
\end{align}
It has been shown in \cite{DI} that the mapping $\alpha \mapsto \phi_{p,p'}(\alpha)$ is invertible for any 
$p \not = p'$. Hence, we have $\alpha= \phi_{p,p'}^{-1}(m_{-p',k}^{p}/ m_{-p,k}^{p'})$. Now, assuming that we know $\alpha$ and $H$ (recall that the norm $ \|h_k\|_{\alpha}$ depends on these parameters), we can recover the scale parameter $\sigma$ via
\[
\sigma= \left( \frac{\alpha a_{-p} m_{-p,k}}{2  \Gamma(p/\alpha)} \right)^{- \frac 1p} /  \|h_k\|_{\alpha}. 
\]
Summarising the above identities we obtain the function $\overline G: (\R_+)^2 \times (0,1) 
\to \R^2$ such that  
\begin{align} \label{Gbar}
(\sigma, \alpha ) = \overline{G} \left(m_{-p,k}, m_{-p',k}, H \right).
\end{align} 
Next, we follow the same two-stage routine as in the previous section. We first compute  
$\widehat{H}_{\text{high}} (-p)_n =\widehat{H}_{\text{high}} (-p,1)_n$ with $p \in (0,1/2)$ and define the preliminary estimator of 
$\alpha$ by 
\begin{align} \label{prealphahiogh}
\widehat{\alpha}_{\text{high}}^{0} (-p,-p')_n
= \overline{G}_2\left( V_{\text{high}}(f_{-p}, \widehat{H}_{\text{high}} (-p)_n)_n, 
V_{\text{high}}(f_{-p'}, \widehat{H}_{\text{high}} (-p)_n)_n, \widehat{H}_{\text{high}} (-p)_n \right), 
\end{align} 
where the statistic $V_{\text{high}}(f_{-p}, \widehat{H}_{\text{high}} (-p)_n)_n$ refers to power variation introduced in \eqref{Vstat} with $k=1$ and with $H$ replaced by $\widehat{H}_{\text{high}} (-p)_n$.  
Now, we define 
\begin{align}  \label{khigh}
\widehat{k}_{\text{high}}= \widehat{k}_{\text{high}} (-p,-p')_n &= 2+ 
\lfloor \widehat{\alpha}_{\text{high}}^{0} (-p,-p')_n^{-1}  \rfloor 
\end{align}   
and introduce the estimator
\begin{align*} 
&\left( \widetilde{\sigma}_{\text{high}} (\widehat{k}_{\text{high}},-p,-p')_n, \widetilde{\alpha}_{\text{high}} (\widehat{k}_{\text{high}},-p,-p')_n \right)
= \overline{G}\left( V_{\text{high}}(f_{-p}, \widehat{H}_{\text{high}} (-p, \widehat{k}_{\text{high}})_n; \widehat{k}_{\text{high}})_n,  \right. \\
&\left. 
V_{\text{high}}(f_{-p'}, \widehat{H}_{\text{high}} (-p, \widehat{k}_{\text{high}})_n; \widehat{k}_{\text{high}})_n, \widehat{H}_{\text{high}} (-p, \widehat{k}_{\text{high}})_n \right).
\end{align*} 
We again require a separate treatment of the cases $\alpha^{-1} \not \in \N$ and $\alpha^{-1} \in \N$. We start with the first setting. When $H<k-1/\alpha$ we consider the statistic $W(n)^{(1)}= (W(n)^{(1)}_1, W(n)^{(1)}_2)$ associated with the power $-p$ and
\[
 k_1 = \widehat{k}_{\text{high}}, r_1=1 \quad \text{and} \quad  k_2 = \widehat{k}_{\text{high}}, r_2=2.
\]
Recall that $W(n)^{(1)} \schw W^{(1)}$ according to Theorem \ref{th1}. Now, similarly to Theorem 
\ref{th3}, we define  
\begin{align} \label{Bover}
\overline{B}_{\text{\rm high}}^{\text{\rm nor}}(-p,-p',k)&:= \nabla v_p(m_{-p,k}, 2^H m_{-p,k}) (W^{(1)})^{\star} 
\\[1.5 ex]
& 
\times \left(
\begin{array} {c}
\nabla \overline{G}_1(m_{-p,k}, m_{-p',k}, H) \left(-pm_{-p,k}, -p'm_{-p',k}, H  \right)^{\star} \\
\nabla \overline{G}_1(m_{-p,k}, m_{-p',k}, H) \left(-pm_{-p,k}, -p'm_{-p',k}, H  \right)^{\star}  \\
1
\end{array} 
\right), \nonumber
\end{align}
where the function $v_p$ has been introduced  at \eqref{vF}. Our first result is the following theorem.

\begin{theo} \label{th6}
Let $X$ be the linear fractional stable motion defined at \eqref{fLm}. Assume that $p,p' \in (0,1/2)$ 
and $\alpha^{-1} \not \in \N$. Then we obtain the central limit theorem
\begin{align*}
&\left( 
\begin{array} {c}
\sqrt{n} (\log n)^{-1} \left(\widetilde{\sigma}_{\text{\rm high}} (\widehat{k}_{\text{\rm high}},-p,-p')_n - \sigma \right) \\
\sqrt{n} (\log n)^{-1} \left( \widetilde{\alpha}_{\text{\rm high}} (\widehat{k}_{\text{\rm high}},-p,-p')_n  - \alpha \right) \\
\sqrt{n}  \left(\widehat{H}_{\text{\rm high}} (-p,\widehat{k}_{\text{\rm high}})_n - H \right) 
\end{array}
\right)
\schw \overline{B}_{\text{\rm high}}^{\text{\rm nor}}\left(-p,-p',2+ 
\lfloor \alpha^{-1}  \rfloor \right).  
\end{align*}  
\end{theo}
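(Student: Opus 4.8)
The plan is to first remove the randomness of the data-driven order $\widehat{k}_{\text{high}}$, thereby reducing the statement to a fixed-order central limit theorem controlled by Theorem \ref{multivariateclt}, and then to carry out the degeneracy analysis characteristic of the high frequency regime. Set $k_0 := 2 + \lfloor \alpha^{-1}\rfloor$. Since $-p \in (-1/2,0)$, the law of large numbers of Theorem \ref{th1}(i) applies to the negative power variation with $k=1$ and gives $\widehat{H}_{\text{high}}(-p)_n \toop H$; inserting this into \eqref{prealphahiogh} and using continuity of $\overline{G}_2$ yields $\widehat{\alpha}_{\text{high}}^{0}(-p,-p')_n \toop \alpha$. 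Because $\alpha^{-1}\notin\mathbb{N}$, the map $x\mapsto \lfloor x^{-1}\rfloor$ is continuous at $\alpha$, so \eqref{khigh} gives
\[
\mathbb{P}\left(\widehat{k}_{\text{high}} = k_0\right) \to 1.
\]
On $\{\widehat{k}_{\text{high}}=k_0\}$ every estimator coincides with its fixed-order counterpart at $k=k_0$, so by Slutsky it suffices to prove the claim with $\widehat{k}_{\text{high}}$ replaced by the deterministic value $k_0$. Crucially, $k_0 > 1+1/\alpha > H+1/\alpha$ (using $H<1$), so we are always in the normal regime $H<k_0-1/\alpha$, and the vector $W(n)^{(1)}$ built from $k_1=k_2=k_0$, $r_1=1$, $r_2=2$ and power $-p$ is governed by Theorem \ref{multivariateclt}.

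Next I treat the self-similarity parameter and isolate the degeneracy. Since the ratio statistic $R_{\text{high}}(-p,k_0)_n$ is scale invariant, the factor $n^{pH}$ cancels and no plug-in of $\widehat H$ enters it; applying Theorem \ref{multivariateclt} to $W(n)^{(1)}$ together with the delta method through $v_p$ from \eqref{vF} gives, exactly as in Theorem \ref{th3}(ii),
\[
\sqrt{n}\left(\widehat{H}_{\text{high}}(-p,k_0)_n - H\right) \schw \nabla v_p\left(m_{-p,k_0},2^H m_{-p,k_0}\right)(W^{(1)})^{\star},
\]
which is the third component of $\overline{B}_{\text{high}}^{\text{nor}}$ in \eqref{Bover}. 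For the remaining components I use the homogeneity of $f_{-p}$, i.e. the exact identity (writing $V_{-p}$ for $V_{\text{high}}(f_{-p},\widehat{H};k_0)_n$)
\[
V_{-p} = n^{-p(\widehat{H}-H)}\, V_{\text{high}}(f_{-p},H;k_0)_n ,
\]
where the last factor is centred at the true $H$ and obeys $V_{\text{high}}(f_{-p},H;k_0)_n - m_{-p,k_0} = O_{\mathbb{P}}(n^{-1/2})$ by the self-similar transfer of Theorem \ref{multivariateclt}. Expanding $n^{-p(\widehat{H}-H)} = 1 - p(\widehat{H}-H)\log n + O_{\mathbb{P}}(n^{-1}\log^2 n)$ and inserting $V_{\text{high}}(f_{-p},H;k_0)_n\toop m_{-p,k_0}$ yields
\[
\frac{\sqrt{n}}{\log n}\left(V_{-p} - m_{-p,k_0}\right) = -p\,m_{-p,k_0}\,\sqrt{n}\left(\widehat{H}-H\right) + o_{\mathbb{P}}(1),
\]
and the analogous expansion holds with $p'$ in place of $p$.

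To assemble the limit I Taylor expand $\overline{G}$ from \eqref{Gbar} around $(m_{-p,k_0},m_{-p',k_0},H)$, so that $\widetilde{\sigma}_{\text{high}}-\sigma$ equals $\partial_1\overline{G}_1(V_{-p}-m_{-p,k_0}) + \partial_2\overline{G}_1(V_{-p'}-m_{-p',k_0}) + \partial_3\overline{G}_1(\widehat{H}-H)$ up to higher order terms, and similarly for $\widetilde{\alpha}_{\text{high}}$. By the previous step the first two terms are of exact order $n^{-1/2}\log n$, whereas the $\widehat H$-slot is only $O_{\mathbb{P}}(n^{-1/2})$; hence, after multiplication by $\sqrt{n}/\log n$, the third slot vanishes and only the two power-variation slots survive, with direction vector $(-p\,m_{-p,k_0},\,-p'\,m_{-p',k_0},0)$. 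This is precisely the degeneracy already seen in Theorem \ref{th3}(ii): all three rescaled components become deterministic multiples of the single random variable $\sqrt{n}(\widehat{H}-H)$, whose limit was identified above, and collecting the coefficients reproduces $\overline{B}_{\text{high}}^{\text{nor}}(-p,-p',k_0)$. Since $\{\widehat{k}_{\text{high}}=k_0\}$ has probability tending to one, Slutsky's lemma upgrades this to the asserted joint convergence.

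I expect the main obstacle to be the degeneracy bookkeeping in the last two steps: one must verify that the intrinsic $\sqrt{n}$-fluctuations of the negative power variations at the true $H$ — which do satisfy their own central limit theorem — are genuinely negligible at the slower rate $\sqrt{n}/\log n$, while the $\log n$-amplified error of $\widehat H$ dominates, and that all these expansions hold \emph{jointly} so that the three limiting components share the same driving Gaussian vector $W^{(1)}$. The reduction of the random order $\widehat{k}_{\text{high}}$ is conceptually clean but relies essentially on the hypothesis $\alpha^{-1}\notin\mathbb{N}$, which is exactly what fails in the companion Theorem \ref{th5}.
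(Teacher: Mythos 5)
Your proposal is correct and takes essentially the same route as the paper's own (very condensed) proof: stabilisation of the random order $\widehat{k}_{\text{high}}$ via consistency of the preliminary estimator and $\alpha^{-1}\notin\N$, an expansion showing that the $\log n$-amplified error of $\widehat{H}$ dominates the intrinsic $O_{\mathbb P}(n^{-1/2})$ fluctuations of the power variations at the true $H$, and finally the delta method combined with Theorem \ref{multivariateclt}. Your exact homogeneity identity $f_{-p}(n^{\widehat H}x)=n^{-p(\widehat H-H)}f_{-p}(n^{H}x)$ is precisely how the paper's expansion (invoked ``similarly to \eqref{asyexp}'') is realised for negative powers, and your degeneracy bookkeeping with direction vector $(-p\,m_{-p,k},-p'\,m_{-p',k},0)$ matches the structure of Theorem \ref{th3}(ii); the only quibble is that consistency of the negative-power statistics should be attributed to the ergodic theorem as in Sections \ref{sec2.2} and \ref{sec4.1}, not to Theorem \ref{th1}(i), which is stated for positive powers under $H-1/\alpha>0$.
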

Next, we  treat the case $\alpha^{-1} \in \N$. For this purpose, whenever $H>k-1/\alpha$, we introduce the notation  
$\overline{B}_{\text{\rm high}}^{\text{\rm sta}}(-p,-p',k)$ to denote the random variable at \eqref{Bover}
where $W^{(1)}$ is replaced by $S^{(1)}$. We deduce the following result, which is the analogue of Theorem \ref{th5}. 

\begin{theo} \label{th7}
Let $X$ be the linear fractional stable motion defined at \eqref{fLm}. Assume that $p,p' \in (0,1/2)$ 
and $\alpha^{-1}  \in \N$. \newline
(i) (Case $H<1-1/\alpha$) Assume that $H<1-1/\alpha$. Then we obtain 
\begin{align*}
\left( 
\begin{array} {c}
\sqrt{n} (\log n)^{-1} \left(\widetilde{\sigma}_{\text{\rm high}} (\widehat{k}_{\text{\rm high}},-p,-p')_n - \sigma \right) \\
\sqrt{n} (\log n)^{-1} \left( \widetilde{\alpha}_{\text{\rm high}} (\widehat{k}_{\text{\rm high}},-p,-p')_n  - \alpha \right) \\
\sqrt{n}  \left(\widehat{H}_{\text{\rm high}} (-p,\widehat{k}_{\text{\rm high}})_n - H \right) 
\end{array}
\right)
\schw D_{\text{\rm high}}^{\text{\rm nor}},
\end{align*}
where the probability distribution $D_{\text{\rm high}}^{\text{\rm nor}}$ on $\R^3$ is given by
\begin{align*}
D_{\text{\rm high}}^{\text{\rm nor}} (\cdot) &= \mathbb{P} (\{\overline B_{\text{\rm high}}^{\text{\rm nor}} (-p, -p',2+\alpha^{-1}) 
\in \cdot\} \cap \{\overline B_{\text{\rm high}}^{\text{\rm nor}} \left(-p, -p',1 \right)_2<0 \}) \\
&+ \mathbb{P} (\{ \overline B_{\text{\rm high}}^{\text{\rm nor}} (-p,-p', 1+\alpha^{-1}) 
\in \cdot\} \cap \{\overline B_{\text{\rm high}}^{\text{\rm nor}} \left(-p, -p',1 \right)_2>0 \}).
\end{align*}
(ii) (Case $H>1-1/\alpha$) Assume that $H>1-1/\alpha$. Then we obtain  
\begin{align*}
\left( 
\begin{array} {c}
\sqrt{n} (\log n)^{-1} \left(\widetilde{\sigma}_{\text{\rm high}} (\widehat{k}_{\text{\rm high}},-p,-p')_n - \sigma \right) \\
\sqrt{n} (\log n)^{-1} \left( \widetilde{\alpha}_{\text{\rm high}} (\widehat{k}_{\text{\rm high}},-p,-p')_n  - \alpha \right) \\
\sqrt{n}  \left(\widehat{H}_{\text{\rm high}} (-p,\widehat{k}_{\text{\rm high}})_n - H \right) 
\end{array}
\right)
\schw D_{\text{\rm high}}^{\text{\rm sta}},
\end{align*}
where the probability distribution $D_{\text{\rm high}}^{\text{\rm sta}}$ on $\R^3$ is given by
\begin{align*}
D_{\text{\rm high}}^{\text{\rm sta}} (\cdot) &= \mathbb P(\overline B_{\text{\rm high}}^{\text{\rm sta}} \left(-p,-p',1 \right)_2<0)
\mathbb{P} (\overline B_{\text{\rm high}}^{\text{\rm nor}} (-p,-p', 2+\alpha^{-1}) 
\in \cdot) \\
&+ \mathbb P(\overline B_{\text{\rm high}}^{\text{\rm sta}} \left(-p,-p',1 \right)_2>0) \mathbb{P} (\overline B_{\text{\rm high}}^{\text{\rm nor}} (-p, -p',1+\alpha^{-1}) 
\in \cdot).
\end{align*}
\end{theo}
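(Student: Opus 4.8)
The plan is to mirror the proof of Theorem \ref{th5}, transposed to the high frequency setting and decorated with the extra $(\log n)^{-1}$ factors already present in Theorem \ref{th6}. The starting point is that, since $\alpha^{-1} \in \N$ and the preliminary estimator $\widehat\alpha^0_{\text{high}}(-p,-p')_n$ from \eqref{prealphahiogh} is consistent, the random index $\widehat k_{\text{high}}$ defined at \eqref{khigh} takes, with probability tending to one, only the two values $1+\alpha^{-1}$ and $2+\alpha^{-1}$, and the two selection events satisfy
\[
\{\widehat k_{\text{high}} = 2+\alpha^{-1}\} = \{\widehat\alpha^0_{\text{high}}(-p,-p')_n < \alpha\}, \quad \{\widehat k_{\text{high}} = 1+\alpha^{-1}\} = \{\widehat\alpha^0_{\text{high}}(-p,-p')_n > \alpha\}
\]
up to a set of asymptotically vanishing probability (here I use that $\widehat\alpha^0_{\text{high}}{}^{-1}\to\alpha^{-1}$, so the floor in \eqref{khigh} equals $\alpha^{-1}$ or $\alpha^{-1}-1$ according to the sign of $\widehat\alpha^0_{\text{high}}-\alpha$). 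Consequently the law of the final estimator decomposes as a sum over these two events, and the whole problem reduces to controlling \emph{jointly} the sign of $\widehat\alpha^0_{\text{high}}-\alpha$ produced by the $k=1$ preliminary step and the feasible estimator $(\widetilde\sigma_{\text{high}},\widetilde\alpha_{\text{high}},\widehat H_{\text{high}})$ computed with the deterministic order $k^*\in\{1+\alpha^{-1},2+\alpha^{-1}\}$ on the relevant event.

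First I would record the asymptotics of the preliminary estimator at $k=1$, which is the high frequency analogue of Proposition \ref{prop1}. Applying the delta method to $\overline G_2$ together with Theorem \ref{th1}, the rescaled quantity $a_n(\widehat\alpha^0_{\text{high}}-\alpha)$ converges in law to $\overline B_{\text{high}}^{\text{nor}}(-p,-p',1)_2$ when $H<1-1/\alpha$ (the normal regime $1>H+1/\alpha$ for $k=1$, so $a_n=\sqrt n/\log n$) and to $\overline B_{\text{high}}^{\text{sta}}(-p,-p',1)_2$ when $H>1-1/\alpha$ (the stable regime, with $a_n=n^{1-1/(1+\alpha(1-H))}$). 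Since each limiting variable possesses a density, $\mathbb P(\widehat\alpha^0_{\text{high}}<\alpha)$ converges to the probability that the corresponding limit is negative, and the boundary $\{\widehat\alpha^0_{\text{high}}=\alpha\}$ is a continuity set of negligible mass.

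Next, on the event $\{\widehat k_{\text{high}}=k^*\}$ the final estimator uses the fixed order $k^*\geq 2 > H+1/\alpha$, hence always lies in the normal regime and, by Theorem \ref{th6}, converges to $\overline B_{\text{high}}^{\text{nor}}(-p,-p',k^*)$ as defined at \eqref{Bover}. The crux is to upgrade this to a joint statement with the sign variable of the preliminary step. I would obtain it from a single application of Theorem \ref{multivariateclt} to a vector collecting the power variations $V_{\text{high}}(f_{-p};k,r)_n$ at both $k=1$ (feeding $\widehat\alpha^0_{\text{high}}$) and $k=k^*$ (feeding the final estimator), followed by the respective delta maps. In case (i), $H<1-1/\alpha$, all these statistics lie in the normal domain, so $(\text{sign part},\text{final part})$ converges to a jointly Gaussian limit built from $W^{(1)}$; the continuous mapping theorem applied to the continuity sets $\{\cdot<0\}$ and $\{\cdot>0\}$ then produces the two intersection probabilities in $D_{\text{high}}^{\text{nor}}$. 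In case (ii), $H>1-1/\alpha$, the $k=1$ block converges at the slow stable rate to a functional of $S^{(1)}$ while the $k=k^*$ block converges at the faster normal rate to a functional of $W^{(1)}$; since $W$ and $S$ are independent by Theorem \ref{multivariateclt}, the sign variable $\overline B_{\text{high}}^{\text{sta}}(-p,-p',1)_2$ and the conditional limit $\overline B_{\text{high}}^{\text{nor}}(-p,-p',k^*)$ are asymptotically independent, which is exactly what yields the product form of $D_{\text{high}}^{\text{sta}}$.

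The main obstacle, I expect, is this joint convergence with two distinct scaling rates in case (ii): one must verify that rescaling the $k=1$ block at the stable rate and the $k=k^*$ block at the faster normal rate simultaneously still gives a genuine joint limit, and that the asymptotic independence of the slow stable component from the fast normal component is inherited precisely from the independence of $S$ and $W$ in Theorem \ref{multivariateclt}. A secondary nuisance is the careful bookkeeping of the $(\log n)^{-1}$ factors and of the plug-in of $\widehat H_{\text{high}}$ inside the scaling $n^{\widehat H_{\text{high}}}$ of the high frequency power variations; this is what renders the limit degenerate, driven by the single $H$-estimation error, and I must check that it does not interfere with the selection mechanism.
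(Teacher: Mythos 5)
Your proposal follows essentially the paper's own route: the paper proves Theorem \ref{th7} simply by transposing the proof of Theorem \ref{th5} to the high frequency setting, which is precisely what you do — the decomposition over the selection events $\{\widehat{k}_{\text{high}}=2+\alpha^{-1}\}=\{\widehat{\alpha}^{0}_{\text{high}}<\alpha\}$ and $\{\widehat{k}_{\text{high}}=1+\alpha^{-1}\}=\{\widehat{\alpha}^{0}_{\text{high}}>\alpha\}$ (up to negligible events), the joint convergence of the $k=1$ block and the $k^\star$ block via a single application of Theorem \ref{multivariateclt} (jointly Gaussian in case (i); mixed stable/normal rates with the independence of $W$ and $S$ yielding the product form in case (ii)), and the plug-in expansions of Theorem \ref{th6} handling the $\log n$ factors. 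The only slip is cosmetic: in the stable regime the preliminary rate should also carry a $(\log n)^{-1}$ factor, i.e. $a_n=n^{1-1/(1+\alpha(1-H))}(\log n)^{-1}$, since the dominant error of $\widehat{\alpha}^{0}_{\text{high}}$ comes from plugging $\widehat{H}$ into $n^{\widehat{H}}$; this is harmless for your argument because only the sign of $\widehat{\alpha}^{0}_{\text{high}}-\alpha$ enters the selection mechanism.
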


\begin{rem} \rm
We may use a similar decision rule as proposed in Section \ref{sec4.1} to figure out whether part (i) or (ii) of Theorem \ref{th7} is applicable. Let $p_1,\ldots, p_4 \in (0,1/2)$ be distinct  real numbers. As in the previous subsection we have that 
\begin{align*}
\overline{d}_n:=-\frac{\log \left| \widehat{\alpha}_{\text{high}}^{0} (-p_1,-p_2)_n - \widehat{\alpha}_{\text{high}}^{0} (-p_3,-p_4)_n \right|}{\log (n)} \toop 
\begin{cases}
1/2: & \text{if } H<1-1/\alpha \\
1-1/(1+ \alpha(1-H)): & \text{if } H>1-1/\alpha
\end{cases}
\end{align*} 
We thus use Theorem \ref{th7}(i) to perform statistical inference when 
\[
\overline d_n >1/2 - (\log(n))^{-1+\epsilon}.
\] 
\qed
\end{rem}

\section{A simulation study} \label{sec5}
\setcounter{equation}{0}
\renewcommand{\theequation}{\thesection.\arabic{equation}}

In this section we demonstrate the finite sample performance of our estimators based upon the theoretical results of Theorems \ref{th3}, \ref{th4} and \ref{th6}, where the latter two correspond to the setting $\alpha^{-1} \not \in \N$ (we dispense with the numerical analysis associated with Theorems \ref{th5} and \ref{th7}). We simulate high and low frequency observations of the linear fractional stable motion defined at \eqref{fLm} for $n=100$, $1.000$ and $10.000$. Whenever we use the statistics $V_{\text{high}}(f; k,r)_n$ and $V_{\text{low}}(f; k,r)_n$ introduced in \eqref{Vstat}, we multiply them by $(n-rk+1)/n$ to account for the actual number of summands. Throughout the section we set $t_1=1$ and $t_2=2$. We use 5000 repetitions  to uncover the finite sample properties of our estimators. The asymptotic variances appearing in central limit theorems are rather hard to compute numerically, so we perform Monte Carlo simulations to estimate them. For each generated sample path we compute $(\widehat{\sigma}, \widehat{\alpha}, \widehat{H})$. Then  we calculate sample mean and standard deviation, which are used to construct empirical density functions.

\begin{table}[h!]
\centering
\caption{\footnotesize Bias/standard deviation of the estimators $(\widehat{\sigma}_{\text{\rm low}}, \widehat{\alpha}_{\text{\rm low}}, \widehat{H}_{\text{\rm low}})$ and $(\widehat{\sigma}_{\text{\rm high}}, \widehat{\alpha}_{\text{\rm high}}, \widehat{H}_{\text{\rm high}})$. We use $p=0.4$ and $k=2$, and the true parameter is $(\sigma, \alpha, H) = (0.3,1.8,0.8)$. } 
\vspace{0.2 cm}
\begin{tabular}{c|ccc|ccc|}
$n$&   $\widehat{\sigma}_{\text{\rm low}}$ 	& $\widehat{\alpha}_{\text{\rm low}}$ 	& 
$\widehat{H}_{\text{\rm low}}$ 	& $\widehat{\sigma}_{\text{\rm high}}$ 	& $\widehat{\alpha}_{\text{\rm high}}$ 	& 
$\widehat{H}_{\text{\rm high}}$\\[1mm]
\hline\hline
100	    & -0.024/0.06	& -0.038/0.18     & -0.05/0.12    	&  0.06/0.18	    & -0.07/0.2	      &  0.02/0.10 \\
1000	& -0.0008/0.02	&  0.012/0.068	& -0.012/0.05	    & -0.001/0.12	&  0.015/0.07	  & -0.009/0.05  \\
10000	&  0.00014/0.006	&  0.0005/0.022	& -0.005/0.016	    & -0.010/0.05	&  0.001/0.022   & -0.005/0.016  \\
\end{tabular}
\label{table1}
\end{table}

\begin{figure}[h!]
	\centering
	\includegraphics[width=1\textwidth]{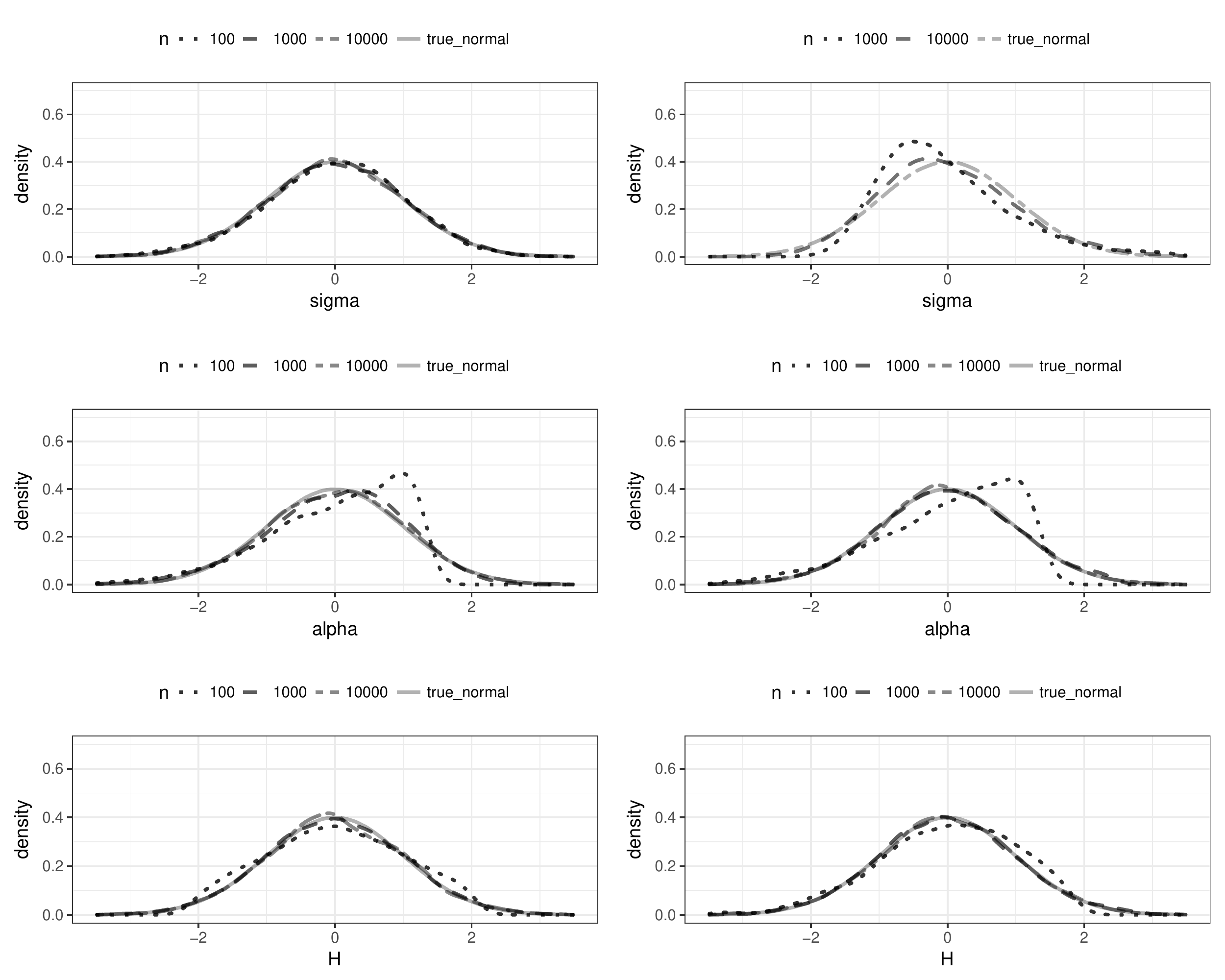}
	\caption{ \footnotesize
	        Empirical pdfs of 
	        $(\widehat{\sigma}, \widehat{\alpha}, \widehat{H})$ in high and low frequency settings. 
	        The right column corresponds to the high frequency case and the left one to the low frequency case. 
	        The true parameter is $(\sigma, \alpha, H) = (0.3,1.8,0.8)$, $k=2$, $p=0.4$.
            }
	\label{dist_3_1_HL}
\end{figure}

We begin with the discussion  of Theorem \ref{th3}. Table \ref{table1} reports the bias and the standard deviation of the estimator 
of $(\sigma, \alpha, H)=(0.3,1.8,0.8)$ in high and low frequency settings, where we use the power $p=0.4$ and the order $k=2$. We observe
that our estimators exhibit a rather convincing finite sample performance in both settings. As expected from the theoretical statements of Theorem \ref{th3}, the estimators of the self-similarity parameter $H$  exhibit similar finite sample properties in high and low frequency settings, while the
performance of the low frequency estimators for the parameters $\sigma$ and $\alpha$ is better than in the high frequency case. This is obviously a consequence of a slightly slower convergence rate in the high frequency setting. 
Figure \ref{dist_3_1_HL} plots the empirical densities
of the standardised estimators from Theorem \ref{th3} in comparison to the density of the standard normal distribution. As 
mentioned earlier we use Monte Carlo simulations to estimate the theoretical variances. 
We again observe a very good performance of estimators of the parameter $H$, while the numerical results for the
estimators of $\sigma$ and $\alpha$ are better in the low frequency case.

Now, we turn our attention to the low frequency estimation discussed in Theorem \ref{th4}. We use the power $p=-0.4$ and consider the true 
parameter $(\sigma, \alpha, H)=(0.3,1.8,0.8)$ and $(\sigma, \alpha, H)=(0.3,0.8,0.8)$. Observe that the first case corresponds to the setting of Theorem \ref{th3} and 
the second parameter corresponds
to the discontinuous setting. The estimated order $\widehat{k}_{\text{low}}$ is computed via \eqref{kn}.  Table \ref{table2} displays the bias and
standard deviation in the case $(\sigma, \alpha, H)=(0.3,1.8,0.8)$, while Table \ref{table3} 
demonstrates the numerical results   in the case $(\sigma, \alpha, H)=(0.3,0.8,0.8)$.

\begin{table}[h!]
\centering
\caption{\footnotesize 
Bias/standard deviation of the estimator $(\widetilde{\sigma}_{\text{\rm low}}, \widetilde{\alpha}_{\text{\rm low}}, \widetilde{H}_{\text{\rm low}})$. Here $p=-0.4$, $\widehat{k}_{\text{low}}$ is computed
from  \eqref{kn}  and $(\sigma, \alpha, H) = (0.3,1.8,0.8)$.} 
\vspace{0.2 cm}
\begin{tabular}{c|ccc|}
$n$&   $\widetilde{\sigma}_{\text{\rm low}}$ 	& $\widetilde{\alpha}_{\text{\rm low}}$ 	& 
$\widetilde{H}_{\text{\rm low}}$ 	\\[1mm]
\hline\hline
100	    & -0.05/0.09	& -0.031/0.18    & -0.12/0.23  \\
1000	& -0.004/0.04	& 0.01/0.068    & -0.018/0.12  \\
10000	& 0.0003/0.015	& 0.001/0.022	& -0.003/0.05  \\
\end{tabular}
\label{table2}
\end{table}

\begin{table}[h!]
	\centering
	\caption{\footnotesize Bias/standard deviation of the estimator $(\widetilde{\sigma}_{\text{\rm low}}, \widetilde{\alpha}_{\text{\rm low}}, \widetilde{H}_{\text{\rm low}})$. Here $p=-0.4$, $\widehat{k}_{\text{low}}$ is computed
from  \eqref{kn}  and $(\sigma, \alpha, H) = (0.3,0.8,0.8)$.} 
	\vspace{0.2 cm}
	\begin{tabular}{c|ccc|}
		$n$&   $\widetilde{\sigma}_{\text{\rm low}}$ 	& $\widetilde{\alpha}_{\text{\rm low}}$ 	& 
		$\widetilde{H}_{\text{\rm low}}$ 	\\[1mm]
		\hline\hline
		100	    & -0.06/0.31	& -0.003/0.41     & -0.15/0.24  \\
		1000	& -0.05/0.27	& -0.08/0.31     & 0.003/0.13  \\
		10000	&  0.03/0.26	&  0.008/0.27     & 0.04/0.05  \\
	\end{tabular}
	\label{table3}
\end{table}

\noindent Comparing the simulation results of Theorems  \ref{th3} and \ref{th4}, we see that the finite sample performance of estimators $\sigma$ and $H$ in 
Theorem \ref{th4} is inferior. This is not really surprising, since the methodology of Theorem \ref{th4} requires preliminary estimation of $\alpha$ and $k$, and hence leads to an accumulation of errors. On the other hand, the estimator of $\alpha$ is not as sensitive to preliminary estimation. Furthermore, in the setting of a fractional Brownian motion it is well known that low values of the parameter $k$ give more efficient estimators. We conjecture that a similar effect appears for linear fractional stable motions. This would explain the superiority of the results 
in Table  \ref{table2}  compared to those in Table  \ref{table3}, since $\lfloor \alpha^{-1} \rfloor +2 = 2$ in the first setting while  
$\lfloor \alpha^{-1} \rfloor +2 = 3$ in the second setting. Figures \ref{dist_4_1_d_pos} and \ref{dist_4_1_d_neg} show the empirical density
functions, where the theoretical variances have been estimated via a Monte Carlo simulations.  They confirm the better performance of the 
estimators in the continuous setting  $(\sigma, \alpha, H)=(0.3,1.8,0.8)$. We also observe that the estimator of the parameter $\sigma$ exhibits 
the worst finite sample properties in the setting $(\sigma, \alpha, H)=(0.3,0.8,0.8)$.

\begin{figure}[h!]
	\centering
	\includegraphics[width=1\textwidth]{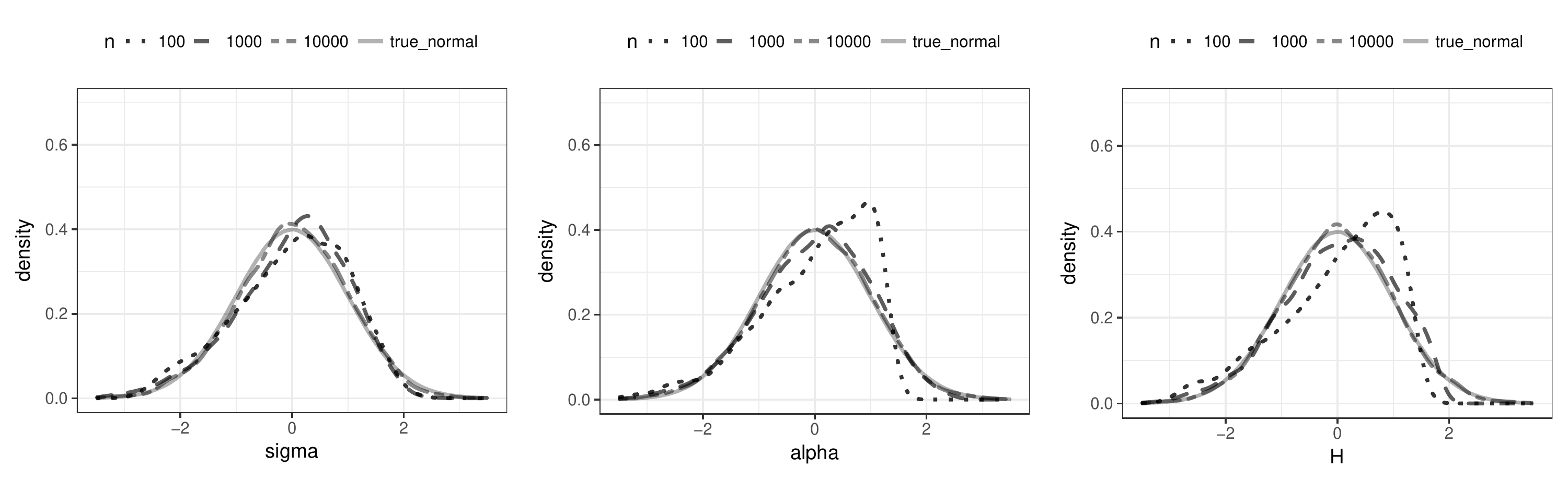}
	\caption{ \footnotesize
		Empirical pdfs of 
		$(\widetilde{\sigma}_{\text{\rm low}}, \widetilde{\alpha}_{\text{\rm low}}, \widetilde{H}_{\text{\rm low}})$. Here $(\sigma, \alpha, H) = (0.3,1.8,0.8)$ and $p=-0.4$.
	}
	\label{dist_4_1_d_pos}
\end{figure}

\begin{figure}[h!]
	\centering
	\includegraphics[width=1\textwidth]{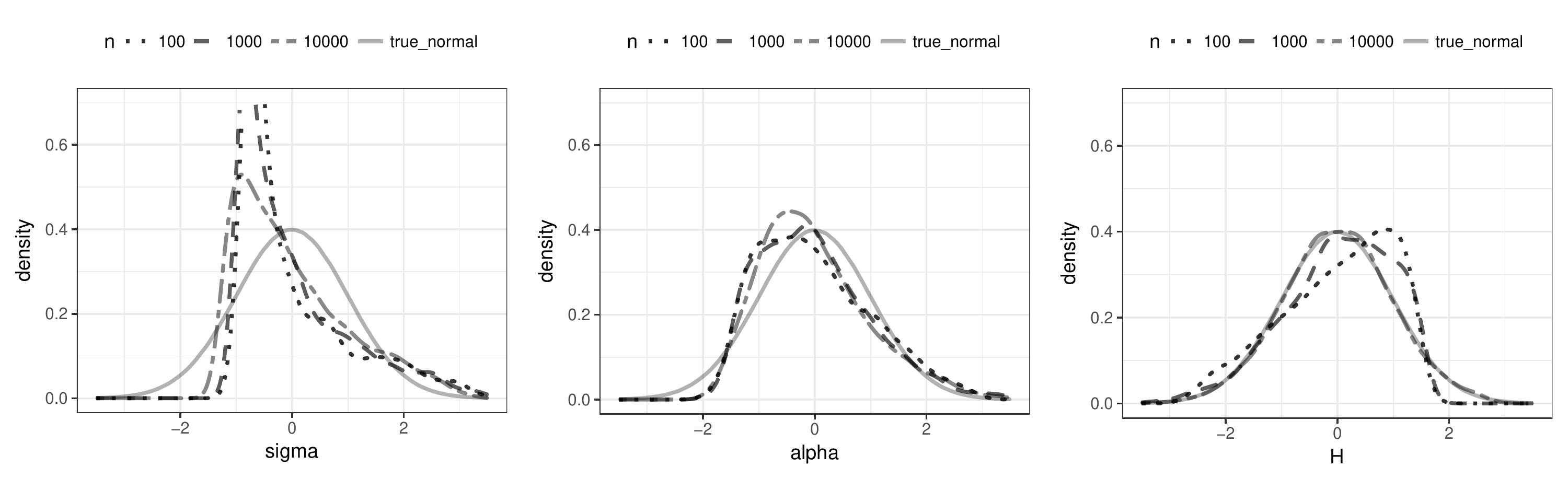}
	\caption{ \footnotesize
		Empirical pdfs of 
		$(\widetilde{\sigma}_{\text{\rm low}}, \widetilde{\alpha}_{\text{\rm low}}, \widetilde{H}_{\text{\rm low}})$. Here 
		$(\sigma, \alpha, H) = (0.3,0.8,0.8)$ and $p=-0.4$.
	}
	\label{dist_4_1_d_neg}
\end{figure}

Finally, let us discuss the finite sample performance of the high frequency estimators from Theorem \ref{th6}.  We again consider two parameter settings $(\sigma, \alpha, H)=(0.3,1.8,0.8)$ and $(\sigma, \alpha, H)=(0.3,0.8,0.8)$, and we use $p=-0.4$ and $p'=-0.2$. 
The estimated order $\widehat{k}_{\text{high}}$ is computed via \eqref{khigh}. Tables \ref{table4} and \ref{table5} display the biases and standard deviations in both parameter settings. We observe that the estimators of the parameter $\sigma$ have the worst performance and we only obtain reasonable results for $n=10.000$.  Similar conclusions can be drawn from Figures \ref{dist_4_5_d_pos} and \ref{dist_4_5_d_neg} that plot the empirical density functions. The bad performance of the estimator of $\sigma$ in Theorem \ref{th6} is explained by the fact that we not only require a preliminary estimation step for our procedure, but we also need to estimate the parameters $H$ and $\alpha$ first to obtain an estimator of $\sigma$. This leads to accumulation of finite sample errors, which results in large bias and variance for small $n$.  To further highlight this issue, we have plotted the empirical densities for the estimators of $\sigma$ from Theorems \ref{th4} and \ref{th6}  in Figure \ref{last} 
in the setting $(\sigma, \alpha, H)=(0.3,0.8,0.8)$ where the parameter $(\alpha,H)$ is assumed to be known. We observe a much better finite sample performance, which confirms that the bad finite sample properties of the estimator of $\sigma$ are largely due to preliminary estimation 
of  $(\alpha,H)$.

\newpage

\begin{table}[h!]
	\centering
	\caption{\footnotesize Bias/standard deviation of the estimator $(\widetilde{\sigma}_{\text{\rm high}}, \widetilde{\alpha}_{\text{\rm high}}, \widetilde{H}_{\text{\rm high}})$. Here $p=-0.4, p'=-0.2$ and  $(\sigma, \alpha, H) = (0.3,1.8,0.8)$.} 
	\vspace{0.2 cm}
	\begin{tabular}{c|ccc|}
		$n$&   $\widetilde{\sigma}_{\text{\rm high}}$ 	& $\widetilde{\alpha}_{\text{\rm high}}$ 	& 
		$\widetilde{H}_{\text{\rm high}}$ 	\\[1mm]
		\hline\hline
		100	    & 60/1443    	& -0.02/0.77   & 0.23/0.33  \\
		1000	& 0.18/0.82	    & 0.19/0.67     & 0.02/0.13  \\
		10000	& -0.003/0.17	& 0.052/0.26    & -0.003/0.05  \\
	\end{tabular}
	\label{table4}
\end{table}

\begin{table}[h!] 
	\centering
	\caption{\footnotesize Bias/standard deviation of the estimator $(\widetilde{\sigma}_{\text{\rm high}}, \widetilde{\alpha}_{\text{\rm high}}, \widetilde{H}_{\text{\rm high}})$. Here $p=-0.4, p'=-0.2$ and $(\sigma, \alpha, H) = (0.3,0.8,0.8)$.} 
	\vspace{0.2 cm}
	\begin{tabular}{c|ccc|}
		$n$&   $\widetilde{\sigma}_{\text{\rm low}}$ 	& $\widetilde{\alpha}_{\text{\rm low}}$ 	& 
		$\widetilde{H}_{\text{\rm low}}$ 	\\[1mm]
		\hline\hline
		100	    & 16/341 	    & 0.19/0.37     & 0.13/0.4  \\
		1000	& 0.103/1	    & 0.02/0.09     & 0.06/0.16  \\
		10000	& -0.11/0.12  	& 0.003/0.04    & 0.04/0.06  \\
	\end{tabular}
	\label{table5}
\end{table}


\begin{figure}[h!]
	\centering
	\includegraphics[width=1\textwidth]{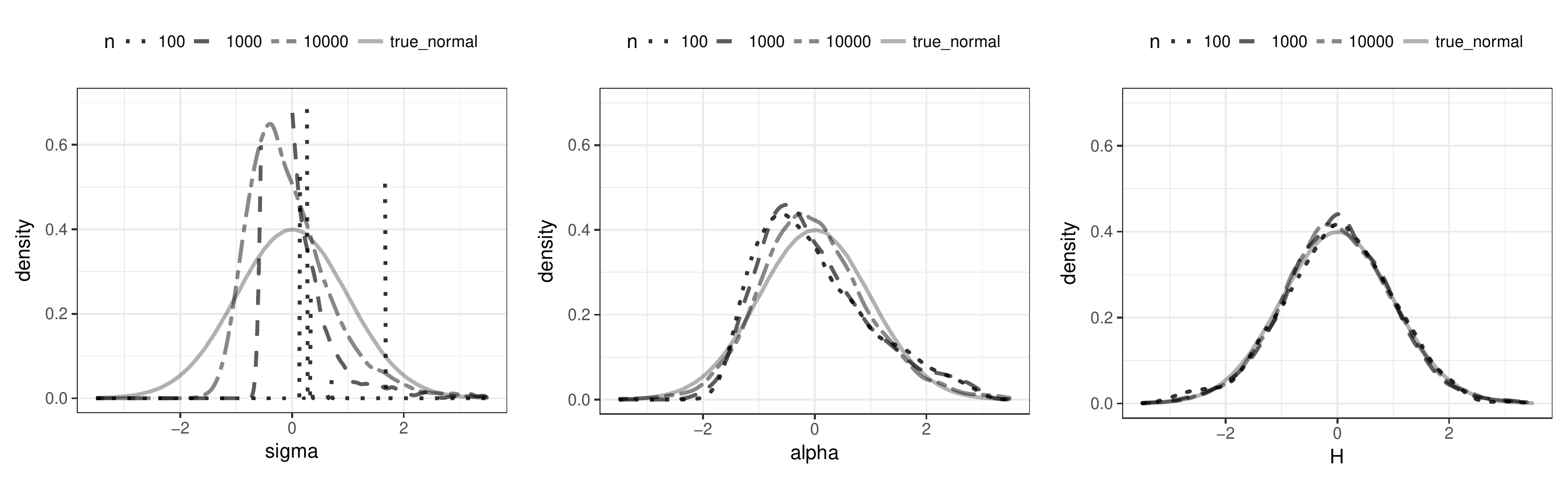}
	\caption{ \footnotesize
		Empirical pdfs of 
		$(\widetilde{\sigma}_{\text{\rm high}}, \widetilde{\alpha}_{\text{\rm high}}, \widetilde{H}_{\text{\rm high}})$. Here $p=-0.4, p'=-0.2$ and  $(\sigma, \alpha, H) = (0.3,1.8,0.8)$.
	}
	\label{dist_4_5_d_pos}
\end{figure}

\begin{figure}[h!]
	\centering
	\includegraphics[width=1\textwidth]{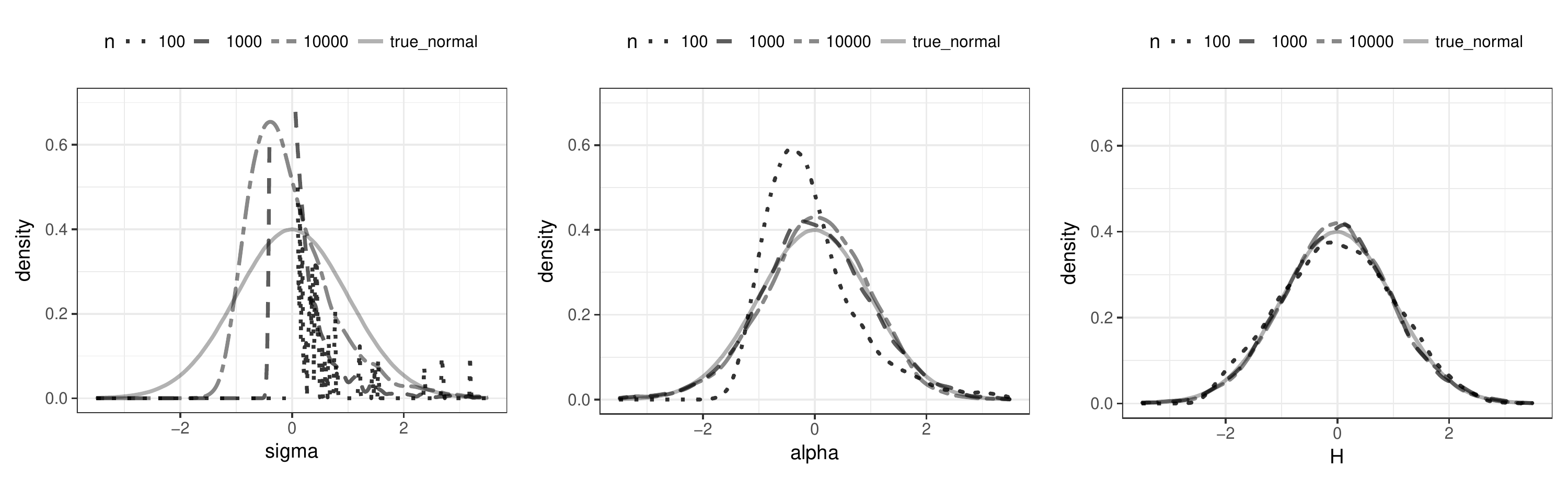}
	\caption{ \footnotesize
		Empirical pdfs of 
		$(\widetilde{\sigma}_{\text{\rm high}}, \widetilde{\alpha}_{\text{\rm high}}, \widetilde{H}_{\text{\rm high}})$. 
		Here $p=-0.4, p'=-0.2$ and $(\sigma, \alpha, H) = (0.3,0.8,0.8)$.
	}
	\label{dist_4_5_d_neg}
\end{figure}

\begin{figure}[h!]
  \centering
  \begin{subfigure}[b]{0.45\linewidth}
    \includegraphics[width=\linewidth]{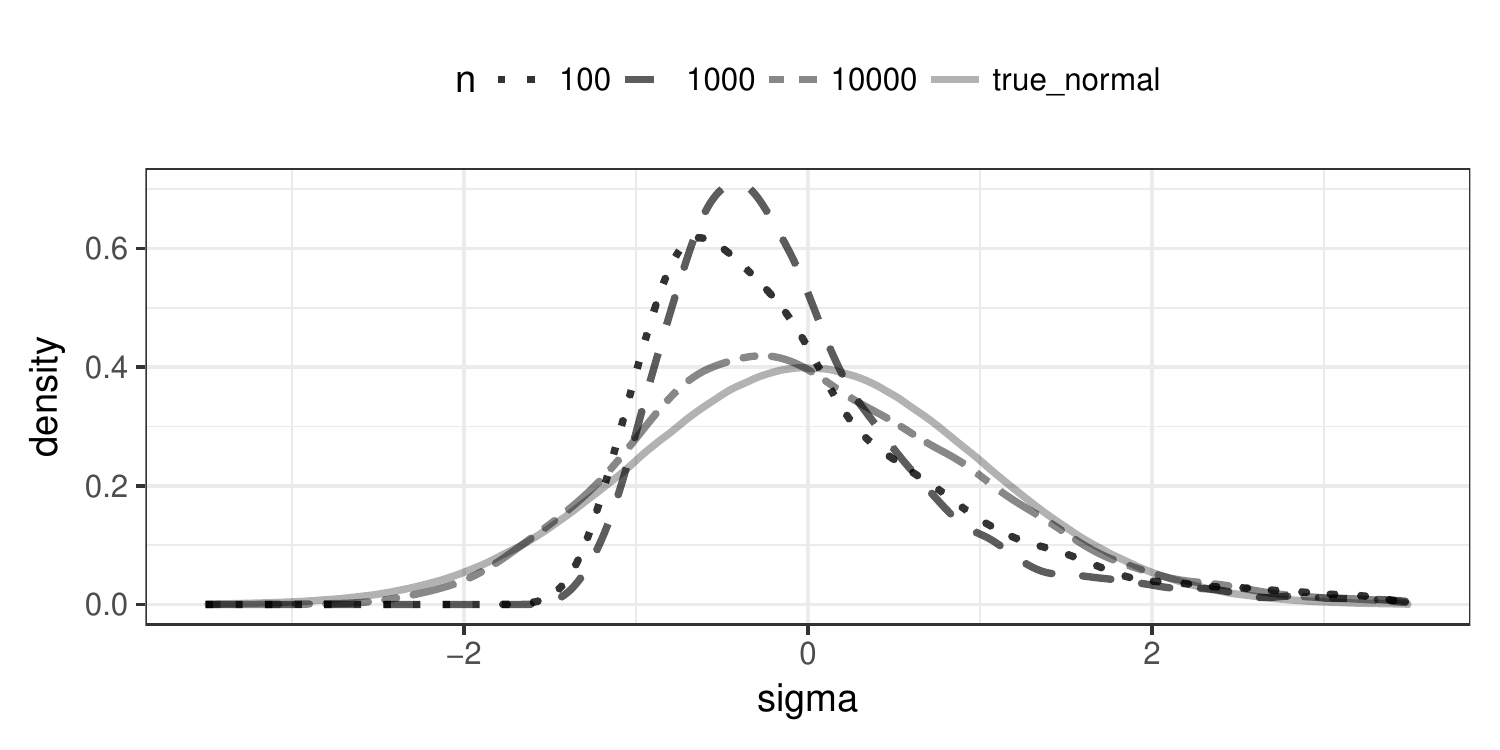}
  \end{subfigure}
  \begin{subfigure}[b]{0.45\linewidth}
    \includegraphics[width=\linewidth]{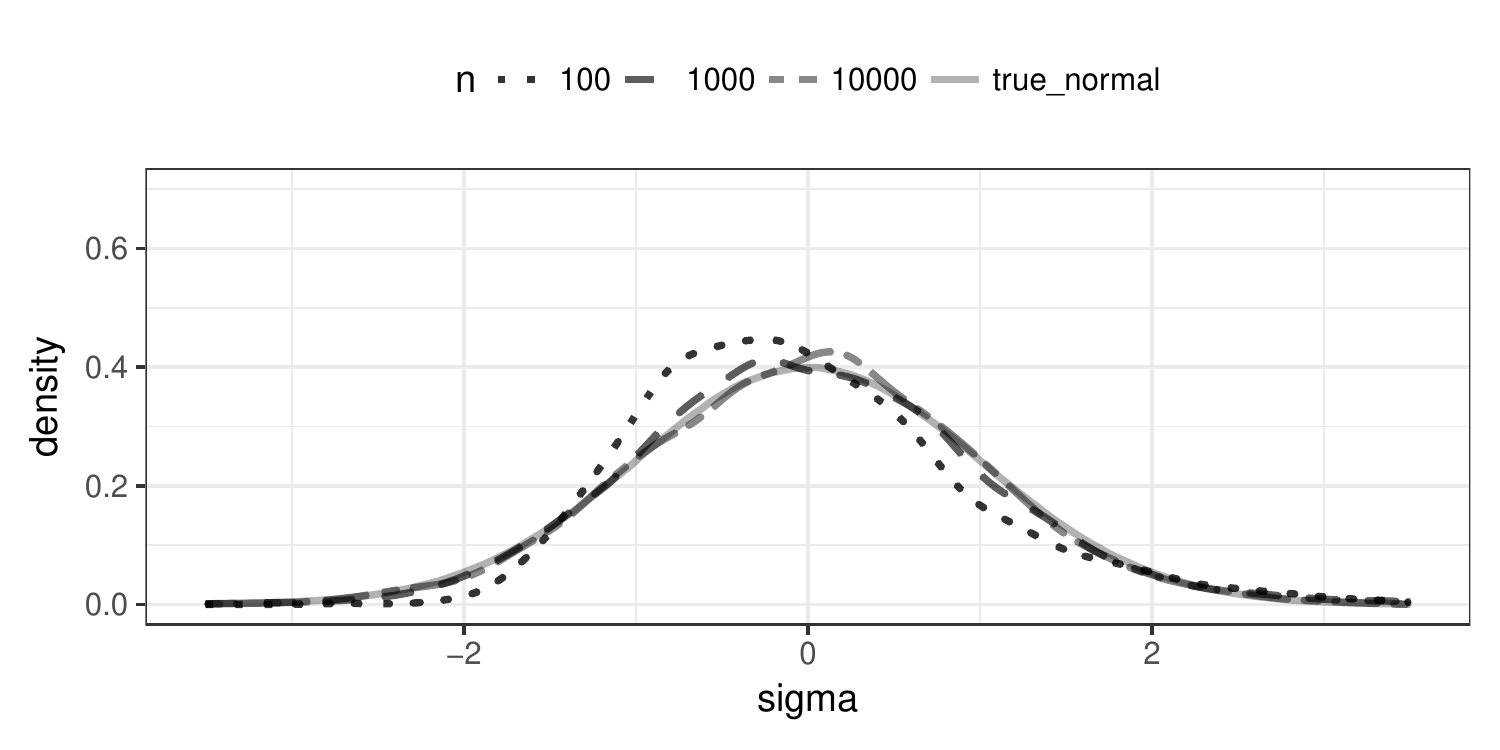}
     \end{subfigure}
 \caption{\footnotesize Empirical pdfs for $\sigma$ (Left=Theorem \ref{th4}, Right=Theorem \ref{th6}) 
 when the parameter $(\alpha,H)= (0.8,0.8)$ is known. Here $\sigma=0.3$, $p=-0.4, p'=-0.2$ and $k=3$.}
\label{last}
\end{figure}

\subsection*{Acknowledgment}
The authors acknowledge financial support from the project 
``Ambit fields: probabilistic properties and statistical inference'' funded by Villum Fonden 
and from
CREATES funded by the Danish
National Research Foundation.

\newpage

\section{Proofs} \label{sec6}
\setcounter{equation}{0}
\renewcommand{\theequation}{\thesection.\arabic{equation}}

In this section we denote all positive constants by $C$ although they may change from line to line.

\subsection{Preliminaries} \label{sec6.1}
Here we will show some technical results, which are necessary to prove the main theorems. We start with the following lemma that is a straightforward consequence of Taylor expansion. 

\begin{lem} \label{lem1}
Let $h_{k,r}$ be defined as in \eqref{def-h}. Then it holds that 
\begin{align*}
|h_{k,r}(x)| \leq C\left( x^{H-1/\alpha} 1_{(0,rk+1]}(x) + x^{H-k-1/\alpha} 1_{(rk+1,\infty)}(x)\right).
\end{align*}
Furthermore, the function $|h_{k,r}|$ is strictly decreasing on $(rk+1,\infty)$. 
\end{lem}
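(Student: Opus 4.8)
The plan is to recognise $h_{k,r}$ as a $k$-th order finite difference of the power function and to split the domain at the point $rk$, beyond which all $k+1$ terms in \eqref{def-h} are simultaneously active. Writing $\beta=H-1/\alpha$ for brevity, the heart of the argument is the tail region $x>rk$, where the finite-difference structure simultaneously produces the decay rate $x^{\beta-k}$ and the monotonicity; the bounded region $(0,rk+1]$ is then dispatched by a cruder estimate.

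First I would treat $x>rk$. With $\psi(s)=(x-s)^{\beta}$ one has the algebraic identity $\sum_{j=0}^k(-1)^j\binom kj\psi(rj)=(-1)^k\Delta_r^k\psi(0)$, where $\Delta_r$ denotes the forward difference with step $r$. Iterating $\Delta_r\psi(s)=\int_0^r\psi'(s+u)\,du$ a total of $k$ times and using $\psi^{(k)}(s)=(-1)^k\big(\prod_{i=0}^{k-1}(\beta-i)\big)(x-s)^{\beta-k}$ gives the representation
\begin{align*}
h_{k,r}(x)=\Big(\prod_{i=0}^{k-1}(\beta-i)\Big)\int_{[0,r]^k}\big(x-(t_1+\cdots+t_k)\big)^{\beta-k}\,dt_1\cdots dt_k,
\end{align*}
valid for $x>rk$ since then every argument satisfies $x-(t_1+\cdots+t_k)\ge x-rk>0$, where $\psi$ is smooth. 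Because $\beta<H<1\le k$ forces $\beta-k<0$, the integrand is maximised at $t_1=\cdots=t_k=r$, so it is bounded by $(x-rk)^{\beta-k}$ and the integral by $r^k(x-rk)^{\beta-k}$. On $(rk+1,\infty)$ one checks $x-rk>x/(rk+1)$, whence $(x-rk)^{\beta-k}\le(rk+1)^{k-\beta}x^{\beta-k}$, which yields the stated bound $|h_{k,r}(x)|\le C\,x^{\beta-k}$. For the monotonicity, note that for each fixed $(t_1,\dots,t_k)$ the map $x\mapsto(x-\sum_i t_i)^{\beta-k}$ is strictly decreasing on $(rk,\infty)$; hence so is the integral, and therefore $|h_{k,r}|$, on $(rk+1,\infty)\subset(rk,\infty)$.

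It remains to bound $|h_{k,r}|$ on $(0,rk+1]$. Here I would try the triangle inequality: only the terms $j=0,\dots,m$ are active when $x\in(rm,r(m+1)]$, and when $\beta\ge0$ each obeys $(x-rj)_+^{\beta}\le x^{\beta}$, so that $|h_{k,r}(x)|\le\sum_{j=0}^k\binom kj x^{\beta}=2^k x^{\beta}$, which is exactly the asserted estimate. I expect this near region to be the main obstacle: when $\beta<0$ the individual summands $(x-rj)_+^{\beta}$ become singular at the breakpoints $x=rj$, so the naive triangle bound no longer immediately reproduces $x^{\beta}$, and securing a single constant $C$ on the whole of $(0,rk+1]$ requires a careful local analysis of the competing terms near each breakpoint. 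The tail estimate and the monotonicity, by contrast, follow cleanly from the integral representation above, and it is precisely that representation (with $\sum_i t_i$ negligible as $x\to\infty$) that also recovers the asymptotics $h_{k,r}(x)\sim r^k\prod_{i=0}^{k-1}(\beta-i)\,x^{\beta-k}$ used in \eqref{cj1}.
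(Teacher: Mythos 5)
Your treatment of the tail region and of the monotonicity is correct, and it is exactly the route the paper has in mind: the paper offers no written proof of this lemma at all, introducing it only as ``a straightforward consequence of Taylor expansion'', and your iterated finite-difference representation $h_{k,r}(x)=\bigl(\prod_{i=0}^{k-1}(\beta-i)\bigr)\int_{[0,r]^k}\bigl(x-t_1-\cdots-t_k\bigr)^{\beta-k}\,dt_1\cdots dt_k$, $\beta=H-1/\alpha$, valid for $x>rk$, is precisely that Taylor argument made rigorous; it also delivers the asymptotics $h_{k,r}(x)\sim r^k\prod_{i=0}^{k-1}(\beta-i)\,x^{\beta-k}$ invoked at \eqref{cj1}. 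Your near-region triangle-inequality bound is likewise fine when $\beta\ge 0$. (A small caveat: the strict decrease of $|h_{k,r}|$ requires $\beta\neq 0$, i.e.\ $H\neq 1/\alpha$, since otherwise $h_{k,r}$ vanishes identically on $(rk,\infty)$; the paper treats this critical case separately anyway.)

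However, the gap you flag on $(0,rk+1]$ for $\beta<0$ is not closable by ``careful local analysis'': in that regime the inequality as stated is false. Take $k=r=1$ and $\beta<0$; for $x\in(1,2]$ one has $h_{1,1}(x)=x^{\beta}-(x-1)^{\beta}$, so $|h_{1,1}(x)|\to\infty$ as $x\downarrow 1$, while the right-hand side $Cx^{\beta}$ stays bounded near $x=1$. In general, for $\beta<0$ the function $|h_{k,r}|$ has a one-sided singularity of order $|x-rj|^{\beta}$ at every interior breakpoint $x=rj$, $1\le j\le k$ (only the $j$-th summand blows up there and nothing cancels it), whereas $x^{\beta}$ is locally bounded at these points, so no constant $C$ can exist. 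The lemma is therefore true as stated only when $H-1/\alpha\ge 0$, yet the paper relies on it (through Lemma \ref{lem2} and Lemma \ref{lem4}) also in the general setting of Section \ref{sec4}, where $H-1/\alpha<0$ is allowed. The statement that is both true and sufficient for the downstream applications replaces the near-region bound by $|h_{k,r}(x)|\le C\sum_{j=0}^{k}|x-rj|^{H-1/\alpha}$ for $x\in(0,rk+1]$, which the local analysis you describe does yield; for instance, the finiteness of $\int_0^\infty|h_{k,r}(x)|^{\alpha/2}\,dx$ used in the proof of Lemma \ref{lem2} survives because $\alpha(H-1/\alpha)/2=(\alpha H-1)/2>-1/2>-1$. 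So your proof is complete for $\beta\ge 0$; for $\beta<0$ you should prove (and the paper should state) the corrected bound rather than attempt to establish the original one.
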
 
An important quantity when considering various asymptotic covariances is the following object:
\begin{align} \label{rho}
\rho_l:= \int_{0}^{\infty} |h_{k,r}(x) h_{k,r}(x+l)|^{\alpha/2} dx. 
\end{align}
The next lemma determines the asymptotic behaviour of $\rho_l$ when $l\to \infty$. 

\begin{lem} \label{lem2}
For $l>rk$ it holds that 
\begin{align*} 
&\rho_l \leq C
\begin{cases}
l^{(\alpha(H-k) -1)/2} : & \text{when } k> H+1/\alpha \\
l^{\alpha(H-k)} :& \text{when } k< H+1/\alpha
\end{cases}
\end{align*}
\end{lem}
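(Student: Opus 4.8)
The plan is to insert the pointwise bound of Lemma~\ref{lem1} into the definition \eqref{rho} and to split the integral at the transition point $rk+1$ between the two power regimes of $h_{k,r}$. Throughout I write $a:=(1+\alpha(k-H))/2$; by Lemma~\ref{lem1} one has $|h_{k,r}(x)|^{\alpha/2}\le C x^{(\alpha H-1)/2}$ on $(0,rk+1]$ and $|h_{k,r}(x)|^{\alpha/2}\le C x^{-a}$ on $(rk+1,\infty)$. Since $k\ge 1>H$ we always have $2a=1+\alpha(k-H)>1$, and $a>1$ is equivalent to $k>H+1/\alpha$; these sign facts will drive the whole argument. I may assume $l>rk+1$, the remaining range $l\in(rk,rk+1]$ being covered by the finiteness and continuity of $l\mapsto\rho_l$.

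On the region $(0,rk+1]$ the shift satisfies $x+l>l>rk+1$, so the tail bound gives $|h_{k,r}(x+l)|^{\alpha/2}\le C(x+l)^{-a}\le C l^{-a}$. Since $\int_0^{rk+1}x^{(\alpha H-1)/2}\,dx<\infty$ (the exponent exceeds $-1$ because $\alpha H>-1$), this part of $\rho_l$ is at most $C l^{-a}=C l^{(\alpha(H-k)-1)/2}$.

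On the region $(rk+1,\infty)$ I would bound both factors by their tails and estimate $I:=\int_{rk+1}^\infty x^{-a}(x+l)^{-a}\,dx$ by splitting at $x=l$. For $x\in(rk+1,l)$ I use $(x+l)^{-a}\le l^{-a}$, so this piece is at most $C l^{-a}\int_{rk+1}^l x^{-a}\,dx$, which is $\le C l^{-a}$ when $a>1$ and $\le C l^{1-2a}$ when $a<1$; for $x\in(l,\infty)$ I use $(x+l)^{-a}\le x^{-a}$ and $\int_l^\infty x^{-2a}\,dx=C l^{1-2a}$ (finite since $2a>1$). Hence $I\le C l^{-a}$ when $a>1$ (the exponent $-a$ dominates $1-2a$ precisely because $a>1$) and $I\le C l^{1-2a}$ when $a<1$, i.e.\ $I\le C l^{(\alpha(H-k)-1)/2}$ when $k>H+1/\alpha$ and $I\le C l^{\alpha(H-k)}$ when $k<H+1/\alpha$.

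It then remains to combine the two regions. When $k>H+1/\alpha$ both contribute $l^{(\alpha(H-k)-1)/2}$, giving the first bound. When $k<H+1/\alpha$ the first region contributes $l^{-a}$ and the second $l^{1-2a}$; since $a<1$ forces $-a<1-2a$, the second dominates and yields $l^{\alpha(H-k)}$, the second bound. The one genuinely delicate point is the case split governed by the sign of $a-1$, together with the decomposition of the tail integral at $x=l$ that isolates the near-diagonal contribution from the true tail; once these are in place, convergence of every elementary power integral is automatic from the two standing inequalities $2a>1$ and $\alpha H>-1$.
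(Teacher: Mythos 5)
Your proof is correct, and it rests on the same two ingredients as the paper's own argument --- the pointwise bound of Lemma \ref{lem1} and a splitting of the integral \eqref{rho} --- but the decomposition itself is organized differently. The paper splits once, at $x=l$: on $(0,l)$ it bounds the shifted factor by $Cl^{(\alpha(H-k)-1)/2}$ and obtains the case distinction from the growth of $\int_0^l|h_{k,r}(x)|^{\alpha/2}\,dx$ (bounded when $k>H+1/\alpha$, of order $l^{1+(\alpha(H-k)-1)/2}$ when $k<H+1/\alpha$), while on $(l,\infty)$ it bounds both factors by their tails and uses the scaling substitution $x=ly$ to extract $l^{\alpha(H-k)}$ times a finite integral. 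You split instead at the kernel's transition point $rk+1$, so that the head $(0,rk+1]$ contributes $Cl^{-a}$ outright (a bound that is in fact sharper than needed when $a<1$), and the tail reduces to the pure power integral $\int_{rk+1}^\infty x^{-a}(x+l)^{-a}\,dx$, which you handle by a secondary split at $x=l$ and the elementary comparisons $(x+l)^{-a}\le l^{-a}$ and $(x+l)^{-a}\le x^{-a}$; your dichotomy is $a\gtrless 1$, i.e.\ convergence of $\int^\infty x^{-a}\,dx$, which is the paper's dichotomy in disguise, since $\int_0^\infty |h_{k,r}|^{\alpha/2}<\infty$ exactly when $a>1$. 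Neither route gives a stronger conclusion, but yours avoids the substitution entirely, isolates the near-diagonal contribution from the true tail explicitly, and is more careful than the paper on the edge range $l\in(rk,rk+1]$, where Lemma \ref{lem1}'s tail bound is not yet available for $x+l$; for that range note that continuity of $l\mapsto\rho_l$ is not actually needed --- the uniform bound $\rho_l\le\|h_{k,r}\|_\alpha^{\alpha}$ from Cauchy--Schwarz, which the paper itself invokes in the proof of Lemma \ref{lem4}, already suffices. What the paper's version buys in exchange is brevity: one split instead of two.
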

\begin{proof} 
Assume that $l>rk$. 
Applying Lemma \ref{lem1} we obtain the inequality
\begin{align*}
&\int_0^l |h_{k,r}(x) h_{k,r}(x+l)|^{\alpha/2} dx \leq C  l^{(\alpha(H-k) -1)/2} 
\int_0^l |h_{k,r}(x) |^{\alpha/2} dx  \\[1.5 ex]
& \leq C
\begin{cases}
  l^{(\alpha(H-k) -1)/2} \int_0^{\infty} |h_{k,r}(x) |^{\alpha/2} dx: & k> H+1/\alpha \\
  l^{\alpha(H-k)}:  & k< H+1/\alpha
\end{cases}
\end{align*}
When $k> H+1/\alpha$ we have $\int_0^{\infty} |h_{k,r}(x) |^{\alpha/2} dx< \infty$, which is due to
Lemma \ref{lem1}; on the other hand, for $k< H+1/\alpha$ we deduce that 
$\int_0^l |h_{k,r}(x) |^{\alpha/2} dx \leq C l^{1+(\alpha(H-k) -1)/2}$.   
Applying Lemma \ref{lem1} once again and using the substitution $x=ly$ we deduce the inequality
\begin{align*}
\int_l^{\infty} |h_{k,r}(x) h_{k,r}(x+l)|^{\alpha/2} dx &\leq C \int_l^{\infty} |x(x+l)|^{(\alpha(H-k) -1)/2} dx \\
&=C l^{\alpha(H-k)} \int_1^{\infty}  |y(y+1)|^{(\alpha(H-k) -1)/2} dy.
\end{align*}
Indeed, the last integral is finite since $H<1\leq k$. Hence, the statement of Lemma  \ref{lem2} is proved.  
\end{proof}
In the next step we will determine the behaviour of the function $U_{g,h}$ defined at \eqref{U}. The following result is the statement of inequalities (3.4)-(3.6) from \cite{PTA}. 

\begin{lem} \label{lem3}
For any $u, v \in \R$ it holds that 
\begin{align*}
|U_{g,h} (u,v)| &\leq  2|uv|^{\alpha/2} \int_{0}^{\infty} |g(x) h(x)|^{\alpha/2} dx \\[1.5 ex]
& \times \exp\left( -2|uv|^{\alpha/2} \left(\|g\|_{\alpha}^{\alpha/2} \|h\|_{\alpha}^{\alpha/2}
-  \int_{0}^{\infty} |g(x) h(x)|^{\alpha/2} dx \right)\right), \\[1.5 ex]
|U_{g,h} (u,v)| &\leq  2|uv|^{\alpha/2} \int_{0}^{\infty} |g(x) h(x)|^{\alpha/2} dx \\[1.5 ex]
& \times \exp\left( - \left(\|ug\|_{\alpha}^{\alpha/2} -  \|vh\|_{\alpha}^{\alpha/2}
 \right)^2\right).
\end{align*}
In particular, we have that $|U_{g,h} (u,v)| \leq  2|uv|^{\alpha/2} \int_{0}^{\infty} |g(x) h(x)|^{\alpha/2} dx$. 
\end{lem}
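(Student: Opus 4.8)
The three estimates are (3.4)--(3.6) of \cite{PTA}; since the argument is short I would reprove them directly. It suffices to treat the normalization $\sigma=1$ (the general case is recovered by replacing $(g,h)$ with $(\sigma g,\sigma h)$). Recalling \eqref{U}, write $U_{g,h}(u,v)=e^{-A}-e^{-B}$ with $A=\|ug+vh\|_{\alpha}^{\alpha}\ge0$ and $B=\|ug\|_{\alpha}^{\alpha}+\|vh\|_{\alpha}^{\alpha}\ge0$. Everything then reduces to (i) an upper bound on $|A-B|$ and (ii) a lower bound on $\min(A,B)$. The heart of (i) is the elementary pointwise inequality
\[
\bigl| |x+y|^{\alpha}-|x|^{\alpha}-|y|^{\alpha} \bigr| \leq 2\,|xy|^{\alpha/2}, \qquad x,y\in\R,\ \alpha\in(0,2],
\]
which I would establish by homogeneity: both sides are positively homogeneous of degree $\alpha$, so for $x\neq0$ dividing by $|x|^{\alpha}$ reduces it to the one-variable estimate $\lvert\,|1+t|^{\alpha}-1-|t|^{\alpha}\,\rvert\le 2|t|^{\alpha/2}$ (with $t=y/x$), which follows from elementary calculus via a Taylor expansion at $t=0$ and the asymptotics as $t\to\pm\infty$. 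The constant $2$ is sharp and is forced by $t=-1$, where the left-hand side equals $2$.

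Inserting $x=ug_{s}$, $y=vh_{s}$ and integrating in $s$ then gives
\[
|A-B| \leq 2\,|uv|^{\alpha/2}\,I, \qquad I:=\int_{0}^{\infty}|g(x)h(x)|^{\alpha/2}\,dx,
\]
where the integral may be taken over $(0,\infty)$ because the kernels $h_{k,r}$ entering our applications vanish on $(-\infty,0]$ (cf. Lemma \ref{lem1}). Since $e^{-x}$ is $1$-Lipschitz on $[0,\infty)$, the mean value theorem yields $|e^{-A}-e^{-B}|\le|A-B|\,e^{-\min(A,B)}$; combined with the previous display and $\min(A,B)\ge0$ this already proves the third (``in particular'') estimate.

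To produce the two exponential prefactors I would bound $\min(A,B)$ from below. By the Cauchy--Schwarz inequality $I\le\|g\|_{\alpha}^{\alpha/2}\|h\|_{\alpha}^{\alpha/2}=:J$, and by the arithmetic--geometric mean inequality
\[
B=\|ug\|_{\alpha}^{\alpha}+\|vh\|_{\alpha}^{\alpha}\ \geq\ 2\,|uv|^{\alpha/2}\,J.
\]
Hence $A\ge B-|A-B|\ge 2|uv|^{\alpha/2}(J-I)$ as well, so $\min(A,B)\ge 2|uv|^{\alpha/2}(J-I)\ge0$; feeding this into the Lipschitz bound gives the first estimate. For the second, set $a=\|ug\|_{\alpha}^{\alpha/2}$ and $b=\|vh\|_{\alpha}^{\alpha/2}$, so that $B=a^{2}+b^{2}$ and $ab=|uv|^{\alpha/2}J$; then $(a-b)^{2}=B-2|uv|^{\alpha/2}J\le B$, while $A\ge B-|A-B|\ge(a-b)^{2}+2|uv|^{\alpha/2}(J-I)\ge(a-b)^{2}$. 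Thus $\min(A,B)\ge(a-b)^{2}=(\|ug\|_{\alpha}^{\alpha/2}-\|vh\|_{\alpha}^{\alpha/2})^{2}$, and the Lipschitz bound yields the second estimate.

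The only genuinely analytic ingredient is the pointwise inequality with the sharp constant $2$; the remainder is bookkeeping with Cauchy--Schwarz, the arithmetic--geometric mean inequality and the Lipschitz property of the exponential. The point to watch is precisely that constant: a larger constant in the pointwise bound would not only weaken $|A-B|$ but, more seriously, destroy the lower bound $\min(A,B)\ge 2|uv|^{\alpha/2}(J-I)$ that governs the exponent in the first estimate, so I would be careful to pin it down via the extremal case $t=-1$.
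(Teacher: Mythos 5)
Your argument is correct, and it is essentially \emph{the} proof of this result: the paper itself does not prove Lemma \ref{lem3} at all, it merely quotes inequalities (3.4)--(3.6) of \cite{PTA}, and the scheme you use --- write $U_{g,h}(u,v)=e^{-A}-e^{-B}$ via \eqref{U}, bound $|A-B|\le 2|uv|^{\alpha/2}I$ with $I=\int_0^\infty|gh|^{\alpha/2}$ by the pointwise inequality $\bigl||x+y|^{\alpha}-|x|^{\alpha}-|y|^{\alpha}\bigr|\le 2|xy|^{\alpha/2}$, then control $\min(A,B)$ from below and invoke the mean value theorem for $e^{-x}$ --- is exactly the argument in \cite{PTA}. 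Your bookkeeping checks out: Cauchy--Schwarz gives $I\le J:=\|g\|_{\alpha}^{\alpha/2}\|h\|_{\alpha}^{\alpha/2}$, AM--GM gives $B\ge 2|uv|^{\alpha/2}J$, hence $A\ge B-|A-B|\ge 2|uv|^{\alpha/2}(J-I)\ge 0$, which yields the first exponential factor; writing $a=\|ug\|_{\alpha}^{\alpha/2}$, $b=\|vh\|_{\alpha}^{\alpha/2}$ one has $(a-b)^2=B-2|uv|^{\alpha/2}J\le\min(A,B)$, which yields the second; and $\min(A,B)\ge 0$ gives the ``in particular'' bound. Your observations that the sharp constant $2$ is forced (a larger constant would break the cancellation against the AM--GM lower bound in the first exponent) and that the statement should be read with the normalization $\sigma=1$, i.e. with $(g,h)$ replaced by $(\sigma g,\sigma h)$, are both apt.

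The one step you treat too casually is the pointwise inequality itself. It is true for all $\alpha\in(0,2]$, but ``Taylor expansion at $t=0$ plus asymptotics as $t\to\pm\infty$ plus the value at $t=-1$'' is not a proof of a global bound, and this particular inequality leaves no room for softness: at $\alpha=2$ it is an \emph{identity} for every $t$ (both sides equal $2|t|$), and for $\alpha$ close to $2$ it is nearly saturated along the entire negative ray, not just at $t=-1$. So any complete argument must control $\sup_{t\in\R}\bigl||1+t|^{\alpha}-1-|t|^{\alpha}\bigr|/|t|^{\alpha/2}$ uniformly, e.g.\ by a genuine monotonicity or convexity analysis of this one-variable function on the three regimes $t>0$, $-1<t<0$, $t<-1$ --- or one should simply cite it (it is the key elementary lemma in \cite{PTA}, where it is stated and proved). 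With that ingredient supplied, your proof is complete and matches the cited one.
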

Now we turn our attention to formula  \eqref{asycov}, which presents an explicit expression for the asymptotic covariance matrix $\text{cov}(W)$. In the following we will prove this identity. For the sake of brevity we will only show formula   \eqref{asycov} for $d=1$ and only for the component 
$\text{var}(W_1^{(1)})$ with $k_1=k$ and $r_1=r$. All other identities are in fact easier to prove and we leave them to the reader.

The expression for  $\text{var}(W_1^{(1)})$ for $p \in (-1/2,0)$ and its finiteness have been shown in \cite[Corollary 3.3 and Theorem 4.2]{DI} using methods from distribution theory, so we concentrate on the case $p \in (0,1/2)$. 
For $p\in (0,1)$, we have the relationship 
\begin{align} \label{xp}
|x|^p = a_p^{-1} \int_{\R} \left(1-\exp(ixy) \right) |y|^{-1-p} dy.
\end{align}
which can be shown by substitution $xy=z$ (recall the definition of $a_p$ at \eqref{ap}). 
Note that similarly to \eqref{covFourier} the latter connects
power functions with characteristic functions, which are explicit in the $\alpha$-stable case.  Applying
this formula and using stationarity of the increments $\Delta_{i,k}^{r} X$ we conclude that 
\begin{align*}
\text{cov}\left(
f_p\left( \Delta_{i,k}^{r} X \right) ,
f_p\left( \Delta_{i+l,k}^{r} X \right)   \right) 
= \theta(h_{k,r}, h_{k,r}(\cdot+l))_{p},
\end{align*}
where the quantity $\theta(g,h)_p$ has been introduced at \eqref{powercov}. Since $W(n)_1^{(1)}$ is 
a sum of stationary random variables it remains to prove that $ \theta(h_{k,r}, h_{k,r}(\cdot+l))_{p}$ 
is absolutely summable in $l$ to show  the identity \eqref{asycov}. This is the statement of the next 
lemma.

\begin{lem} \label{lem4}
For $p \in (0,1/2)$ with $p<\alpha/2$ it holds that 
\[
|\theta(h_{k,r}, h_{k,r}(\cdot+l))_{p}| \leq C \rho_l.
\]
In particular, if $k>H+1/\alpha$ we obtain 
$\sum_{l=1}^{\infty} |\theta(h_{k,r}, h_{k,r}(\cdot+l))_{p}| < \infty$.   
\end{lem}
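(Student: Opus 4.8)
The plan is to start from the explicit double-integral representation of the quantity in question, which was established in the discussion preceding the lemma via \eqref{xp}:
\[
\theta(h_{k,r},h_{k,r}(\cdot+l))_{p}= a_p^{-2}\int_{\R^2}|xy|^{-1-p}\,U_{h_{k,r},\,h_{k,r}(\cdot+l)}(x,y)\,dxdy .
\]
Writing $g=h_{k,r}$ and $h=h_{k,r}(\cdot+l)$, the two ingredients of Lemma \ref{lem3} are $\int_0^\infty|g(x)h(x)|^{\alpha/2}dx=\rho_l$ and, by shift invariance of the $L^\alpha$-norm, $\|g\|_{\alpha}=\|h\|_{\alpha}=\|h_{k,r}\|_{\alpha}=:N$. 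Hence Lemma \ref{lem3} supplies the two estimates
\[
|U_{g,h}(x,y)|\leq 2\rho_l|xy|^{\alpha/2}\exp\!\big(-2(N^\alpha-\rho_l)|xy|^{\alpha/2}\big),\quad
|U_{g,h}(x,y)|\leq 2\rho_l|xy|^{\alpha/2}\exp\!\big(-N^{\alpha}\big(|x|^{\alpha/2}-|y|^{\alpha/2}\big)^2\big).
\]

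The main obstacle is that neither estimate alone produces an integrable bound. The first decays only in the product $|xy|$ and leaves the ratio $|x|/|y|$ completely uncontrolled, so the corresponding integral diverges; the second decays only in $\big||x|^{\alpha/2}-|y|^{\alpha/2}\big|$ and, for large $|xy|$, would force the stronger restriction $p>\alpha/4$. I would therefore work with the geometric mean $|U_{g,h}|\leq\sqrt{M_1M_2}=2\rho_l|xy|^{\alpha/2}\exp\!\big(-(N^\alpha-\rho_l)|xy|^{\alpha/2}-\tfrac{N^\alpha}{2}\big(|x|^{\alpha/2}-|y|^{\alpha/2}\big)^2\big)$, which decays simultaneously in the ``radial'' variable $|xy|$ and in the ``angular'' variable. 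Passing to coordinates $x=e^{a}$, $y=e^{b}$ and then $s=a+b$, $t=a-b$ in the first quadrant (and using symmetry for the others), one has $|xy|^{\alpha/2}=e^{(\alpha/2)s}$ and $\big(|x|^{\alpha/2}-|y|^{\alpha/2}\big)^2=4e^{(\alpha/2)s}\sinh^2\!\big((\alpha/4)t\big)$, so the bound on $|\theta|$ factorises into a $t$-integral and an $s$-integral carrying an overall factor $\rho_l$.

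For the inner $t$-integral the substitution $w=\sinh((\alpha/4)t)$ reduces $\int_{\R}\exp(-Ke^{(\alpha/2)s}\sinh^2((\alpha/4)t))\,dt$ to a multiple of $\int_0^\infty (1+w^2)^{-1/2}\exp(-K'w^2)\,dw$ with $K'\asymp e^{(\alpha/2)s}$; this is $O(e^{-(\alpha/4)s})$ as $s\to+\infty$ and grows only like $|s|$ as $s\to-\infty$. The remaining integral is then of the form $\int_{\R}e^{(\alpha/2-p)s}\exp(-c\,e^{(\alpha/2)s})\,(1+|s|)\,ds$, which converges: at $+\infty$ the super-exponential factor inherited from the first estimate dominates (this is precisely why the first estimate is indispensable), while at $-\infty$ convergence holds exactly because $p<\alpha/2$ makes $e^{(\alpha/2-p)s}(1+|s|)\to0$. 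Carrying $\rho_l$ through gives $|\theta(h_{k,r},h_{k,r}(\cdot+l))_{p}|\leq C\rho_l$, with $C$ uniform once $l$ is large enough that $N^\alpha-\rho_l\geq N^\alpha/2$; for the finitely many small $l$ the same computation gives finiteness since $\rho_l<N^\alpha$ strictly, which is all that is needed.

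The summability assertion then follows by combining this bound with Lemma \ref{lem2}: when $k>H+1/\alpha$ one has $\rho_l\leq C\,l^{(\alpha(H-k)-1)/2}$, and $\sum_l l^{(\alpha(H-k)-1)/2}<\infty$ precisely when $(\alpha(H-k)-1)/2<-1$, i.e. $\alpha(k-H)>1$, which is exactly the assumption $k>H+1/\alpha$. Thus the genuine work lies in extracting the missing angular decay from the second estimate of Lemma \ref{lem3} and verifying that, after interpolating the two bounds, the borderline direction $s\to-\infty$ is controlled by the hypothesis $p<\alpha/2$.
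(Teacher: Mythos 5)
Your proof is correct, but it organizes the estimate differently from the paper. The paper's own argument splits $(0,\infty)^2$ into four regions --- $(0,1)^2$, $[1,\infty)^2$ and the two mixed rectangles --- and on each region invokes a \emph{single} bound from Lemma \ref{lem3}: the crude bound $|U|\leq 2\rho_l|xy|^{\alpha/2}$ on $(0,1)^2$ (where $p<\alpha/2$ gives integrability of $(xy)^{-1-p+\alpha/2}$), the product-decay estimate on $[1,\infty)^2$, and the angular estimate involving $\bigl(\|ug\|_\alpha^{\alpha/2}-\|vh\|_\alpha^{\alpha/2}\bigr)^2$ on the mixed rectangles, where one variable is bounded away from the other so that $(y^{\alpha/2}-x^{\alpha/2})^2\geq(y^{\alpha/2}-1)^2$. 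You instead take the geometric mean of the two exponential estimates to obtain one bound valid on the whole quadrant, then pass to coordinates $s=\log(xy)$, $t=\log(x/y)$ in which radial and angular decay separate; the $t$-integral is controlled by the angular factor (its logarithmic growth as $s\to-\infty$ being absorbed by $e^{(\alpha/2-p)s}$, i.e.\ precisely by the hypothesis $p<\alpha/2$), and the $s$-integral at $+\infty$ by the super-exponential radial factor. Both arguments rest on exactly the same three ingredients --- the two estimates of Lemma \ref{lem3}, the strict inequality $\rho_l<\|h_{k,r}\|_\alpha^\alpha$ together with $\rho_l\to0$ to obtain a constant uniform in $l$, and the condition $p<\alpha/2$ --- and both deduce summability identically from Lemma \ref{lem2}. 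What your route buys is a transparent explanation of \emph{why} both estimates are indispensable (the product bound has no angular decay, while the angular bound alone would force $p>\alpha/4$), a point the paper leaves implicit in its choice of regions; what the paper's route buys is the avoidance of interpolation and of any change of variables, each region being dispatched by an elementary one-line bound. Your treatment of uniformity in $l$ (large $l$ via $\rho_l\to0$, the finitely many small $l$ via strict Cauchy--Schwarz) coincides with the paper's.
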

\begin{proof}
The second part of the statement follows directly from Lemma \ref{lem2} and the fact that $(\alpha(H-k) -1)/2 < -1$ when $k>H+1/\alpha$. To show the first part of the statement we will use the inequalities of
Lemma \ref{lem3}. Recalling the definition of $\theta(h_{k,r}, h_{k,r}(\cdot+l))_{p}$ it is sufficient to compute the double integral over the set $(0,\infty)^{2}$ (instead of $\R^2$), which is due to symmetry. 
The domain $(0,\infty)^{2}$ is further decomposed into the regions $(0,1)^2$, $(0,1) \times [1, \infty)$,
$[1, \infty) \times (0,1)$ and $[1, \infty)^2$, and we denote the corresponding integrals by $I_1, I_2,I_3$ and $I_4$, respectively. 

For the integral $I_1$ we use the inequality 
$|U_{g,h} (u,v)| \leq  2|uv|^{\alpha/2} \int_{0}^{\infty} |g(x) h(x)|^{\alpha/2} dx$ of Lemma \ref{lem3} 
to deduce that
\begin{align*}
|I_1| \leq a_p^{-2} \int_{(0,1)^2}    (xy)^{-1-p}|U_{h_{k,r},h_{k,r}(\cdot+l)}(x,y)| dxdy
\leq C \rho_l  \int_{(0,1)^2}    (xy)^{-1-p+\alpha/2} dxdy,
\end{align*}
where the last integral is finite because $p<\alpha/2$.  Applying the main statement of Lemma \ref{lem3} we also conclude the inequality 
\begin{align*}
|I_4| \leq  C \rho_l  \int_{[1,\infty)^2}    (xy)^{-1-p+\alpha/2} 
\exp\left(-2(xy)^{\alpha/2} \left( \|h_{k,r}\|_{\alpha}^{\alpha}
- \rho_l \right) \right) dxdy,
\end{align*}
By Cauchy-Schwarz inequality we have that $\rho_l <  \|h_{k,r}\|_{\alpha}^{\alpha}$. Furthermore,
$\lim_{l\to \infty } \rho_l  = 0$ by Lemma \ref{lem2} and thus, for a given $\epsilon \in (0,1)$, $\rho_l < \epsilon$ for almost all $l \in \N$. Hence, there exists a constant $C>0$ such that 
\[
|I_4| \leq  C \rho_l  \int_{[1,\infty)^2}    (xy)^{-1-p+\alpha/2} 
\exp\left(-2C(xy)^{\alpha/2}   \right) dxdy,
\]
where the latter integral is obviously finite. For the integral $I_2$ we apply Lemma \ref{lem3} once more to obtain
\begin{align*}
|I_2|&\leq C \rho_l \int_{(0,1)\times [1,\infty)} (xy)^{-1-p+\alpha/2} 
\exp\left(-\| h_{k,r}\|_{\alpha}^{\alpha/2} (y^{\alpha/2} - x^{\alpha/2})^2  \right) dxdy \\[1.5 ex]
&\leq C \rho_l \int_{(0,1)\times [1,\infty)} (xy)^{-1-p+\alpha/2} 
\exp\left(-\| h_{k,r}\|_{\alpha}^{\alpha/2} ( y^{\alpha/2} -1)^2  \right) dxdy
\end{align*} 
and the last integral is again finite since $p<\alpha/2$. The term $I_3$ is treated exactly the same way as $I_2$ and we are done. 
\end{proof}
At the end of this subsection we remark that the covariance matrix $\text{cov}(W)$ is a continuous function
in $(\sigma,\alpha, H) \in \R_+ \times (0,2) \times (0,1)$, which follows by Lemma \ref{lem4} and a dominated convergence theorem.

\subsection{Proof of Theorem  \ref{multivariateclt}} \label{sec6.2}

The proof of Theorem \ref{multivariateclt} will be divided into several steps. Some parts of the proof will rely upon asymptotic expansions investigated in \cite{BLP, PT2}.

\subsubsection{Asymptotic decomposition of the statistic $\left(W(n)^{(1)}, W(n)^{(2)} \right)$}  \label{sec6.2.1}

In this section we introduce several approximations of the statistic appearing in Theorem  \ref{multivariateclt}. We start with the asymptotically normal part 
$\left(W(n)^{(1)}, W(n)^{(2)} \right)$. Recalling the notation  \eqref{def-h} we observe the identity
\begin{align} \label{incidentity}
\Delta_{i,k}^{r} X= \int_{\R} h_{k,r} (i-s) dL_s. 
\end{align}
In the first step we introduce the short memory approximation of $\Delta_{i,k}^{r} X$ by truncating the integration region:
\begin{align} \label{incapp}
\Delta_{i,k}^{r} X (m):= \int_{i-m}^{i+m} h_{k,r} (i-s) dL_s. 
\end{align}
Note that the random variables  $(\Delta_{i,k}^{r} X (m))_{i \geq rk}$ are stationary 
and $2m$-dependent, i.e.   $\Delta_{i,k}^{r} X (m)$ and $\Delta_{j,k}^{r} X (m)$ are independent if $|i-j|\geq 2m$. For $f(x)=|x|^p$ with $p \in (0,1/2)$ and $p<\alpha/2$, or $f(x)=\cos(tx)$ 
we introduce the notation
\begin{align} \label{Wnm}
W(n,m)_j^{(1)}:= \frac{1}{\sqrt{n}} \sum_{i=r_jk_j}^n \left\{ f_p\left( \Delta_{i,k_j}^{r_j} X(m) \right)
-\E \left[ f_p\left( \Delta_{i,k_j}^{r_j} X(m) \right) \right] \right\} \\[1.5 ex]
W(n,m)_j^{(2)}:= \frac{1}{\sqrt{n}} \sum_{i=k}^n \left\{ \psi_{t_j}\left( \Delta_{i,k_j} X(m) \right)
-\E \left[ \psi_{t_j}\left( \Delta_{i,k_j} X(m) \right) \right] \right\} \nonumber
\end{align}
For the function $f_{-p}(x)=|x|^{-p}$ with $p \in (0,1/2)$ we set $f_{-p}^{\epsilon}(x)=|x|^{-p} 1_{\{|x|>\epsilon\}}$ and note that the latter is a bounded function. In this setting we define  
\begin{align} \label{Wnme}
W(n,m,\epsilon)_j^{(1)}:= \frac{1}{\sqrt{n}} \sum_{i=rk}^n \left\{ f_{-p}^{\epsilon}
\left( \Delta_{i,k}^{r} X(m) \right)
-\E \left[ f_{-p}^{\epsilon}\left( \Delta_{i,k}^{r} X(m) \right) \right] \right\}.
\end{align}
In \cite[Section 5.4]{BLP} it has been shown that the convergence
\begin{align}  \label{bill1}
\lim_{m \to \infty} \limsup_{n \to \infty} \E\left[\left(W(n,m)_j^{(1)} - W(n)_j^{(1)} \right)^2\right]=0
\end{align}
holds. On the other hand, since the functions $\psi_{t_j}$ and $f_{-p}^{\epsilon}$ are bounded, we obtain the convergence
\begin{align}  \label{bill2}
\lim_{m \to \infty} \limsup_{n \to \infty} \E\left[\left(W(n,m)_j^{(2)} - W(n)_j^{(2)} \right)^2\right]=0, \\[1.5 ex]
\lim_{m \to \infty} \limsup_{n \to \infty} \E\left[\left(W(n,m, \epsilon)_j^{(1)} - 
W(n,\epsilon)_j^{(1)} \right)^2\right]=0 \nonumber
\end{align}
from \cite{PT2}. Here $W(n,\epsilon)_j^{(1)}$ is the original statistic defined at \eqref{multstat} associated with the function $f_{-p}^{\epsilon}$.

\subsubsection{Asymptotic decomposition of the statistic $\left(S(n)^{(1)}, S(n)^{(2)} \right)$}  \label{sec6.2.1b}

In this subsection we derive an asymptotic expansion for the statistic 
$\left(S(n)^{(1)}, S(n)^{(2)} \right)$. The main ideas originate from the work \cite{BLP} and we will adapt their principles to our setting. The following estimates and decomposition have been treated in the case of power variation with $p \in (0,1/2)$, $p<\alpha/2$, in \cite{BLP}, so we will rather concentrate 
on the functions $f_{-p}$, $p \in (0,1/2)$, and $\psi_t$.   

All expansions are valid componentwise, so we may assume that $d=1$. We recall the notation 
introduced at \eqref{Phi12}. For a symmetric $\alpha$-stable random variable $Y$ with scaling parameter $\rho>0$ and a measurable function $f: \R \to \R$, we introduce the function 
\begin{align} \label{Phirho}
\Phi_{\rho}(f)(x):= \E[f(Y+x)] - \E[f(Y)], \qquad x \in \R,
\end{align}
whenever the latter is finite. In the following we will derive various estimates for $\Phi_{\rho}(f_{-p})(x)$
with  $p \in (0,1/2)$. First of all, using the identity \cite[Eq. (18)]{DI} we obtain the representation 
\begin{align} \label{pminusid}
\Phi_{\rho}(f_{-p})(x) = a_p^{-1} \int_{\R} \left(1-\cos(xy) \right) 
\exp(-|\rho y|^{\alpha}) |y|^{-1+p} dy.
\end{align}  
This identity implies the following result.

\begin{lem} \label{lemPhi}
Assume that $\rho, \rho_1, \rho_2> \epsilon>0$. Then there exists a constant $C_{\epsilon}>0$ such that the following inequalities hold:
\begin{align*}
&|\Phi_{\rho}(f_{-p})(x)| \leq C_{\epsilon} ( 1 \wedge x^2), \qquad |\Phi_{\rho}(f_{-p})^{(v)}(x)| \leq C_{\epsilon} \quad \text{for } v=1,2, \\[1.5 ex]
&|\Phi_{\rho}(f_{-p})(x) - \Phi_{\rho}(f_{-p})(y)| \leq C_{\epsilon}  \left((1 \wedge |x| + 
1 \wedge |y|)|x-y| 1_{\{|x-y| \leq1 \}} + 1_{\{|x-y| >1 \}} \right), \\[1.5 ex]
&|\Phi_{\rho_1}(f_{-p})(x) - \Phi_{\rho_2}(f_{-p})(x)| \leq C_{\epsilon} |\rho_2^{\alpha} - \rho_1^{\alpha}|,
\\[1.5 ex]
& \int_{0}^{x}  \int_{0}^{y} \Phi_{\rho}(f_{-p})(a+z+w)| dz dw  \leq C_{\epsilon} (1 \wedge x)
(1 \wedge y) \qquad \text{for any } x,y>0,~ a\in \R,
\end{align*}
where $\Phi_{\rho}(f_{-p})^{(v)}$ denotes the $v$th derivative of $\Phi_{\rho}(f_{-p})$.
\end{lem}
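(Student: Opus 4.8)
The common engine for all five estimates is the Fourier-type representation \eqref{pminusid}. Writing $w_\rho(y):=\exp(-|\rho y|^{\alpha})|y|^{-1+p}$, the point is that $w_\rho\in L^{1}(\R)$ together with all of its polynomial moments $\int_{\R}|y|^{j}w_\rho(y)\,dy$ ($j\geq0$), and that these are bounded uniformly over $\rho>\epsilon$, because $\exp(-|\rho y|^{\alpha})\leq\exp(-|\epsilon y|^{\alpha})$ while $|y|^{-1+p}$ is locally integrable (as $p>0$). This uniform domination is what produces the constant $C_\epsilon$ in every bound, and it justifies differentiating \eqref{pminusid} under the integral sign, giving
\begin{align*}
\Phi_\rho(f_{-p})'(x)=a_p^{-1}\int_{\R}y\sin(xy)\,w_\rho(y)\,dy,\qquad
\Phi_\rho(f_{-p})''(x)=a_p^{-1}\int_{\R}y^{2}\cos(xy)\,w_\rho(y)\,dy .
\end{align*}

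First I would dispatch the bounds in the first line. The pointwise estimate $|\Phi_\rho(f_{-p})(x)|\leq C_\epsilon(1\wedge x^{2})$ follows from $1-\cos(xy)\leq\min\{2,(xy)^{2}/2\}$ inserted into \eqref{pminusid}, the two cases producing the moments $\int w_\rho$ and $\int y^{2}w_\rho$ respectively. The derivative bounds $|\Phi_\rho(f_{-p})^{(v)}(x)|\leq C_\epsilon$, $v=1,2$, are immediate from the displayed formulas and $|\sin|,|\cos|\leq1$. Along the way I would also record the sharper first-derivative bound $|\Phi_\rho(f_{-p})'(x)|\leq C_\epsilon(1\wedge|x|)$, obtained from $|\sin(xy)|\leq\min\{1,|xy|\}$; this refinement is exactly what the second estimate needs.

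The second and third estimates are then short. For the modulus-of-continuity bound, on $\{|x-y|\leq1\}$ the mean value theorem reduces matters to $|\Phi_\rho(f_{-p})'(\xi)|$ at an intermediate point $\xi$, and since $|\xi|\leq\max\{|x|,|y|\}$ the refined bound gives $|\Phi_\rho(f_{-p})'(\xi)|\leq C_\epsilon(1\wedge|\xi|)\leq C_\epsilon\bigl((1\wedge|x|)+(1\wedge|y|)\bigr)$; on $\{|x-y|>1\}$ one simply applies the uniform bound $|\Phi_\rho(f_{-p})|\leq C_\epsilon$ twice. For the bound in $\rho$, I would subtract the two representations and apply the mean value theorem to $s\mapsto e^{-|y|^{\alpha}s}$ at $s=\rho^{\alpha}$, which yields the factor $|y|^{\alpha}e^{-\epsilon^{\alpha}|y|^{\alpha}}|\rho_1^{\alpha}-\rho_2^{\alpha}|$; bounding $1-\cos\leq2$ and using the integrability of $|y|^{\alpha-1+p}e^{-\epsilon^{\alpha}|y|^{\alpha}}$ (here $\alpha+p>0$ controls the origin) gives $C_\epsilon|\rho_1^{\alpha}-\rho_2^{\alpha}|$, uniformly in $x$.

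The last estimate is the genuine obstacle, because $\Phi_\rho(f_{-p})$ does not vanish at infinity, so a naive use of $|\Phi_\rho(f_{-p})|\leq C_\epsilon$ only yields the area bound $C_\epsilon xy$, which is useless for large $x,y$. The resolution is that the iterated integration supplies a second-order smoothing: carrying the double integral inside \eqref{pminusid} and using $\bigl|\int_{0}^{x}\!\int_{0}^{y}\cos\bigl((a+z+w)y\bigr)\,dz\,dw\bigr|$ together with $|e^{\complexi xy}-1|\leq\min\{|xy|,2\}$ simultaneously in the $x$ and $y$ variables converts the oscillation into the two truncations that build the product $(1\wedge x)(1\wedge y)$. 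Concretely I would split $[0,x]\times[0,y]$ into the regime $\{x,y\leq1\}$, where $|\Phi_\rho(f_{-p})|\leq C_\epsilon$ and the area bound $C_\epsilon xy$ already coincides with $C_\epsilon(1\wedge x)(1\wedge y)$, and the complementary regimes, where one must extract enough decay of the integrand so that its integral over the infinite quadrant converges. Establishing this decay—equivalently, that $s\mapsto(1+s)$ times the relevant (second-order) integrand is integrable, which I would read off from the Riemann–Lebesgue behaviour of the cosine transform in \eqref{pminusid}—is the delicate point; once it is in hand, integrating first in the short variable and then invoking quadrant-integrability in the long variable produces the bound $C_\epsilon(1\wedge x)(1\wedge y)$ uniformly in $a\in\R$.
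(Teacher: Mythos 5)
Your treatment of the first four inequalities is correct and is essentially the paper's own argument: everything is read off the representation \eqref{pminusid} after differentiating under the integral (uniform bounds on $\Phi_\rho(f_{-p})$ and its first two derivatives, the refined bound $|\Phi_\rho(f_{-p})'(x)|\leq C_\epsilon(1\wedge|x|)$, the mean value theorem in $x$ for the modulus of continuity and in $\rho^\alpha$ for the comparison in $\rho$). The paper gets the pointwise bound from $\Phi_\rho(f_{-p})(0)=\Phi_\rho(f_{-p})'(0)=0$ plus Taylor rather than from $1-\cos(xy)\leq\min\{2,(xy)^2/2\}$, but that is the same computation.

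The genuine gap is the fifth estimate, and it is not merely the ``delicate point'' you defer: under your reading of the display (the double integral of $\Phi_\rho(f_{-p})$, with or without absolute values inside) the inequality is \emph{false}, so the missing step cannot be supplied. Indeed, by the definition \eqref{Phirho}, $\Phi_\rho(f_{-p})(u)=\E[f_{-p}(Y+u)]-\E[f_{-p}(Y)]\to-\E[f_{-p}(Y)]\neq0$ as $|u|\to\infty$, so for large $a$ the integrand has modulus bounded below by a positive constant on all of $[0,x]\times[0,y]$ and the double integral grows like $xy$, not like $(1\wedge x)(1\wedge y)$. Your oscillation argument cannot repair this: inserting \eqref{pminusid} into the double integral, the oscillatory piece (from the cosine) is indeed controlled by the truncations you describe, but the non-oscillatory piece $xy\cdot a_p^{-1}\int_{\R}e^{-|\rho u|^{\alpha}}|u|^{-1+p}\,du$, coming from the constant $1$ in $1-\cos$, survives untouched and dominates.

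The display in the lemma is garbled (note the stray bar); the statement the paper actually uses, and which its one-line proof (``a straightforward consequence of the first three inequalities'') establishes, is the second-order difference bound with the second derivative as integrand and the absolute value outside:
\begin{align*}
\left|\int_0^x\int_0^y\Phi_\rho(f_{-p})''(a+z+w)\,dz\,dw\right|
=\left|\Phi_\rho(f_{-p})(a+x+y)-\Phi_\rho(f_{-p})(a+x)-\Phi_\rho(f_{-p})(a+y)+\Phi_\rho(f_{-p})(a)\right|.
\end{align*}
Differentiating twice kills exactly the constant term in $1-\cos$, which removes the obstruction above. The bound then follows from estimates you already proved: the left-hand side is at most $4\sup|\Phi_\rho(f_{-p})|$, at most $2x\sup|\Phi_\rho(f_{-p})'|$ (write it as $\int_0^x\{\Phi_\rho(f_{-p})'(a+z+y)-\Phi_\rho(f_{-p})'(a+z)\}\,dz$), symmetrically at most $2y\sup|\Phi_\rho(f_{-p})'|$, and at most $xy\sup|\Phi_\rho(f_{-p})''|$; since $\min\{1,x,y,xy\}=(1\wedge x)(1\wedge y)$, this yields $C_\epsilon(1\wedge x)(1\wedge y)$ uniformly in $a\in\R$.
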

\begin{proof}
Note that the function $f_{-p}$ is even and hence $\Phi_{\rho}(f_{-p})(0)= \Phi_{\rho}(f_{-p})^{(1)}(0)=0$. Using the identity  \eqref{pminusid} we immediately see that 
$|\Phi_{\rho}(f_{-p})^{(v)}(x)| \leq C_{\epsilon}$ for $v=0,1,2$. Thus, we obtain the first two inequalities.
By the same arguments we get $|\Phi_{\rho}(f_{-p})^{(1)}(x)| \leq C_{\epsilon} ( 1 \wedge |x|)$. Observing the identity 
\[
|\Phi_{\rho}(f_{-p})(x) - \Phi_{\rho}(f_{-p})(y)| = \left| \int_y^x \Phi_{\rho}(f_{-p})^{(1)}(u) du\right|
\] 
we readily deduce the third inequality. The fourth inequality follows immediately from 
\eqref{pminusid} and the mean value theorem. The last statement is a straightforward consequence 
of the first three inequalities of Lemma \ref{lemPhi}.  
\end{proof}
It is important to note that the result of Lemma \ref{lemPhi} remains valid for the function 
$\Phi_{\rho}(\psi_t)$. In this case it is a consequence of the fact the $\psi_t$ is a bounded and even function. 

In the next step we present some decompositions, which have been investigated in \cite{BLP}. For any 
fixed $r$ and k, and the function $f=f_p, f_{-p}$, $p \in (0,1/2)$, or $\psi_t$, we define the random variable
\[
S(f)_n= n^{- 1/(1+\alpha(k-H))} \sum_{i=rk}^n \left\{ f\left( \Delta_{i,k}^{r} X \right) - 
\E\left[ f\left( \Delta_{i,k}^{r} X \right) \right] \right\} =:  \sum_{i=rk}^n V_i^n.
\] 
We also introduce the $\sigma$-algebras 
\[
\mathcal{G}_s:= \sigma\left(L_v - L_u:~ v,u \leq s \right), \qquad 
\mathcal{G}_s^1:= \sigma\left(L_v - L_u:~ s \leq v,u \leq s+1 \right), 
\]
and note that $(\mathcal{G}_s^1)_{s \in \R}$ is not a filtration. Now, we introduce the notation
\begin{align*}
& R_i^n:= \sum_{j=1}^n \zeta_{i,j}^n, \qquad Q_i^n:=  \sum_{j=1}^n \E[V_i^n| \mathcal{G}_{i-j}^1], 
\\[1.5 ex]
& \text{where } \zeta_{i,j}^n:= \E[V_i^n| \mathcal{G}_{i-j+1}] - \E[V_i^n| \mathcal{G}_{i-j}]
- \E[V_i^n| \mathcal{G}_{i-j}^1].
\end{align*}
Finally, we observe the decomposition  
\begin{align} \label{barSf}
S(f)_n &= \sum_{i=rk}^n R_i^n + \left(- \overline{S}(f)_n + \sum_{i=rk}^n Q_i^n \right)
+ \overline{S}(f)_n, \\[1.5 ex]
\overline{S}(f)_n &:= n^{- 1/(1+\alpha(k-H))}\sum_{i=rk}^n 
\left\{ \overline{\Phi}(f)(L_i-L_{i-1}) - \E[\overline{\Phi}(f)(L_i-L_{i-1})] \right\}, \nonumber
\end{align}
where $\overline{\Phi}(f)(x):= \sum_{j=1}^{\infty} \Phi_{\rho} (f) \left(h_{k,r}(j)x\right)$ with $\rho= \sigma
\|h_{k,r}\|_{\alpha}$. Note that $\overline{S}(f)_n$ is a sum of  i.i.d random variables.
For $f=f_p$ with $p \in (0,1/2)$, $p <\alpha/2$ and under assumptions of Theorem
\ref{multivariateclt}, the convergence 
\begin{align} \label{RQ}
\sum_{i=rk}^n R_i^n \toop 0 \qquad \text{and} \qquad - \overline{S}(f)_n + \sum_{i=rk}^n Q_i^n \toop 0
\qquad \text{as } n\to \infty
\end{align}
has been shown in \cite{BLP} (cf. eqs. (5.30), (5.31) and (5.38) therein). The proof of these convergence results follows
from a number of estimates on the function $\Phi_{\rho}(f_{p})$, $p \in (0,1/2)$, which are stated in
\cite[eqs. (5.14)-(5.18) and Lemma 5.8]{BLP}. But according to Lemma \ref{lemPhi} the same estimates hold also for   $\Phi_{\rho}(f_{-p})$, $p \in (0,1/2)$, and $\Phi_{\rho}(\psi_t)$ (in fact, the latter estimates are stronger). Consequently, the convergence at \eqref{RQ} also holds for the cases
$f=f_{-p}$ and $f=\psi_t$ and we deduce that 
\begin{align} \label{neg}
S(f)_n - \overline{S}(f)_n \toop 0 \qquad \text{for } f=f_p, f_{-p} \text{ or } \psi_t.
\end{align}

\subsubsection{A limit theorem for the approximations}  \label{sec6.2.2}

Recalling the notation introduced in \eqref{Phi12} and \eqref{barPhi12} we obtain the identities 
\begin{align} \label{lineS}
\overline{S}(n)^{(1)}_j &:=  \overline{S}(f_p)_{n,j} =
n^{- 1/(1+\alpha(k-H))} \sum_{i=r_jk}^n 
\left\{ \overline{\Phi}_j^{(1)} (L_i-L_{i-1}) - \E[\overline{\Phi}_j^{(1)} (L_i-L_{i-1})] \right\}, \nonumber
\\[1.5 ex]
\overline{S}(n)^{(2)}_j &:=  \overline{S}(\psi_{t_j})_{n} =
n^{- 1/(1+\alpha(k-H))} \sum_{i=r_jk}^n 
\left\{ \overline{\Phi}_j^{(2)} (L_i-L_{i-1}) - \E[\overline{\Phi}_j^{(2)} (L_i-L_{i-1})] \right\}, 
\end{align}
where $p \in (-1/2,1/2) \setminus \{0\}$ and the statistic $ \overline{S}(f_p)_{n,j}$ is defined 
as in \eqref{barSf} using the
parameters $r_j$ and $k$.  
As a consequence of \eqref{bill1}, \eqref{bill2} and \eqref{neg} 
it is now sufficient to show a weak limit theorem 
for the statistic $$(W(n,m)^{(1)}, W(n,m)^{(2)}, \overline{S}(n)^{(1)}, \overline{S}(n)^{(2)})$$ (resp. 
$(W(n,m, \epsilon)^{(1)}, W(n,m)^{(2)}, \overline{S}(n)^{(1)}, 
\overline{S}(n)^{(2)})$) when $p \in (0,1/2)$ and 
$p<\alpha/2$ (resp. $-p \in (0,1/2)$) as $n \to \infty$ and then $m\to \infty$. 

In order to prove this convergence we recall the results of \cite{RG} adapted to our setting.  Let 
$(Y_i^{(1)})_{i \geq 1}$ and $(Y_i^{(2)})_{i \geq 1}$ be i.i.d sequences of centred random variables 
of dimensions $d_1$ and $d_2$ respectively, which are not necessarily independent. 
Define the statistics
\[
Z_n^{(1)} = \frac{1}{\sqrt{n}} \sum_{i=1}^n Y_i^{(1)}, \qquad 
Z_n^{(2)} = n^{-1/\beta}\sum_{i=1}^n Y_i^{(2)} \qquad \text{with } \beta \in (1,2).
\]
Assume now that $Z_n^{(1)} \schw Z^{(1)}$ where $Z^{(1)}$ is a $d_1$-dimensional centred normal distribution  and assume that each coordinate $Y_{1,j}^{(2)}$, $1\leq j\leq d_2$, is in the domain of attraction of a $\beta$-stable random variables, i.e.
\[
\lim_{x \to +\infty} x^{\beta} \mathbb{P}(Y_{1,j}^{(2)} > x) = b_j^+ \qquad \text{and}
\qquad \lim_{x \to -\infty} |x|^{\beta} \mathbb{P}(Y_{1,j}^{(2)} < x) = b_j^-.
\]
Assume moreover that 
there exists a measure $\overline{\nu}$ such that for all sets $A\in \mathcal{B}(\R^{d_2})$ bounded away from $0$ with $\overline{\nu}(\partial A)=0$ it holds:
\[
\lim_{n\to \infty} n \mathbb{P}(n^{-1/\beta}Y_i^{(2)} \in A)= \overline{\nu}(A).
\]
Then we obtain the joint convergence 
\begin{align} \label{resnick}
\left(Z_n^{(1)}, Z_n^{(2)} \right) \schw \left(Z^{(1)}, Z^{(2)} \right),
\end{align}
where $Z^{(1)}$ and $Z^{(2)}$ are necessarily independent, and the law of $Z^{(2)}$ is determined by the
L\'evy measure $\overline{\nu}$. Indeed this result is a direct consequence of \cite[Theorems 3 and 4]{RG} and their direct extension from bivariate to $(d_1+d_2)$-dimensional setting. 
  
Next, we apply the weak convergence at \eqref{resnick} to our framework. Notice first that the statistics
$W(n,m)^{(1)}$, $W(n,m)^{(2)}$ and $W(n,m, \epsilon)^{(1)}$ are sums of $2m$-dependent random variables, but this setting can be reduced to sums of i.i.d random variables by the classical Bernstein's blocking technique. Hence, the theory of \cite{RG} also applies in this case.   

For the sake of brevity we apply the convergence at \eqref{resnick} only for the statistic 
$(W(n,m)^{(1)}, W(n,m)^{(2)}, \overline{S}(n)^{(1)}, \overline{S}(n)^{(2)})$. We set 
\[
Z_{n,m}^{(1)}= \left(W(n,m)^{(1)}, W(n,m)^{(2)} \right) \qquad \text{and} \qquad
Z_n^{(2)} = \left( \overline{S}(n)^{(1)}, \overline{S}(n)^{(2)} \right),
\] 
and define $\beta=(1+\alpha(k-H))$.
By the standard central limit theorem for sums of stationary $2m$-dependent random variables we deduce the convergence
\[
Z_{n,m}^{(1)} \schw Z_{m}^{(1)} \sim \mathcal N_{2d}(0, \Sigma_m) \qquad \text{as } n\to \infty,
\]
where the asymptotic covariance matrix $\Sigma_m$ is defined by 
\begin{align*}
\Sigma_m^{ij} &=   \sum_{l = -2m+1}^{2m-1} \text{cov}
\left( f_p\left( \Delta_{r_i k_i,k_i}^{r_i} X(m) \right), f_p\left( \Delta_{r_i k_i+l,k_j}^{r_j} X(m)\right)
\right), \qquad 1\leq i,j \leq d,\\
\Sigma_m^{ij} &=   \sum_{l = -2m+1}^{2m-1} \text{cov}
\left( f_p\left( \Delta_{r_i k_i,k_i}^{r_i} X(m) \right), \psi_{t_j} \left( \Delta_{r_i k_i+l,k_j} X(m)\right)
\right) , \qquad d+1\leq i+d,j \leq 2d,  \nonumber \\
\Sigma_m^{ij}&= \sum_{l = -2m+1}^{2m-1}   \text{cov}
\left( \psi_{t_i} \left( \Delta_{ k_i,k_i} X(m)\right), \psi_{t_j} \left( \Delta_{k_i+l,k_j} X(m)\right)
\right), \qquad d+1\leq i,j \leq 2d .  \nonumber
\end{align*}
In the next step we treat the statistic $Z_n^{(2)}$. Recalling the definition 
 at \eqref{lineS}, and the tail convergence of 
\eqref{cj1} and \eqref{cj2}, we conclude that the limits of   $\overline{S}(n)^{(1)}$ and $\overline{S}(n)^{(2)}$ must be independent since $b_j^-=0$ for $1\leq j \leq d$ and $b_j^+=0$ for 
$d+1\leq j \leq 2d$. Furthermore, \eqref{nul} readily implies the convergence
\[
Z_n^{(2)} \schw \left(S^{(1)}, S^{(2)} \right) \qquad \text{as } n\to \infty,
\] 
where the vector $\left(S^{(1)}, S^{(2)} \right)$ has been introduced in Theorem \ref{multivariateclt}.

Finally, we will prove that the covariance matrix $ \Sigma_m$ converges as $m \to \infty$. 
In the following we write $\|Y\|_{\mathbb{L}^2}$ for $\E[Y^2]^{1/2}$ for any square integrable random variable $Y$.  
For $m_1,m_2 \in \N$ and $1 \leq j \leq d$ observe the decomposition
\begin{align} \label{sigmaconv}
&| (\Sigma_{m_1}^{jj})^{1/2} - (\Sigma_{m_2}^{jj})^{1/2}| = \lim_{n \to \infty} 
|\|W(n,m_1)_j^{(1)}\|_{\mathbb{L}^2} - \|W(n,m_2)_j^{(1)}\|_{\mathbb{L}^2}| \\
& \leq \limsup_{n \to \infty} \left( \|W(n,m_1)_j^{(1)} - W(n)_j^{(1)}\|_{\mathbb{L}^2} 
+ \|W(n,m_2)_j^{(1)} - W(n)_j^{(1)}\|_{\mathbb{L}^2}  \right) \nonumber
\end{align}
and the latter converges to $0$ as $m_1, m_2 \to \infty$ due to \eqref{bill1}. Hence, 
$(\Sigma_{m}^{jj})_{m \geq 1}$ is a Cauchy sequence and thus it converges. Since $\text{var}(W(n)_j^{(1)}) \to \text{var}(W_j^{(1)})$ we must have that 
\[
\lim_{m \to \infty} \Sigma_{m}^{jj} = \text{var}(W_j^{(1)}). 
\]
The same argument applies to $\Sigma_{m}^{jj}$ for $d+1\leq j \leq 2d$ and also to covariances $\Sigma_{m}^{ij}$ due to polarisation identity. 

Summarising the results of Sections \ref{sec6.2.1}-\ref{sec6.2.2} we obtain the weak limit theorem
\begin{align*} 
\left(W(n)^{(1)}, W(n)^{(2)}, S(n)^{(1)}, S(n)^{(2)} \right) \schw 
\left(W^{(1)}, W^{(2)}, S^{(1)}, S^{(2)} \right),  
\end{align*}
for $p \in (0,1/2)$ and $p<\alpha/2$, as claimed in \eqref{mclt}. Similarly, for $-p \in (0,1/2)$ we have also obtained the convergence
\begin{align*} 
\left(W(n,\epsilon)^{(1)}, W(n)^{(2)}, S(n)^{(1)}, S(n)^{(2)} \right) \schw 
\left(W(\epsilon)^{(1)}, W^{(2)}, S^{(1)}, S^{(2)} \right),  
\end{align*}
for any $\epsilon>0$. Here the limit $(W(\epsilon)^{(1)}, W^{(2)}, S^{(1)}, S^{(2)})$ is defined
as in Theorem \ref{multivariateclt}, where the function $f_{-p}$ is replaced by $f_{-p}^{\epsilon}$. In 
order to prove the original theorem for  $-p \in (0,1/2)$ we need to let $\epsilon \to 0$, which is the subject of the next subsection.

\subsubsection{Letting $\epsilon \to 0$}  \label{sec6.2.3}

For simplicity we may assume that $d=1$ and $r_1=r, ~k_1=k$. In the first step we will show that 
\begin{align*}
\lim_{\epsilon \to 0} \limsup_{n \to \infty} \E\left[\left(W(n, \epsilon)^{(1)} - 
W(n)^{(1)} \right)^2\right]=0.
\end{align*} 
We define the function $\bar{f}_{-p}^{\epsilon}= f_{-p} - f_{-p}^{\epsilon}$, $p \in (0,1/2)$. Notice that 
$\text{supp}(\bar{f}_{-p}^{\epsilon})=[-\epsilon, \epsilon]$ and 
$|\mathfrak{F}^{-1}(\bar{f}_{-p}^{\epsilon})| \leq C \epsilon$ for all $x \in \R$. Applying the formula \eqref{covFourier} we conclude that 
\begin{align*}
\left|\text{cov}\left( 
\bar{f}_{-p}^{\epsilon}\left( \Delta_{i,k}^{r} X \right) ,
\bar{f}_{-p}^{\epsilon}\left( \Delta_{i+l,k}^{r} X \right)   \right) \right | \leq C \epsilon^2 \int_{\R^2}
|U_{h_{k,r},h_{k,r}(\cdot+l)}(x,y)| dxdy.
\end{align*}
In \cite[Lemma 3.4]{PTA} it has been proved that the inequality $\int_{\R^2}
|U_{h_{k,r},h_{k,r}(\cdot+l)}(x,y)| dxdy \leq C \rho_l$ holds (in fact, the proof is the same as for Lemma \ref{lem4}). Hence, we conclude by Lemma \ref{lem2} and the condition $k> H+1/\alpha$ 
\begin{align} \label{covest} 
\left|\text{cov}\left( 
\bar{f}_{-p}^{\epsilon}\left( \Delta_{i,k}^{r} X \right) ,
\bar{f}_{-p}^{\epsilon}\left( \Delta_{i+l,k}^{r} X \right)   \right) \right | \leq C \epsilon^2
l^{(\alpha(H-k) -1)/2} 
\end{align}
Since $(\alpha(H-k) -1)/2<-1$ when $k> H+1/\alpha$ we readily deduce the estimate
\[
\limsup_{n \to \infty} \E\left[\left(W(n, \epsilon)^{(1)} - 
W(n)^{(1)} \right)^2\right] \leq C \epsilon^2
\]
and the first statement follows. 

Now, we are left to proving weak convergence for the vector $(W(\epsilon)^{(1)}, W^{(2)})$ as $\epsilon \to 0$. This random variable is bivariate normal with mean $0$. Hence, it suffices to show that the covariance matrix converges. But this follows by setting $\epsilon=1/N$ and applying a Cauchy sequence argument as presented in \eqref{sigmaconv}. Thus, the proof of Theorem \ref{multivariateclt} is complete. \qed

\subsection{Proof of Theorem \ref{th3}} 

Part (i) of Theorem \ref{th3} follows from Theorem \ref{multivariateclt} applied to the setting $d=2$,
$p \in (0,1/2)$, $k_j=k \geq 2$ (and hence $k>H+1/\alpha$), and the classical delta method. In fact, we only use the central limit theorem part of   Theorem \ref{multivariateclt}.

Part (ii) of Theorem \ref{th3} is slightly more involved. We start with the identity ($t>0$)
\begin{align*}
\varphi_{\text{high}}(t; \widehat{H},k)_n
- \varphi_{\text{high}}(t; H,k)_n 
= \frac{1}{n}  \sum_{i=k}^n \left\{\cos( tn^{\widehat{H}} \Delta_{i,k}^{n} X ) - 
\cos( tn^H \Delta_{i,k}^{n} X ) \right \},
\end{align*}
where we use the short notation $\widehat{H} = \widehat{H}_{\text{high}} (p,k)_n$. Setting 
$M_n = n^{\widehat{H} - H}$ and using the inequality 
$|\cos(y)-\cos(x)+(y-x) \sin(x)| \leq C|y-x|^{\alpha'}$ for some $\alpha' \in (1, \alpha)$,
we conclude that 
\begin{align*}
&\varphi_{\text{high}}(t; \widehat{H},k)_n
- \varphi_{\text{high}}(t; H,k)_n =  -\frac{t(M_n-1)}{n}  \sum_{i=k}^n (n^H \Delta_{i,k}^{n} X)
\sin( tn^H \Delta_{i,k}^{n} X ) +R_n, \\[1.5 ex] 
& \text{where } 
|R_n| \leq C \frac{|M_n-1|^{\alpha'}}{n} \sum_{i=k}^n |n^H \Delta_{i,k}^{n} X|^{\alpha'}.
\end{align*}
We observe that $\sqrt{n}(\widehat{H} - H)$ is asymptotically normal, which follows by a delta method from  Theorem \ref{multivariateclt} (take $d=2$ and use the convergence in distribution $W(n)^{(1)} \schw W^{(1)} $). By the mean value theorem we obtain that 
\[
\sqrt{n} (\log n)^{-1} (M_n-1) = \sqrt{n}(\widehat{H} - H) + o_{\mathbb{P}}(1). 
\]
Hence, recalling that $\alpha' \in (1, \alpha)$,  we deduce by Birkhoff's ergodic theorem
\begin{align} \label{asyexp}
\sqrt{n} (\log n)^{-1}
\left(\varphi_{\text{high}}(t; \widehat{H},k)_n - \varphi_{\text{high}}(t; H,k)_n \right) = 
\sqrt{n}(\widehat{H} - H) t \varphi'(t;k)+ o_{\mathbb{P}}(1),
\end{align}
where we used the identity $t \varphi'(t;k)= - \E[n^H \Delta_{i,k}^{n} X
\sin( tn^H \Delta_{i,k}^{n} X )]$. Finally, we note that 
\[
\varphi_{\text{high}}(t; H,k)_n - \varphi(t;k) = O_{\mathbb{P}} (n^{-1/2}),
\]
which follows from Theorem \ref{multivariateclt}. Hence, observing the identities \eqref{G} and 
\eqref{alphasigma}, we obtain the statement of Theorem \ref{th3}(ii) by applying the delta method to 
Theorem \ref{multivariateclt}. \qed

\subsection{Proof of Theorem \ref{th4}} 
First of all, we note that $\delta:= \alpha^{-1} - \lfloor \alpha^{-1} \rfloor \in (0,1)$ since $ \alpha^{-1} 
\not \in \N$. Setting $\delta':= \min\{\delta, 1- \delta\}/2>0$ we conclude 
that 
\begin{align*}
\mathbb{P} \left(\widehat{k}_{\text{low}} \not = 2+  \lfloor \alpha^{-1} \rfloor \right)
\leq \mathbb{P} \left(| \widehat{\alpha}_{\text{low}}^{0} (t_1,t_2)_n^{-1} - \alpha^{-1}|>\delta' \right)
\to 0,
\end{align*}
because $\widehat{\alpha}_{\text{low}}^{0} (t_1,t_2)_n \toop \alpha$ and $\alpha>0$. This implies
the convergence $\widehat{k}_{\text{low}} \toas 2+  \lfloor \alpha^{-1} \rfloor$. Thus, it suffices to prove the asymptotic results of Theorem \ref{th4} when $\widehat{k}_{\text{low}}$  is replaced by $2+  \lfloor \alpha^{-1} \rfloor$. Now, notice that $k= 2+  \lfloor \alpha^{-1} \rfloor$ automatically satisfies the condition $k>H+1/\alpha$
since $H \in (0,1)$. This guarantees that the statistic $(W(n)^{(1)}, W(n)^{(2)})$ defined at \eqref{multstat} is in the domain of attraction of the central limit theorem. Hence,  
Theorem \ref{th4}(i) follows directly by the delta method from Theorem \ref{multivariateclt} (cf. proof
of Theorem \ref{th3}(i)).  \qed

\subsection{Proof of Proposition \ref{prop1}}
Proposition  \ref{prop1} is shown by exactly the same arguments as Theorem \ref{th3}. \qed

\subsection{Proof of Theorem \ref{th5}}
Recall that $\alpha^{-1} \in \N$. Hence, we have 
\begin{align*}
\mathbb{P} \left(\widehat{k}_{\text{low}} \not  \in \{1+   \alpha^{-1}, 2+   \alpha^{-1}  \} \right)
\leq \mathbb{P} \left(| \widehat{\alpha}_{\text{low}}^{0} (t_1,t_2)_n^{-1} - \alpha^{-1}|>1 \right)
\to 0,
\end{align*}
because $\widehat{\alpha}_{\text{low}}^{0} (t_1,t_2)_n \toop \alpha$ and $\alpha>0$.  
Note that $k \in \{1+   \alpha^{-1}, 2+   \alpha^{-1} \}$ 
satisfies the condition $k>H +  \alpha^{-1}$, which guarantees the validity of a central limit theorem
for the statistic $(W(n)^{(1)}, W(n)^{(2)})$ defined at \eqref{multstat}. 

We introduce the notation
\[
T_{\text{low}}(\widehat{k}_{\text{\rm low}}, n):=\sqrt{n} \left( 
\begin{array} {c}
\widetilde{\sigma}_{\text{\rm low}} (\widehat{k}_{\text{\rm low}},t_1,t_2)_n - \sigma \\
\widetilde{\alpha}_{\text{\rm low}} (\widehat{k}_{\text{\rm low}},t_1,t_2)_n - \alpha \\
\widehat{H}_{\text{\rm low}} (-p,\widehat{k}_{\text{\rm low}})_n - H 
\end{array}
\right)
\]
and 
\[
a_n := 
\begin{cases}
\sqrt{n}: & \text{if } H<1-\alpha^{-1} \\
n^{1 - 1/(1+\alpha(1-H))}: & \text{if } H>1-\alpha^{-1}
\end{cases}
\]
We set $U_n=a_n(\widehat{\alpha}_{\text{\rm low}}^{0} (t_1,t_2)_n - \alpha)$, $A=(a_1,b_1) \times (a_2,b_2) \times (a_3,b_3)$ and observe the decomposition
\begin{align*}
\mathbb{P}(T_{\text{low}}(\widehat{k}_{\text{\rm low}}, n) \in A) &= 
\mathbb{P} \left(T_{\text{low}}(1+   \alpha^{-1}, n) \in A, ~
\widehat{\alpha}_{\text{low}}^{0} (t_1,t_2)_n^{-1} - \alpha^{-1}<0 \right) \\[1.5 ex]
&+ \mathbb{P} \left(T_{\text{low}}(2+   \alpha^{-1}, n) \in A, ~
\widehat{\alpha}_{\text{low}}^{0} (t_1,t_2)_n^{-1} - \alpha^{-1} \geq 0 \right) + o(1) \\[1.5 ex]
&=\mathbb{P} \left(T_{\text{low}}(1+   \alpha^{-1}, n) \in A, ~
U_n>0 \right) \\[1.5 ex]
&+ \mathbb{P} \left(T_{\text{low}}(2+   \alpha^{-1}, n) \in A, ~
U_n \leq 0 \right) + o(1).
\end{align*}
Applying Theorem \ref{multivariateclt} and Proposition \ref{prop1}, and using the same arguments 
as in the proof of Theorem \ref{th3}, we thus conclude the convergence
\begin{align*}
\lim_{n \to \infty} \mathbb{P}(T_{\text{low}}(\widehat{k}_{\text{\rm low}}, n) \in A) &= 
\mathbb{P} \left(B_{\text{\rm low}}^{\text{\rm nor}} (-p, 1+\alpha^{-1}) 
\in A,~ B_{\text{\rm low}}\left(-p,1 \right)_2>0  \right) \\[1.5 ex]
&+ \mathbb{P} \left(B_{\text{\rm low}}^{\text{\rm nor}} (-p, 2+\alpha^{-1}) 
\in A,~ B_{\text{\rm low}}\left(-p,1 \right)_2<0  \right),
\end{align*}
where $B_{\text{\rm low}}(-p,1 )= B_{\text{\rm low}}^{\text{nor}}(-p,1 )$ if $H<1-\alpha^{-1}$, and $B_{\text{\rm low}}(-p,1 )= B_{\text{\rm low}}^{\text{sta}}(-p,1 )$ if $H>1-\alpha^{-1}$. 
Hence, we immediately obtain the assertion of Theorem \ref{th5}. \qed

\subsection{Proof of Theorem \ref{th6}}
As in the proof of Theorem \ref{th4} we conclude that $\widehat{k}_{\text{high}} \toas 2 + \lfloor \alpha^{-1} \rfloor$. On the other hand, similarly to \eqref{asyexp}, we obtain the asymptotic expansion
\begin{align*}
\sqrt{n} (\log n)^{-1}
\left(V_{\text{high}}(f_{-p}, \widehat{H}_{\text{high}} (-p)_n)_n - V_{\text{high}}(f_{-p})_n \right) = 
-\sqrt{n}(\widehat{H} - H)p  m_{p,k}+ o_{\mathbb{P}}(1),
\end{align*}
for any $p \in (0,1/2)$. Hence, the assertion of Theorem \ref{th6} follows the delta method and Theorem \ref{multivariateclt} (cf. the proof of Theorem \ref{th3}). \qed

\subsection{Proof of Theorem \ref{th7}}
The results of Theorem \ref{th7} follow by the same methods as presented in the proof of Theorem \ref{th5}. \qed

\end{document}